\documentclass{amsart}
\usepackage{amsmath}
\usepackage{amsfonts}
\usepackage{amsthm, upref}
\usepackage{graphicx}
\usepackage{enumerate}
\usepackage[usenames, dvipsnames]{color}
\usepackage{url}
\usepackage{tikz}
\usepackage{array}
\usetikzlibrary{decorations.pathreplacing}
\input xy
\xyoption{all}

\newcommand{\comment}[1]{}
\newcommand{\eq}{\begin{equation}}
\newcommand{\en}{\end{equation}}

\newcommand{\Hess}{{\mathrm{Hess}} \nobreak\hspace{.16667em plus .08333em} }

\newcommand{\osc}{{\mathrm{osc}}}

\begin{document}

\theoremstyle{plain}
\newtheorem{theorem}{Theorem}[section]
\newtheorem{lemma}[theorem]{Lemma}
\newtheorem{proposition}[theorem]{Proposition}
\newtheorem{corollary}[theorem]{Corollary}

\theoremstyle{definition}
\newtheorem{definition}[theorem]{Definition}
\newtheorem{asmp}[theorem]{Assumption}
\newtheorem{notn}[theorem]{Notation}
\newtheorem{problem}[theorem]{Problem}

\theoremstyle{remark}
\newtheorem{remark}[theorem]{Remark}
\newtheorem{example}[theorem]{Example}
\newtheorem{clm}[theorem]{Claim}

\numberwithin{equation}{section}

\title[Optimization of relative arbitrage]{Optimization of relative arbitrage}

\author[T.-K. L. Wong]{Ting-Kam Leonard Wong}
\address{Department of Mathematics\\ University of Washington\\ Seattle, WA 98195}
\email{wongting@uw.edu}

\keywords{Stochastic portfolio theory, relative arbitrage, functionally generated portfolio, shape-constrained optimization, portfolio management}


\thanks{The author would like to thank Soumik Pal for his constant guidance and support during the preparation of the paper, Tatiana Toro for helpful discussions about the proof of Theorem \ref{thm:main}, and Jiashan Wang for help with numerical optimization. He also thanks the anonymous referee who spotted an error in the original definition of the support condition and suggested the current definition. The referee's valuable comments improved greatly the presentation of the paper.}

\date{\today}
\begin{abstract}
In stochastic portfolio theory, a relative arbitrage is an equity portfolio which is guaranteed to outperform a benchmark portfolio over a finite horizon. When the market is diverse and sufficiently volatile, and the benchmark is the market or a buy-and-hold portfolio, functionally generated portfolios introduced by Fernholz provide a systematic way of constructing relative arbitrages. In this paper we show that if the market portfolio is replaced by the equal or entropy weighted portfolio among many others, no relative arbitrages can be constructed under the same conditions using functionally generated portfolios. We also introduce and study a shaped-constrained optimization problem for functionally generated portfolios in the spirit of maximum likelihood estimation of a log-concave density.
\end{abstract}

\maketitle

\section{Introduction} \label{sec:intro}
A major aim of stochastic portfolio theory (see \cite{F02} and \cite{FKSurvey} for an introduction) is to uncover relative arbitrage opportunities under minimal and realistic assumptions on the behavior of equity markets. Consider an equity market with $n$ stocks. The market weight $\mu_i(t)$ of stock $i$ at time $t$ is the market capitalization of stock $i$ divided by the total capitalization of the market. The vector $\mu(t) = (\mu_1(t), ..., \mu_n(t))$ of market weights takes value in the open unit simplex $\Delta^{(n)}$ in ${\Bbb R}^n$ defined by
\[
\Delta^{(n)} = \left\{p = (p_1, ..., p_n): p_i > 0, \sum_{i = 1}^n p_i = 1 \right\}.
\]
For each $t$, the portfolio manager chooses a portfolio vector in $\overline{\Delta^{(n)}}$, where $\overline{\Delta^{(n)}}$ is the closure of $\Delta^{(n)}$. Its components represent the proportions of the current capital invested in each of the stocks. We assume that the portfolio is self-financing and all-long, so short selling is prohibited. The {\it market portfolio} is the portfolio whose portfolio weight at time $t$ is $\mu(t)$. It is a buy-and-hold portfolio since no trading is required after its installment. In general trading is required to maintain the target portfolio weights. A {\it relative arbitrage} with respect to the market portfolio over the horizon $[0, t_0]$ is a portfolio which is guaranteed to outperform the market portfolio at time $t_0$.

We say that the market is {\it diverse} if $\max_{1 \leq i \leq n} \mu_i(t) \leq 1 - \delta$ for some $\delta > 0$ and for all $t$, or more generally if $\mu(t) \in K$ for all $t$ where $K$ is an appropriate subset of $\Delta^{(n)}$. The market is {\it sufficiently volatile} if the cumulated volatility of the market weight grows to infinity in a suitable sense. Assuming the market is diverse and sufficiently volatile, it is possible to construct relative arbitrages with respect to the market portfolio over a finite (but possibly long) horizon; see for example \cite{FKSurvey}, \cite{PW13} and the references therein. In fact, it is possible to construct relative arbitrages whose portfolio weights are deterministic functions of the current market weights. In particular, forecasts of expected returns and the covariance matrix are not required. These portfolios, first introduced in \cite{F99}, are said to be {\it functionally generated}. This is in accordance with the observation by many academics and practitioners (see for example \cite{FGH98}, \cite{DGU09} and \cite{PPA12}) that simple portfolio rules such as the equal and diversity weighted portfolios often beat the market over long periods. Intuitively, these portfolios work by capturing market volatility while controlling the maximum drawdown relative to the market portfolio  (the main ideas will be reviewed in Section \ref{sec:Fernholz}). In \cite{PW14} we proved the converse: a relative arbitrage portfolio (more precisely a {\it pseudo-arbitrage}, see below) depending deterministically on the current market weights must be functionally generated. We emphasize that a relative arbitrage portfolio is supposed to perform well {\it for all} possible realizations of the market weight process satisfying diversity and sufficient volatility. This observation is utilized in \cite{PW14} to allow a geometric, pathwise approach without assuming any stochastic model for the market weight process.

\medskip

There are two important questions that are not fully addressed by the existing theory. First, what happens if the market portfolio is replaced by another benchmark? In \cite{S12} the concept of functionally generated portfolio and the key `master equation' (see Lemma \ref{lem:FernholzDecomp} below) are extended to arbitrary benchmark portfolios. However, little is known about the existence of relative arbitrage under general conditions such as diversity and sufficient volatiltiy. For example, can we beat the equal-weighted portfolio by a functionally generated portfolio in a diverse and sufficiently volatile market, in the same way a functionally generated portfolio beats the market portfolio? More generally, does there exist an infinite hierarchy of relative arbitrages? Is there a `maximal portfolio' which cannot be beaten if only diversity and sufficient volatility are assumed?

Second, is there a sound and applicable optimization theory for relative arbitrages and functionally generated portfolios? Such a theory is clearly of great interest and this problem was raised already in Fernholz's monograph \cite[Problems 3.1.7-8]{F02}. To the best of our knowledge limited progress has been made to optimization of functionally generated portfolios. See \cite{PW13} for an attempt in the two asset case and \cite{PW14} for an approach using optimal transport. On the theoretical side, if the market model is given it is sometimes possible to characterize the highest return relative to the market or a given trading strategy that can be achieved using nonanticipative investment rules over a given time horizon. See \cite{FK10} for the case of Markovian markets, \cite{FK11} for a more general setting which allows uncertainty regarding the drift and diffusion coefficients, and \cite{R11} which  expresses optimal relative arbitrages with respect to Markovian trading strategies as delta hedges. For optimization of functionally generated portfolios, a major difficulty is that the class of functionally generated portfolios is a function space and the optimization has to be nonparametric. Ideally, given historical data or a stochastic model of the market weight process, we want to pick an optimal functionally generated portfolio subject to appropriate constraints.

\medskip

The present paper attempts to give answers to both questions. In this paper we interpret relative arbitrage by what we call {\it pseudo-arbitrage} in \cite{PW13}. This is a model-free concept and the precise definition will be stated in Section \ref{sec:prelim}. We only consider portfolios which are deterministic functions of the current market weight, so a portfolio is represented by a map $\pi: \Delta^{(n)} \rightarrow \overline{\Delta^{(n)}}$. This means that the portfolio manager always chooses $\pi(p)$ when the current market weight is $\mu(t) = p \in \Delta^{(n)}$, regardless of previous price movements. Following \cite{PW14}, in this paper time is discrete and the market is represented by a deterministic sequence $\{\mu(t)\}_{t = 0}^{\infty}$ with state space $\Delta^{(n)}$. No underlying probability space is required.

Regarding the hierarchy of relative arbitrages, we first define a partial order among portfolios. If $\pi$ is a portfolio, we let $V_{\pi}(t)$ be the ratio of the growth of $\$1$ invested in the portfolio to that of $\$1$ invested in the market portfolio, and call it the {\it relative value process}. Let $\pi, \tau: \Delta^{(n)} \rightarrow \overline{\Delta^{(n)}}$ be portfolios. We say that $\tau$ {\it dominates $\pi$ on compacts} (written $\tau \succeq \pi$) if for any compact set $K \subset \Delta^{(n)}$, there exists a constant $\varepsilon = \varepsilon(\pi, \tau, K) > 0$ such that $V_{\tau}(t) / V_{\pi}(t) \geq \varepsilon$ for all $t$ and for all sequences of market weight $\{\mu(t)\}_{t = 0}^{\infty}$ taking values in $K$. That is, the maximum drawdown of $\tau$ relative to $\pi$ is uniformly bounded regardless of the market movement in that region. Since the compact set $K$ is arbitrary, this is a global property and defines a partial order among portfolios. If ${\mathcal S}$ is a family of portfolios, we say that a portfolio $\pi \in {\mathcal S}$ is {\it maximal} in ${\mathcal S}$ if there is no portfolio, other than $\pi$ itself, which dominates $\pi$ on compacts, i.e., $\tau \in {\mathcal S}$ and $\tau \succeq \pi$ implies $\tau = \pi$. In Section \ref{subsec:pseudo} we will relate this partial order with pseudo-arbitrage. Here we note that if $\tau$ is a relative or pseudo-arbitrage with respect to $\pi$ in all diverse and sufficiently volatile markets, it is necessarily the case that $\tau$ dominates $\pi$ on compacts.

Let $\pi: \Delta^{(n)} \rightarrow \overline{\Delta^{(n)}}$ be a portfolio and $\Phi$ be a positive concave function on $\Delta^{(n)}$. We say that $\pi$ is {\it functionally generated} with generating function $\Phi$ if for all $p \in \Delta^{(n)}$,  the vector of coordinatewise ratios $\pi(p)/p$ defines a supergradient of the concave function $\log \Phi$ at $p$ (see Definition \ref{def:fgp} below for the rigorous definition). If $\Phi$ is $C^2$ (twice continuously differentiable), then $\pi$ is necessarily given by
\begin{equation} \label{eqn:fgweight}
\pi_i(p) = p_i \left(1 + D_{e(i) - p} \log \Phi(p) \right), \quad i = 1, ..., n, \quad p \in \Delta^{(n)}.
\end{equation}
Here $D_{e(i) - p}$ is the directional derivative in the direction $e(i) - p$, where $e(i)$ is the vertex of $\overline{\Delta^{(n)}}$ in the $i$-th direction. For example, the market portfolio is generated by the constant function $\Phi(p) \equiv 1$. We say that $\Phi$ is a {\it measure of diversity} if it is $C^2$ and symmetric (invariant under permutations of the coordinates). Let $\overline{e} = \left(\frac{1}{n}, ..., \frac{1}{n}\right)$ be the barycenter of $\Delta^{(n)}$. For portfolios that are continuously differentiable, the following theorem gives a sufficient condition for a portfolio to be maximal.

\begin{theorem} \label{thm:main}
Let $\pi$ be a portfolio generated by a measure of diversity $\Phi$. If
\begin{equation} \label{eqn:integralcondition}
\int_0^1 \frac{1}{\Phi(te(1) + (1 - t)\overline{e})^2} \mathrm{d}t = \infty,
\end{equation}
then $\pi$ is maximal in the class of portfolios $\tau: \Delta^{(n)} \rightarrow \overline{\Delta^{(n)}}$ that are continuously differentiable.
\end{theorem}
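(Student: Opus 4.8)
The plan is to translate the drawdown hypothesis $\tau\succeq\pi$ into pointwise differential conditions on a generating function for $\tau$, and then to extract constancy from a degenerate elliptic inequality in which the divergence \eqref{eqn:integralcondition} is the decisive input. Write the one-step relative return so that $\log\left(V_\tau(t)/V_\pi(t)\right)=\sum_{s=0}^{t-1}g(\mu(s),\mu(s+1))$, where $g(p,q)=\log\frac{\sum_i\tau_i(p)q_i/p_i}{\sum_i\pi_i(p)q_i/p_i}$, and recall from the master equation (Lemma \ref{lem:FernholzDecomp}) that $\sum_i\pi_i(p)q_i/p_i=\left(\Phi(p)+\nabla\Phi(p)\cdot(q-p)\right)/\Phi(p)$. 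The key reduction is that $\tau\succeq\pi$ forbids net accumulation of $g$ along closed loops: traversing a small loop with a fixed but sufficiently fine partition, the second-order contributions are negligible and the per-loop total is governed by the circulation $\oint h\cdot \mathrm{d}\mu$ of the field $h_i(p)=(\tau_i(p)-\pi_i(p))/p_i$, so a nonzero circulation would drive $V_\tau/V_\pi\to0$ on a compact set. Hence $h$ is curl-free on the tangent space of $\Delta^{(n)}$, so $h=\nabla F$ for some $F\in C^2$, and a short computation shows this says exactly that $\tau$ is functionally generated by $\Psi:=e^{F}\Phi$.

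Next I would extract a second-order condition. Applying the same oscillation idea to a single edge $p\rightleftharpoons q$ traversed back and forth, domination forces the round-trip drift of $\tau$ to dominate that of $\pi$ for every pair $(p,q)$; letting $q\to p$ gives the infinitesimal inequality $\delta^{\top}\Hess\Psi(p)\,\delta/\Psi(p)\le\delta^{\top}\Hess\Phi(p)\,\delta/\Phi(p)$ for all tangent $\delta$. Writing $w:=\Psi/\Phi=e^{F}$ and expanding $\Hess\Psi$, this is equivalent to the matrix $\Hess F+\nabla F\nabla F^{\top}+\Phi^{-1}(\nabla F\nabla\Phi^{\top}+\nabla\Phi\nabla F^{\top})$ being negative semidefinite on tangent vectors; taking the tangential trace turns it into the divergence-form inequality $\mathrm{div}\!\left(\Phi^{2}\nabla w\right)\le0$. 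Thus $w$ is positive and superharmonic for the weighted Laplacian with weight $\Phi^{2}$, and restricting to the line $\gamma(t)=t\,e(1)+(1-t)\overline e$ with $\phi(t)=\Phi(\gamma(t))$ yields the ordinary differential inequality $\bigl(\phi^{2}w'\bigr)'\le0$ along $\gamma$.

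This is where \eqref{eqn:integralcondition} enters. If $A:=\phi^{2}w'$ were negative at some $t_1$, then, $A$ being nonincreasing, $w(t)-w(t_1)\le A(t_1)\int_{t_1}^{t}\phi^{-2}\,\mathrm{d}s\to-\infty$ as $t\to1$ by the divergence of the integral, contradicting $w>0$; hence $w'\ge0$ along $\gamma$. By symmetry of $\Phi$ the same holds along $\overline e\to e(j)$ for every vertex, so all directional derivatives $D_{e(j)-\overline e}w(\overline e)$ are nonnegative; since $\sum_j(e(j)-\overline e)=0$ they also sum to zero, forcing each to vanish. Feeding $A(0)=0$ back into $(\phi^{2}w')'\le0$ together with $A\ge0$ gives $A\equiv0$, so $w$ is constant along every segment from $\overline e$ to a vertex.

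The remaining, and main, difficulty is to upgrade this to global constancy of $w$ on $\Delta^{(n)}$. Here I would invoke a maximum principle for the degenerate operator $\mathrm{div}(\Phi^{2}\nabla\,\cdot\,)$: the divergence in \eqref{eqn:integralcondition} is precisely the statement that each vertex is inaccessible (of zero capacity) for this weighted operator, so a positive superharmonic $w$ that is monotone toward every vertex and pinned to its symmetric value at $\overline e$ cannot have an interior minimum below that value without being constant. This forces $w\equiv\mathrm{const}$, whence $\Psi$ is a constant multiple of $\Phi$, $\log\Psi$ and $\log\Phi$ share supergradients, and $\tau=\pi$, giving maximality. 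I expect the delicate points to be the rigorous passage from the pathwise drawdown bound to the pointwise inequalities — controlling the error terms uniformly under refinement and checking the regularity of $\Psi$ — and, above all, the global Liouville/maximum-principle step for the degenerate operator, for which the inaccessibility of the vertices encoded by \eqref{eqn:integralcondition} is essential; the market portfolio ($\Phi\equiv1$, integral finite, not maximal) shows the condition cannot be dropped.
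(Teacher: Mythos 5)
Your reduction to a generated pair $(\tau,\Psi)$ with $\Psi=w\Phi$, the second-order inequality $\Hess\Psi/\Psi\le\Hess\Phi/\Phi$ on tangent vectors, and its one-dimensional consequence $(\phi^{2}w')'\le 0$ along segments are all sound and run parallel to the paper's route (Theorem \ref{prop:MCM} and Lemma \ref{lem:drift}). Your observation that $\sum_{j}(e(j)-\overline e)=0$ forces each $D_{e(j)-\overline e}w(\overline e)$ to vanish is a genuine simplification: it bypasses the paper's symmetrization and desymmetrization steps and directly yields that $w$ is constant on each ray $[\overline e,e(j))$, with the divergence \eqref{eqn:integralcondition} used exactly where the paper uses it.

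The gap is the final globalization. The ``maximum principle for $\mathrm{div}(\Phi^{2}\nabla\,\cdot\,)$'' you invoke is not established, and as stated it cannot work: passing to the tangential trace discards most of the matrix inequality, a superharmonic function obeys a minimum principle on compact subdomains whose boundary contribution must then be controlled on \emph{all} of $\partial\Delta^{(n)}$, and condition \eqref{eqn:integralcondition} says nothing about the open faces. Indeed, for the geometric mean with $n\ge 3$ the analogous integral along a segment from $\overline e$ to an interior point of a facet is \emph{finite}, so ``inaccessibility'' fails off the vertices and zero capacity of the vertices alone cannot pin down $w$. The paper closes this step one-dimensionally, with tools you already have: since $\nabla w(\overline e)=0$ on the tangent space, $(\phi^{2}w')'\le 0$ along any segment $[\overline e,p]$ gives $w\le w(\overline e)=1$ everywhere; then for $p,q$ with $w(p)=w(q)=1$, reparameterizing $[p,q]$ by $y=\int\phi^{-2}$ makes $w$ concave in $y$, bounded above by $1$, and equal to $1$ at both endpoints, hence identically $1$ on $[p,q]$ (Lemma \ref{lem:relativeconcavity}). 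The set $\{w=1\}$ is therefore convex and contains the rays $[\overline e,e(j))$, so it is all of $\Delta^{(n)}$. Replacing your elliptic step by this convexity argument completes your proof.
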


This sufficient condition is satisfied by the equal and entropy weighted portfolios (see Table \ref{tab:benchmark} in Section \ref{sec:benchmark} for the definitions) among many others. For the market portfolio the generating function is constant and so the integral in \eqref{eqn:integralcondition} converges. In Section \ref{sec:benchmark} we will show if $\pi$ is functionally generated and $\tau$ dominates $\pi$ on compacts, then $\tau$ must be functionally generated. Thus we may rephrase Theorem \ref{thm:main} by saying that if \eqref{eqn:integralcondition} holds then $\pi$ is maximal in the family of functionally generated portfolios with $C^2$ generating functions. A consequence of Theorem \ref{thm:main} is the following.

\begin{corollary} \label{cor:main}
Under the setting of Theorem \ref{thm:main}, suppose $\tau$ is a $C^1$ portfolio not equal to $\pi$. Then there is a compact set $K \subset \Delta^{(n)}$ and a market weight sequence $\{\mu(t)\}_{t \geq 0}$ taking values in $K$, such that the portfolio value of $\tau$ relative to $\pi$ tends to zero as $t$ tends to infinity.
\end{corollary}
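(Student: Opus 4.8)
The plan is to deduce Corollary \ref{cor:main} from Theorem \ref{thm:main} by unwinding the definition of maximality and then promoting a single ``cheap'' finite path to a genuinely divergent infinite market sequence through a gluing argument. First, since $\tau$ is a $C^1$ portfolio with $\tau \neq \pi$, Theorem \ref{thm:main} asserts that $\pi$ is maximal in the $C^1$ class, so $\tau$ cannot dominate $\pi$ on compacts; that is, $\tau \not\succeq \pi$. Negating the definition of $\succeq$ yields a \emph{single} compact set $K \subset \Delta^{(n)}$ with the property that the infimum of $V_\tau(t)/V_\pi(t)$ over all $t \geq 0$ and all market sequences taking values in $K$ equals zero. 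Equivalently, for every $\varepsilon > 0$ there is a finite path in $K$ along which the relative value $V_\tau/V_\pi$ drops below $\varepsilon$. This $K$ will be the compact set in the statement.

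The engine of the argument is that $V_\tau/V_\pi$ is multiplicative along a path and depends only on the current market weight. Writing $f(p,q) = \frac{\sum_i \tau_i(p) q_i / p_i}{\sum_i \pi_i(p) q_i / p_i}$ for the one-step relative return as the market moves from $p$ to $q$, one has $V_\tau(t)/V_\pi(t) = \prod_{s=0}^{t-1} f(\mu(s), \mu(s+1))$ after normalizing $V_\tau(0) = V_\pi(0) = 1$, and concatenating paths simply multiplies these factors. Because $\tau$ and $\pi$ are portfolios (nonnegative entries summing to $1$) and $K$ is a compact subset of the open simplex, I would record uniform bounds: with $a_0 = \min_{p,q \in K}\min_i q_i/p_i > 0$ and $b_0 = \max_{p,q \in K}\max_i q_i/p_i < \infty$, both numerator and denominator of $f$ lie in $[a_0, b_0]$, so $a := a_0/b_0 \leq f(p,q) \leq b_0/a_0 =: b$ for all $p,q \in K$.

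With these bounds in hand the idea is to close up one cheap path into a strictly losing loop and iterate it forever. Taking $\varepsilon = 1/b$, choose a finite path $\gamma$ in $K$ from $p^*$ to $q^*$ whose terminal relative value satisfies $R(\gamma) = V_\tau/V_\pi < 1/b$. Appending the single step $q^* \to p^*$, which stays in $K$ and contributes a factor at most $b$, produces a loop $\gamma^*$ based at $p^*$ with $\rho := R(\gamma^*) \leq b\,R(\gamma) < 1$. Now define the infinite market sequence $\{\mu(t)\}$ by concatenating countably many copies of $\gamma^*$; all its values lie in $K$. At the end of the $m$-th loop $V_\tau/V_\pi = \rho^m \to 0$, and at an intermediate time the ratio equals $\rho^m$ times a partial product over at most one loop, hence is bounded by $\rho^m b^{L}$ with $L$ the fixed length of $\gamma^*$; since $\rho < 1$ this tends to $0$ as $t \to \infty$, which is exactly the assertion.

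The substantive step is the last one: Theorem \ref{thm:main} only furnishes finite paths along which $V_\tau/V_\pi$ is small, with no control over where they begin or end, whereas the corollary demands a single infinite sequence on which the ratio actually converges to zero. The gluing argument is what bridges this gap, and it succeeds precisely because of the two structural features above---the multiplicativity (Markov property) of the relative value process and the uniform two-sided bound on one-step factors over $K \times K$---which let a single strictly losing loop be repeated indefinitely while the intermediate excursions stay controlled. The only point needing care is that the denominator of $f$ never vanishes, so that $a$ and $b$ are finite and positive; this is immediate from $\sum_i \pi_i(p) = 1$ and $\min_i q_i/p_i > 0$ on $K$, and notably does not use the differentiability of $\tau$, which enters only to place $\tau$ in the class covered by Theorem \ref{thm:main}.
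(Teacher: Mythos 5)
Your proof is correct, and it reaches the same endgame as the paper --- construct a strictly losing cycle in a compact set and traverse it forever --- but it gets to that cycle by a genuinely different route. The paper's proof is a two-line deduction from its Theorem \ref{prop:MCM}: maximality gives $\tau \not\succeq \pi$, the equivalence (i) $\Leftrightarrow$ (ii) of that theorem then says $\tau$ fails multiplicative cyclical monotonicity relative to $\pi$, which hands you a closed cycle with $V_\tau(m+1)/V_\pi(m+1) < 1$ ready to be iterated. You instead negate the definition of domination on compacts directly, which only yields an \emph{open} finite path with very small terminal ratio, and then you do the extra work of closing it into a loop via the uniform two-sided bound $a \leq f(p,q) \leq b$ on the one-step relative return over $K \times K$. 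That closing step is exactly the content that the MCM characterization packages away, so your argument is more self-contained and elementary (it uses nothing beyond the definition of $\succeq$, the multiplicativity of $V_\tau/V_\pi$, and compactness of $K$ in the open simplex), at the cost of being longer; the paper's is shorter but leans on machinery whose proof of the relevant implication is itself the same repeat-the-cycle idea. All the details you flag --- strict positivity of the denominator of $f$ on $K \times K$, the strict inequality $\rho < 1$ after paying the factor $b$ to close the loop, and the control of intermediate times by $\rho^m b^L$ --- are handled correctly.
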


One can interpret Corollary \ref{cor:main} by saying that if $\pi$ is maximal and $\tau \neq \pi$, it is possible to find a diverse and sufficiently volatile market in which $\pi$ beats $\tau$ in the long run. In this sense, for a portfolio $\pi$ satisfying \eqref{eqn:integralcondition}, it is impossible to find a (deterministic) portfolio which is a relative arbitrage with respect to $\pi$ in all diverse and sufficiently volatile markets. Theorem \ref{thm:main} will be proved by comparing the relative concavities of portfolio generating functions.

\medskip

Regarding optimization of functionally generated portfolios, we formulate a {\it shape-constrained optimization problem} in the spirit of maximum likelihood estimation of a log-concave density. For the statistical theory we refer the reader to \cite{DR09}, \cite{CSS10}, \cite{CS10}, \cite{KM10} and \cite{SW10}. Following \cite{PW14}, we associate to each functionally generated portfolio an {\it L-divergence functional} $T\left(\cdot \mid \cdot\right)$ defined on $\Delta^{(n)} \times \Delta^{(n)}$ (see Definition \ref{def:discreteenergy}). Intuitively, $T\left(q \mid p\right)$ measures the potential profit from volatility captured when the market weight jumps from $p$ to $q$ in $\Delta^{(n)}$. Let ${\Bbb P}$ be an intensity measure over the jumps $(p, q)$ which can be defined in terms of data or a given model (examples will be given in Section \ref{sec:optimization}). We maximize 
\[
\int T\left( q \mid p \right) \mathrm{d}{\Bbb P}
\]
over all functionally generated portfolios with or without constraints. This optimization problem is shape-constrained because the generating function of a functionally generated portfolio is concave. We prove that the optimization problem is well-posed and is in a suitable sense {\it consistent} when interpreted as a statistical estimation problem. In this paper we implement this optimization for the case of two assets (analogous to univariate density estimation) and a general algorithm will be the topic of future research. We illustrate a typical application in portfolio management with a case study.

\medskip

The paper is organized as follows. In Section \ref{sec:prelim} we set up the notations and recall the definitions of pseudo-arbitrage and functionally generated portfolio. In Section \ref{sec:benchmark} we extend the framework of \cite{PW14} to benchmark portfolios that are functionally generated. Using a relative concavity lemma given in \cite{CDO07}, we prove Theorem \ref{thm:main} and Corollary \ref{cor:main} in Section \ref{sec:concavity}. Optimization of functionally generated portfolios is studied in Section \ref{sec:optimization} and an empirical case study is presented in Section \ref{sec:empirical}. Several proofs of a more technical nature are gathered in Appendex \ref{sec:appendix}.

\section{Pseudo-arbitrage and functionally generated portfolio} \label{sec:prelim}
\subsection{Portfolio and pseudo-arbitrage}  \label{subsec:pseudo}
We work under the discrete time, deterministic set-up of \cite{PW14} which we briefly recall here. Let $n \geq 2$ be the number of stocks or assets in the market. We endow the open unit simplex $\Delta^{(n)}$ with the Euclidean metric. The open ball in $\Delta^{(n)}$ centered at $p$ with radius $\delta$ is denoted by $B(p, \delta)$. A tangent vector of $\Delta^{(n)}$ is a vector $v = (v_1, ..., v_n) \in {\Bbb R}^n$ satisfying $\sum_{i = 1}^n v_i = 0$. We denote the vector space of tangent vectors of $\Delta^{(n)}$ by $T\Delta^{(n)}$. For $i = 1, ..., n$, we let $e(i) = (0, ..., 0, 1, 0, ..., 0)$ be the vertex of $\Delta^{(n)}$ in the $i$-th direction. If $a$ and $b$ are vectors in ${\Bbb R}^n$, we let $\langle a, b \rangle$ be the Euclidean inner product. The Euclidean norm is denoted by $\|\cdot\|$. If $b$ has nonzero entries, $a / b$ is the vector of the componentwise ratios $a_i / b_i$.

Throughout this paper time is discrete ($t = 0, 1, 2, ...$). Extensions to continuous time will be discussed briefly in Section \ref{sec:continuoustime}. Let $X_i(t) > 0$ be the market capitalization of stock $i$ at time $t$. The total capitalization of the market is then $X_1(t) + \cdots + X_n(t)$. The market weight of stock $i$ is defined by
\[
\mu_i(t) = \frac{X_i(t)}{X_1(t) + \cdots + X_n(t)}, \quad i = 1, ..., n.
\] 
The vector $\mu(t) = (\mu_1(t), ..., \mu_n(t))$ takes values in $\Delta^{(n)}$ and represents the relative sizes of the firms. As the stock prices move the market weights fluctuate accordingly.

As in \cite{PW14}, the stock market is modeled as a deterministic sequence $\{\mu(t)\}_{t \geq 0}$ taking values in $\Delta^{(n)}$, so an underlying probability space is not required. Our approach is analogous to that of universal prediction (see for example \cite{CL06}) where it is not assumed that the data is generated by a stochastic model. Only structural properties such as diversity and sufficient volatility will be imposed on the sequences.

We consider a small investor in this market who cares about the value of his or her portfolio relative to that of the entire market. We restrict ourselves to portfolios which are deterministic functions of the current market weights. Short sales are not allowed and we assume there is no transaction cost.

\begin{definition}[Portfolio and relative value process]
A portfolio is a Borel measurable map $\pi: \Delta^{(n)} \rightarrow \overline{\Delta^{(n)}}$. The market portfolio $\mu$ is the identity map $p \mapsto p$ and we do not distinguish it from the market weight process $\{\mu(t)\}$. Given a portfolio $\pi$, its relative value process $\{V_{\pi}(t)\}_{t \geq 0}$ is defined by $V_{\pi}(0) = 1$ and
\begin{equation} \label{eqn:relativevalue}
\frac{V_{\pi}(t+1)}{V_{\pi}(t)} = 1 + \left\langle \frac{\pi(\mu(t))}{\mu(t)}, \mu(t + 1) - \mu(t) \right\rangle, \quad t \geq 0.
\end{equation}
The weight ratio of the portfolio at $p \in \Delta^{(n)}$ is the vector $\frac{\pi(p)}{p} = \left(\frac{\pi_1(p)}{p_1}, ..., \frac{\pi_n(p)}{p_n}\right)$.
\end{definition}

The relative value $V_{\pi}(t)$ can be interpreted as the ratio of the growth of $\$1$ invested in the portfolio to that of $\$1$ invested in the market portfolio. If $V_{\pi}(t_1) > V_{\pi}(t_0)$, the portfolio outperforms the market portfolio over the (discrete) time interval $[t_0, t_1]$. As mentioned in \cite{PW14}, it is helpful to think of the weight ratio $p \mapsto \frac{\pi(p)}{p}$ as a vector field on $\Delta^{(n)}$. From \eqref{eqn:relativevalue}, the portfolio outperforms the market over $[t, t + 1]$ if the inner product between the displacement $\mu(t + 1) - \mu(t)$ of the market weight and the weight ratio is positive. This means on average the portfolio puts more weight on the assets which perform well relative to the rest of the market. 

\medskip

In the first part of the paper we will study the hierarchy of portfolios defined by the relation `domination on compacts'.

\begin{definition}[Domination on compacts] \label{def:pseudoarbitrage}
Let $\pi$ and $\tau$ be portfolios. We say that $\tau$ dominates $\pi$ on compacts (written $\tau \succeq \pi$) if for any compact subset $K$ of $\Delta^{(n)}$, there exists a constant $C = C(\pi, \tau, K) \geq 0$ such that for any path $\{\mu(t)\}_{t \geq 0} \subset K$, we have
\begin{equation} \label{eqn:lowerbound}
\log\frac{V_{\tau}(t)}{V_{\pi}(t)} \geq -C, \quad t \geq 0.
\end{equation}
\end{definition}

Thus, if $\tau \succeq \pi$, the value of $\pi$ cannot grow at a rate faster than that of $\tau$ under the diversity condition $\mu(t) \in K$, for any compact subset $K$. The relation $\tau \succeq \pi$ defines a partial order among the class of portfolio maps. We include the logarithm in \eqref{eqn:lowerbound} as this formulation is more convenient when we discuss functionally generated portfolios. This definition is closely related to that of pseudo-arbitrage introduced in \cite{PW14}. The definition given below is extended slightly to allow for an arbitrary benchmark portfolio.

\begin{definition}[Pseudo-arbitrage] \label{def:pseudoarbitrage2}
Let $\pi$ and $\tau$ be portfolios, and $K$ be a subset of $\Delta^{(n)}$, not necessarily compact. We say that $\tau$ is a pseudo-arbitrage with respect to $\pi$ on $K$ if the following properties hold:
\begin{enumerate}
\item[(i)] There exists a constant $C = C(\pi, \tau, K) \geq 0$ such that \eqref{eqn:lowerbound} holds for any sequence $\{\mu(t)\}_{t \geq 0} \subset K$.
\item[(ii)] There exists a sequence $\{\mu(t)\}_{t \geq 0} \subset K$ along which $\lim_{t \rightarrow \infty} \log V(t) = \infty$.
\end{enumerate}
\end{definition}

We refer the reader to \cite{PW14} for more discussion of the definition. Here we note that the requirement $\{\mu(t)\}_{t \geq 0} \subset K$ in (i) is a diversity condition which is portfolio-specific, and (ii) refers to the presence of sufficient volatility. The following is an easy consequence of the definitions.

\begin{lemma} \label{lem:easy}
Let $\pi$ and $\tau$ be portfolios. Suppose $\tau$ is a pseudo-arbitrage relative to $\pi$ on $K_j$ for all $j$, where $\{K_j\}$ is a compact exhaustion of $\Delta^{(n)}$. Then $\tau$ dominates $\pi$ on compacts.
\end{lemma}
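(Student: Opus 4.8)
The plan is to reduce domination on an arbitrary compact set to the single pseudo-arbitrage lower bound on one member of the exhaustion. To verify $\tau \succeq \pi$, I must show that for every compact $K \subset \Delta^{(n)}$ there is a constant $C = C(\pi, \tau, K) \geq 0$ with $\log (V_\tau(t)/V_\pi(t)) \geq -C$ along every path contained in $K$. So I would fix an arbitrary such $K$ at the outset and produce the constant.

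The key step is the purely topological observation that $K$ is contained in some $K_J$. Since $\{K_j\}$ is a compact exhaustion, the sets are nested with $K_j \subset \mathrm{int}(K_{j+1})$ and $\bigcup_j K_j = \Delta^{(n)}$; hence the open sets $\{\mathrm{int}(K_j)\}_j$ cover $\Delta^{(n)}$ and in particular cover $K$. By compactness of $K$ finitely many of these interiors suffice, and by nesting one of them already does, giving $K \subset \mathrm{int}(K_J) \subset K_J$ for some index $J$.

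With this containment in hand, I would invoke property (i) of the pseudo-arbitrage hypothesis on $K_J$: there exists $C_J = C(\pi, \tau, K_J) \geq 0$ such that \eqref{eqn:lowerbound} holds for every sequence $\{\mu(t)\}_{t \geq 0} \subset K_J$. Now any path $\{\mu(t)\}_{t \geq 0} \subset K$ automatically satisfies $\{\mu(t)\}_{t \geq 0} \subset K_J$, so the very same inequality $\log (V_\tau(t)/V_\pi(t)) \geq -C_J$ applies to it. Setting $C := C_J$ yields \eqref{eqn:lowerbound} uniformly over all paths in $K$, which is exactly domination on compacts.

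I do not anticipate any serious obstacle: the only genuine ingredient is the containment $K \subset K_J$, and the rest is an immediate unwinding of Definitions \ref{def:pseudoarbitrage} and \ref{def:pseudoarbitrage2}. It is worth noting that property (ii) of pseudo-arbitrage (the existence of a path along which $\log V(t) \to \infty$) is not used at all, since domination on compacts is a one-sided lower-bound condition and makes no assertion about the presence of sufficient volatility.
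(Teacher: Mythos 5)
Your proof is correct and is exactly the argument the paper has in mind: the paper omits the proof entirely, calling the lemma ``an easy consequence of the definitions,'' and the intended content is precisely your two observations---that any compact $K$ is absorbed into some $K_J$ of the exhaustion, and that property (i) of pseudo-arbitrage on $K_J$ then supplies the uniform constant for all paths in $K$. Your remark that property (ii) is never needed is also accurate.
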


\begin{definition}[Maximal portfolio]
Let ${\mathcal S}$ be a family of portfolios and $\pi \in {\mathcal S}$. We say that $\pi$ is maximal in ${\mathcal S}$ if there is no portfolio in ${\mathcal S}$, other than $\pi$ itself, which dominates $\pi$ on compacts.
\end{definition}

Note that a maximal portfolio may not exist and may not be unique in the given class. In Section \ref{sec:concavity} we will study the maximal portfolios where ${\mathcal S}$ is the class of portfolios with $C^2$ generating functions. By Lemma \ref{lem:easy}, if $\pi$ is maximal there is no portfolio which is a pseudo-arbitrage with respect to $\pi$ on all sufficiently large compact subsets of $\Delta^{(n)}$. In this sense a maximal portfolio is one which is impossible to beat assuming only diversity and sufficient volatility.

\begin{remark}
The relation `domination on compacts' refers to global properties of portfolios. Even if $\pi$ is maximal, for a {\it fixed} subset $K \subset \Delta^{(n)}$ it may be possible to find a portfolio $\tau$ (depending on $K$) which beats $\pi$ in the long run whenever $\{\mu(t)\} \subset K$. For example, when $n = 2$, it can be shown that the entropy-weighted portfolio beats the equal-weighted portfolio in the long run if $\{\mu(t)\}$ is sufficiently volatile and stays in a certain neighborhood of $\left(\frac{1}{2}, \frac{1}{2}\right)$. This, however, requires that $K$ is known in advance. Maximality of $\pi$ requires that there is no {\it single} $\tau$ which beats $\pi$ on {\it all} compact sets $K \subset \Delta^{(n)}$.
\end{remark}

\subsection{Functionally generated portfolio} \label{sec:fgp}
Functionally generated portfolio was first introduced in a general form in \cite{F99}. We will follow the intrinsic treatment in \cite[Section 2]{PW14} which emphasizes the relationship with convex analysis. Throughout the paper we will rely heavily on results from convex analysis and a standard reference is \cite{R70}.

\begin{definition} [Functionally generated portfolios] \label{def:fgp} {\ }
Let $\pi$ be a portfolio and $\Phi: \Delta^{(n)} \rightarrow (0, \infty)$ be a concave function. We say that $\pi$ is generated by $\Phi$ if the inequality
\begin{equation} \label{eqn:superdiff}
1 + \left\langle \frac{\pi(p)}{p}, q - p \right\rangle \geq \frac{\Phi(q)}{\Phi(p)}
\end{equation}
holds for all $p, q \in \Delta^{(n)}$. We call $\Phi$ the generating function of $\pi$. We denote by ${\mathcal{FG}}$ the collection of all functionally generated portfolios $(\pi, \Phi)$ where $\pi$ is generated by the concave function $\Phi$.
\end{definition}

It is known (see \cite[Proposition 5]{PW14}) that the generating function is unique up to a positive multiplicative constant, so the use of `the' in the above definition is justified (up to the constant). On the other hand, by Lemma \ref{lem:superdiff}(ii) below a non-smooth concave function $\Phi$ generates multiple portfolios but they differ only on the set where $\Phi$ is not differentiable (i.e., the superdifferential $\partial \log \Phi(p)$ has more than one element), and this set has Lebesgue measure zero (relative to $\Delta^{(n)}$) by \cite[Theorem 25.5]{R70}. Note that here the generating function is concave by definition, while in \cite{F02} non-concave generating functions are allowed. See Theorem \ref{thm:PW14} and Proposition \ref{prop:MCMfgp} below for a justification of our definition.

Let $\Phi$ be a concave function on $\Delta^{(n)}$ and $p \in \Delta^{(n)}$. The {\it superdifferential} of $\Phi$ at $p$ is the set $\partial \Phi(p)$ defined by
\begin{equation} \label{eqn:superdiffdef}
\partial \Phi(p) = \{\xi \in T\Delta^{(n)}: \Phi(p) + \langle \xi, q - p \rangle \geq \Phi(q) \ \forall q \in \Delta^{(n)}\}.
\end{equation}
If $\Phi$ is concave and positive, it can be shown that $\log \Phi$ is also a concave function, and
\begin{equation} \label{eqn:supdiffequal}
\partial \log \Phi(p) = \frac{1}{\Phi(p)} \partial \Phi(p) = \left\{\frac{1}{\Phi(p)} \xi: \xi \in \partial \Phi(p)\right\}.
\end{equation}

\begin{lemma}\cite[Proposition 6]{PW14} \label{lem:superdiff} Let $\Phi$ be a positive concave function on $\Delta^{(n)}$.
\begin{enumerate}
\item[(i)] Let $\pi$ be a portfolio generated by $\Phi$. Then for $p \in \Delta^{(n)}$, the tangent vector $v = (v_1, ..., v_n)$ defined by
\begin{equation} \label{eqn:definev}
v_i = \frac{\pi_i(p)}{p_i} - \frac{1}{n} \sum_{j = 1}^n \frac{\pi_j(p)}{p_j}, \quad i = 1, ..., n,
\end{equation}
belongs to $\partial \log \Phi(p)$.
\item[(ii)] Conversely, if $v \in \partial \log\Phi(p)$, then the vector $\pi = (\pi_1, ..., \pi_n)$ defined by
\begin{equation} \label{eqn:definepi}
\frac{\pi_i}{p_i} = v_i + 1 - \sum_{j = 1}^n p_jv_j, \quad i = 1, ..., n,
\end{equation}
is an element of $\overline{\Delta^{(n)}}$. In particular, any measurable selection of $\partial \log \Phi$ (a Borel measurable map $\xi: \Delta^{(n)} \rightarrow T\Delta^{(n)}$ such that $\xi(p) \in \partial \log \Phi(p)$ for all $p \in \Delta^{(n)}$) defines via \eqref{eqn:definepi} a portfolio generated by $\Phi$. (By \cite[Theorem 14.56]{RW98}, there is always a measurable selection of $\partial \log \Phi$.)
\end{enumerate}
Moreover, the operations $\pi \mapsto v$ and $v \mapsto \pi$ defined by \eqref{eqn:definev} and \eqref{eqn:definepi} are inverses of each other.
\end{lemma}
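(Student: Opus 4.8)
The plan is to reduce both parts to a single equivalence and then read off the two constructions from it. I claim that for a tangent vector $v \in T\Delta^{(n)}$ one has $v \in \partial \log \Phi(p)$ if and only if the generating inequality $1 + \langle v, q - p \rangle \geq \Phi(q)/\Phi(p)$ holds for all $q \in \Delta^{(n)}$. The bridge is the scaling relation \eqref{eqn:supdiffequal}, and crucially it lets me avoid any logarithmic manipulation: multiplying the displayed inequality by the positive number $\Phi(p)$ turns it into $\Phi(p) + \langle \Phi(p)v, q - p \rangle \geq \Phi(q)$, which by \eqref{eqn:superdiffdef} says exactly $\Phi(p)v \in \partial\Phi(p)$ (note $\Phi(p)v$ is tangent since $v$ is); dividing back and invoking $\partial\log\Phi(p) = \frac{1}{\Phi(p)}\partial\Phi(p)$ gives $v \in \partial\log\Phi(p)$, and every step is reversible.

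Before using this I would record two elementary facts. First, since $\pi(p) \in \overline{\Delta^{(n)}}$ we have $\sum_i \pi_i(p) = 1$. Second, because $\sum_i (q_i - p_i) = 0$ for $p, q \in \Delta^{(n)}$, the quantity $\langle a, q - p \rangle$ is unchanged if we add any multiple of $\mathbf{1} = (1, \ldots, 1)$ to $a$; in particular the weight ratio $\pi(p)/p$ and the recentered vector $v$ of \eqref{eqn:definev}, which differ by a multiple of $\mathbf{1}$, satisfy $\langle \pi(p)/p, q - p \rangle = \langle v, q - p \rangle$. One checks directly that this $v$ is tangent, $\sum_i v_i = 0$, so that $\partial\log\Phi(p)$ is the right target.

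For part (i) I would start from \eqref{eqn:superdiff}, rewrite $\langle \pi(p)/p, q - p \rangle$ as $\langle v, q - p \rangle$ by the second fact, and apply the equivalence to conclude $v \in \partial\log\Phi(p)$. For part (ii), given $v \in \partial\log\Phi(p)$, I would define $\pi$ by \eqref{eqn:definepi} and verify the two requirements separately. That $\sum_i \pi_i = 1$ is a one-line computation using $\sum_i p_i = 1$. That $\pi$ generates $\Phi$ is again immediate from the equivalence, since $\pi(p)/p$ and $v$ differ by a multiple of $\mathbf{1}$ and hence give the same inner product against $q - p$. The nonnegativity $\pi_i \geq 0$, which is equivalent to $1 + \langle v, e(i) - p \rangle \geq 0$, is the one point that does not fall out algebraically, and I expect it to be the main obstacle. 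The equivalence yields $1 + \langle v, q - p \rangle \geq \Phi(q)/\Phi(p) > 0$ only for $q$ in the \emph{open} simplex, whereas $e(i)$ is a vertex; I would recover the inequality at $e(i)$ by continuity, letting $q \to e(i)$ along interior points and using that $q \mapsto 1 + \langle v, q - p \rangle$ is affine. This produces the non-strict bound $\geq 0$, which is exactly what $\pi_i \geq 0$ requires.

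Finally, that \eqref{eqn:definev} and \eqref{eqn:definepi} are mutually inverse is a direct substitution: writing out the recentering constant $\frac{1}{n}\sum_j \pi_j(p)/p_j$ and the normalizing constant $1 - \sum_j p_j v_j$ and simplifying with $\sum_i v_i = 0$ in one direction and $\sum_i \pi_i = 1$ in the other returns the original vector each time. The measurable-selection statement then follows because \eqref{eqn:definepi} is a continuous (indeed affine) function of $v$, so a Borel selection of $\partial\log\Phi$ composes to a Borel portfolio, the existence of such a selection being the cited consequence of \cite{RW98}.
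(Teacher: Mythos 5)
Your proposal is correct and complete: the reduction of both parts to the single equivalence ``$v \in \partial\log\Phi(p)$ iff $1 + \langle v, q-p\rangle \geq \Phi(q)/\Phi(p)$ for all $q$'' via \eqref{eqn:supdiffequal} is sound (and you rightly route through \eqref{eqn:supdiffequal} rather than naively exponentiating, since $x \geq y$ alone would not give $1+x \geq e^y$), the invariance of $\langle \cdot, q-p\rangle$ under adding multiples of $\mathbf{1}$ handles the recentering in both directions, and the limiting argument $q \to e(i)$ correctly delivers the non-strict bound $\pi_i \geq 0$. The paper itself offers no proof of this lemma --- it is quoted from \cite[Proposition 6]{PW14} --- and your argument is essentially the standard one given there, so there is nothing further to reconcile.
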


From \eqref{eqn:definepi}, it can be seen that Fernholz's definition (see \cite[Theorem 3.1.5]{F02}) is consistent with ours. If $\pi$ is generated by $\Phi$, the weight ratio vector field $\frac{\pi}{p}$ is {\it conservative} on $\Delta^{(n)}$ and its potential function is given by the logarithm of the generating function $\Phi$. Here is a precise statement and the details can be found in the proof of \cite[Theorem 8]{PW14}. Let $\pi$ be a portfolio. If $\gamma: [0, 1] \rightarrow \Delta^{(n)}$ is a piecewise linear path in $\Delta^{(n)}$, we let 
\begin{equation}  \label{eqn:lineintegral}
I_{\pi}(\gamma) := \int_{\gamma} \frac{\pi}{p} \mathrm{d}p \equiv \int_0^1 \sum_{i = 1}^n \frac{\pi_i(\gamma(t))}{p_i(\gamma(t))}\gamma'_i(t)\mathrm{d}t
\end{equation}
be the line integral of the weight ratio along $\gamma$. If $\pi$ is functionally generated, the weight ratio $\frac{\pi}{p}$ is conservative in the sense that this line integral is zero whenever $\gamma$ is closed, i.e., $\gamma(0) = \gamma(1)$. Moreover, for any $p, q \in \Delta^{(n)}$ we have
\begin{equation} \label{eqn:lineintegral2}
\log \Phi(q) - \log \Phi(p) = I_{\pi}(\gamma),
\end{equation}
where $\gamma$ is any piecewise linear path from $p$ to $q$. In classical terminology, $\log \Phi$ is then the potential function of the weight ratio vector field. Fernholz's decomposition (see Lemma \ref{lem:FernholzDecomp} below) shows that the log relative value $\log V_{\pi}(t)$ can be decomposed as the sum of the increment of $\log \Phi(\mu(t))$ and a non-decreasing process related to market volatility.

The concavity of the generating function will be measured in terms of the L-divergence introduced in \cite{PW14}.

\begin{definition}[L-divergence] \label{def:discreteenergy}
Let $\pi$ be a portfolio generated by a concave function $\Phi: \Delta^{(n)} \rightarrow (0, \infty)$. The L-divergence functional of the pair $(\pi, \Phi)$ is the function $T: \Delta^{(n)} \times \Delta^{(n)} \rightarrow [0, \infty)$ defined by
\begin{equation} \label{eqn:discreteenergy}
T\left(q \mid p \right) = \log \left(1 + \left\langle \frac{\pi(p)}{p}, q - p \right\rangle \right) - \log \frac{\Phi(q)}{\Phi(p)}, \quad p, q \in \Delta^{(n)}.
\end{equation}
\end{definition}

Using \eqref{eqn:superdiff}, it can be shown that $T\left(q \mid p \right) \geq 0$ and $T\left(q \mid p \right) = 0$ only if $\Phi$ is affine on the line segment containing $p$ and $q$. $T\left(\cdot \mid \cdot\right)$ is a logarithmic version (hence the `L') of {\it Bergman divergence} used in information geometry (see \cite{AC10}) and should be thought of as a measure of the concavity of $\Phi$.

With these definitions, the main results of \cite{PW14} can be summarized as follow.

\begin{theorem}[Pseudo-arbitrages relative to the market portfolio] \cite[Theorem 1, Theorem 2]{PW14} \label{thm:PW14}
A portfolio $\pi$ is a pseudo-arbitrage relative to the market portfolio $\mu$ on a convex subset $K \subset \Delta^{(n)}$ if and only if $\pi$ is generated by a concave function $\Phi: \Delta^{(n)} \rightarrow (0, \infty)$ which is bounded below on $K$ and $T \left( \cdot \mid \cdot \right)$ is not identically zero on $K \times K$. Moreover, these portfolios correspond to solutions of an optimal transport problem.
\end{theorem}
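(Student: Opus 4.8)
The plan is to prove the two implications separately and then identify the variational characterization, leaning on Fernholz's decomposition (Lemma~\ref{lem:FernholzDecomp}) and the L-divergence of Definition~\ref{def:discreteenergy}. Since the benchmark is the market, $V_\mu\equiv 1$, so a pseudo-arbitrage $\pi$ relative to $\mu$ on $K$ is just the requirement that $\log V_\pi(t)$ be (i) bounded below uniformly over all paths in $K$ and (ii) unbounded above along some path in $K$. For the easy direction, suppose $\pi$ is generated by a concave $\Phi>0$ that is bounded below on $K$. Summing \eqref{eqn:relativevalue} and using \eqref{eqn:discreteenergy}, I would write
\[
\log V_\pi(t) = \log\Phi(\mu(t)) - \log\Phi(\mu(0)) + \sum_{s=0}^{t-1} T\!\left(\mu(s+1)\mid\mu(s)\right),
\]
where each summand is nonnegative by \eqref{eqn:superdiff}. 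A concave function is automatically bounded above on the bounded set $K$ and is bounded below there by hypothesis, so the first two terms are controlled and (i) follows with $C=\sup_K\log\Phi-\inf_K\log\Phi$. For (ii), if $T\!\left(q^*\mid p^*\right)>0$ for some $p^*,q^*\in K$, convexity of $K$ lets me oscillate $\mu$ along $[p^*,q^*]$: the $\log\Phi$ terms stay bounded while the accumulated divergence grows linearly, so $\log V_\pi(t)\to\infty$.

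For the converse I would recover the generating function from the pathwise condition~(i). The key observation is that (i) must survive \emph{repetition}: traversing a closed loop in $K$ exactly $N$ times raises its one-loop relative value to the $N$-th power, so a uniform lower bound forces the net log-gain around every loop in $K$ to be nonnegative. Applying this to a loop and to its reversal, and letting the loops shrink, forces the first-order circulation $\oint\langle\pi/p,\mathrm{d}\mu\rangle$ of the weight-ratio field $p\mapsto\pi(p)/p$ to vanish; hence the line integral \eqref{eqn:lineintegral} is path-independent on the convex set $K$ and defines a potential $\log\Phi$, which by Lemma~\ref{lem:superdiff} exhibits $\pi$ as generated by $\Phi$. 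The same repeated-loop inequality, applied to degenerate loops $p\rightleftarrows q$, yields $T\!\left(q\mid p\right)\ge 0$, which by \eqref{eqn:superdiff} is precisely concavity of $\Phi$. Boundedness below of $\Phi$ on $K$ reads off from (i) along paths terminating at arbitrary points of $K$, while $T\not\equiv 0$ on $K\times K$ is forced by (ii), since otherwise $\log V_\pi(t)$ reduces to the bounded quantity $\log\Phi(\mu(t))-\log\Phi(\mu(0))$ and cannot diverge.

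For the optimal-transport statement I would read $T\!\left(\cdot\mid\cdot\right)$ as a transport cost on $\Delta^{(n)}\times\Delta^{(n)}$ and $\log\Phi$ as the associated Kantorovich potential: by \eqref{eqn:discreteenergy} and the supergradient duality of \eqref{eqn:superdiff}, maximizing the volatility harvested against a prescribed distribution of jumps $(p,q)$ is a Monge--Kantorovich problem, whose optimizers are exactly the concave generating functions produced above. I would confirm the duality by matching the supergradient of $\log\Phi$ with the optimal transport map and checking complementary slackness against the constraint \eqref{eqn:superdiff}.

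The main obstacle is the converse direction, and within it the passage from the discrete, pathwise condition (i) to the infinitesimal conclusions. Converting ``bounded drawdown under loop repetition'' into ``zero circulation'' and then into ``pointwise $T\ge 0$'' requires controlling the second- and higher-order terms of $\log\bigl(1+\langle\pi/p,\cdot\rangle\bigr)$ uniformly, coping with the fact that $K$ is merely convex (not necessarily compact) so that $\Phi$ and its gradient are not globally controlled a priori, and ensuring the recovered potential extends to a genuinely concave function on all of $\Delta^{(n)}$ rather than only on $K$. Establishing the optimal-transport correspondence rigorously --- identifying the correct cost and proving attainment and duality --- is the other substantial ingredient, and the one I would expect to require the most care.
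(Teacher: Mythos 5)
The paper does not actually prove this statement: it is imported verbatim from \cite{PW14} (Theorems 1 and 2 there), and the closest argument carried out in the present paper is the proof of Theorem \ref{prop:MCM}, which generalizes the equivalence to an arbitrary functionally generated benchmark. Measured against that proof, your architecture is the intended one. The sufficiency half (Fernholz's decomposition, $\log V_\pi(t)=\log\Phi(\mu(t))-\log\Phi(\mu(0))+A(t)$ with $A$ non-decreasing, plus the observation that a positive concave function on the bounded set $\Delta^{(n)}$ is automatically bounded above so only the lower bound on $\Phi|_K$ need be assumed, plus oscillation between a pair $(p^*,q^*)$ with $T\left(q^*\mid p^*\right)>0$) is exactly the step (iii) $\Rightarrow$ (i) of Theorem \ref{prop:MCM} and is correct. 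So is your first move in the converse: repeating a loop $N$ times multiplies the relative value by the $N$-th power, so the uniform drawdown bound forces the MCM loop inequality; this is the step (i) $\Rightarrow$ (ii) there.

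The genuine gap is the passage from MCM to the existence of a concave generating function. You propose to shrink loops, deduce that the circulation of the weight-ratio field vanishes, and integrate to obtain a potential, and you flag ``controlling the second- and higher-order terms'' as the main obstacle --- but that obstacle signals that the route fails at the stated generality rather than being a removable technicality. A portfolio here is only a Borel measurable map, so the line integral \eqref{eqn:lineintegral} need not exist and no mesh-refinement or Taylor argument is available; and even for smooth $\pi$, local exactness of the linearized field yields at best a local potential, whereas concavity of $\Phi$ is the \emph{global} supergradient inequality \eqref{eqn:superdiff}. The actual proof (Proposition \ref{prop:MCMfgp}, i.e.\ \cite[Proposition 4]{PW14}) treats MCM as a discrete cyclical monotonicity condition and defines $\log\Phi(q)$ directly as an infimum over chains from a base point to $q$ of the sums $\sum_k \log\bigl(1+\langle \pi(\mu(k))/\mu(k),\mu(k+1)-\mu(k)\rangle\bigr)$, Rockafellar-style: MCM makes the infimum finite, \eqref{eqn:superdiff} is immediate from the definition, and concavity follows because a function admitting a supergradient at every point of a convex set is concave --- no infinitesimal limit is taken anywhere. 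Two smaller loose ends: the two-point loop yields the monotonicity inequality $\log\bigl(1+\langle \pi(p)/p,q-p\rangle\bigr)+\log\bigl(1+\langle \pi(q)/q,p-q\rangle\bigr)\ge 0$ rather than $T\left(q\mid p\right)\ge 0$ (the latter is only meaningful once $\Phi$ exists); and in the ``moreover'' clause the transport cost of \cite{PW14} is not $T$ itself but a cost for which MCM is precisely $c$-cyclical monotonicity, with functionally generated portfolios arising as supports of optimal couplings.
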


In Section \ref{sec:concavity} we will focus on functionally generated portfolios with $C^2$ generating functions.

\begin{definition} \label{def:fg} \label{defn:C2fgp} {\ }
\begin{enumerate}
\item[(i)] We denote by ${\mathcal{FG}}^2$ the collection of functionally generated portfolios whose generating functions are $C^2$ and concave. An element of ${\mathcal{FG}}^2$ is denoted by either $\pi$, $\Phi$ or $(\pi, \Phi)$ where $\pi$ is generated by $\Phi$. In this case $\pi$ is necessarily given by \eqref{eqn:fgweight}.
\item[(ii)] A positive $C^2$ concave function $\Phi$ on $\Delta^{(n)}$ is called a measure of diversity if it is symmetric, i.e.,
\[
\Phi(p_1, ..., p_n) = \Phi(p_{\sigma(1)}, ..., p_{\sigma(n)})
\]
for all $p \in \Delta^{(n)}$ and any permutation $\sigma$ of $\{1, ..., n\}$.
\end{enumerate}
\end{definition}

Measure of diversity was introduced by Fernholz in \cite[Section 4]{F99}. Some examples are given in Table \ref{tab:benchmark} and more can be found in \cite[Section 3.4]{F02}. A measure of diversity gives a numerical measure of the concentration of the capital distribution $\mu(t) = \left(\mu_1(t), ..., \mu_n(t)\right)$ and also generates a portfolio.

\section{Benchmarking a functionally generated portfolio} \label{sec:benchmark}
Fix a portfolio $\pi$ generated by a concave function $\Phi: \Delta^{(n)} \rightarrow (0, \infty)$ and call it the {\it benchmark portfolio}. Some examples we have in mind are given in Table \ref{tab:benchmark}. All of these portfolios are generated by measures of diversity.

\begin{table}
\caption{Examples of functionally generated portfolios}
\label{tab:benchmark}
\begin{tabular}{lll}
\hline\noalign{\smallskip}
    Name & Portfolio weights  & Generating function \\
\noalign{\smallskip}\hline\noalign{\smallskip}
    Market & $\pi_i(p) = p_i$ & $\Phi(p) = 1$  \\ 
    Diversity-weighted ($0 < r < 1$) & $\pi_i(p) = \frac{p_i^r}{\sum_{j = 1}^n p_j^r}$ & $\Phi(p) = \left( \sum_{j = 1}^n p_j^r \right)^{\frac{1}{r}}$ \\ 
    Equal-weighted & $\pi_i(p) = \frac{1}{n}$ & $\Phi(p) = \left(p_1 p_2 \cdots p_n \right)^{\frac{1}{n}}$  \\ 
    Entropy-weighted & $\pi_i(p) = \frac{-p_i \log p_i}{\sum_{j = 1}^n -p_j \log p_j}$ & $\Phi(p) = \sum_{j = 1}^n -p_j \log p_j$  \\ 
\noalign{\smallskip}\hline
\end{tabular}
\end{table}

As mentioned in the introduction, it can be proved that many functionally generated portfolios (including the three nontrivial examples above) outperform the market over sufficiently long periods under the assumptions of diversity and sufficient volatility. As these hypotheses appear to hold empirically, many functionally generated portfolios outperform the market over long periods. See \cite[Chapter 6]{F02} for several case studies using data of the US stock market. Since these portfolios contain no proprietary modeling, behave reasonably well and are easily replicable, they also serve as alternative benchmarks as discussed in practitioner papers such as \cite{FGH98} and \cite{HCKL11}. It is natural to ask whether we can construct relative or pseudo-arbitrages with respect to these portfolios.

\subsection{Fernholz's decomposition} \label{sec:Fernholz}
The relative value process of a functionally generated portfolio satisfies an elegant decomposition formula. It is a direct consequence of \eqref{eqn:discreteenergy} and \eqref{eqn:relativevalue} and can be motivated by the vector field interpretation discussed in Section \ref{sec:fgp}. 

\begin{lemma}[Fernholz's decomposition] \label{lem:FernholzDecomp} \cite[Theorem 3.1]{F99} \cite[Lemma 7]{PW14}
If $\pi$ is generated by a concave function $\Phi$, the relative value process $V_{\pi}$ has the decomposition
\begin{equation} \label{eqn:FernholzDecomp}
\log V_{\pi}(t) = \log \frac{\Phi(\mu(t))}{\Phi(\mu(0))} + A(t),
\end{equation}
where $A(t) = \sum_{k = 0}^{t-1} T\left(\mu(k+1) \mid \mu(k)\right)$ is non-decreasing. We call $A(t)$ the drift process of the portfolio.
\end{lemma}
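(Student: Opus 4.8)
The plan is to read off the decomposition directly from the one-step update rule for the relative value process together with the definition of the L-divergence, and then to sum. First I would start from the multiplicative recursion \eqref{eqn:relativevalue}. Since $V_{\pi}(0) = 1$, we have $V_{\pi}(t) = \prod_{k=0}^{t-1} \frac{V_{\pi}(k+1)}{V_{\pi}(k)}$, and taking logarithms converts this product into
\[
\log V_{\pi}(t) = \sum_{k=0}^{t-1} \log\left(1 + \left\langle \frac{\pi(\mu(k))}{\mu(k)}, \mu(k+1) - \mu(k) \right\rangle \right).
\]

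The key observation is that each summand is precisely the first term in the definition \eqref{eqn:discreteenergy} of the L-divergence, evaluated at $p = \mu(k)$ and $q = \mu(k+1)$. Rearranging \eqref{eqn:discreteenergy} gives, for each $k$,
\[
\log\left(1 + \left\langle \frac{\pi(\mu(k))}{\mu(k)}, \mu(k+1) - \mu(k) \right\rangle \right) = \log\frac{\Phi(\mu(k+1))}{\Phi(\mu(k))} + T\left(\mu(k+1) \mid \mu(k)\right).
\]
Substituting this into the sum splits $\log V_{\pi}(t)$ into two pieces. The first piece, $\sum_{k=0}^{t-1} \left(\log\Phi(\mu(k+1)) - \log\Phi(\mu(k))\right)$, telescopes to $\log\Phi(\mu(t)) - \log\Phi(\mu(0)) = \log\frac{\Phi(\mu(t))}{\Phi(\mu(0))}$. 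The second piece is exactly $A(t) = \sum_{k=0}^{t-1} T\left(\mu(k+1) \mid \mu(k)\right)$, which establishes \eqref{eqn:FernholzDecomp}.

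It then remains to check that $A$ is non-decreasing, and this follows immediately since $A(t+1) - A(t) = T\left(\mu(t+1) \mid \mu(t)\right) \geq 0$, where the nonnegativity of the L-divergence is the inequality noted right after Definition \ref{def:discreteenergy} (itself a restatement of the defining superdifferential inequality \eqref{eqn:superdiff}). I do not expect a real obstacle here: the entire content of the lemma is packed into the definition of $T$, which was arranged precisely so that the log-increment of $V_{\pi}$ decomposes into a $\Phi$-increment plus a nonnegative remainder. The only points requiring genuine care are the bookkeeping of the telescoping sum and confirming that $\Phi(\mu(k)) > 0$ for all $k$, so that the logarithms are well-defined; this holds because $\Phi$ is positive on $\Delta^{(n)}$ and each $\mu(k) \in \Delta^{(n)}$.
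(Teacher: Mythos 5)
Your proof is correct and matches the paper's intent exactly: the paper states that the decomposition ``is a direct consequence of \eqref{eqn:discreteenergy} and \eqref{eqn:relativevalue},'' and your argument---taking logarithms of the one-step ratios, rewriting each term via the definition of $T$, and telescoping the $\log\Phi$ increments---is precisely that direct consequence, with nonnegativity of $T$ giving monotonicity of $A$.
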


\begin{figure}
\includegraphics[scale=1]{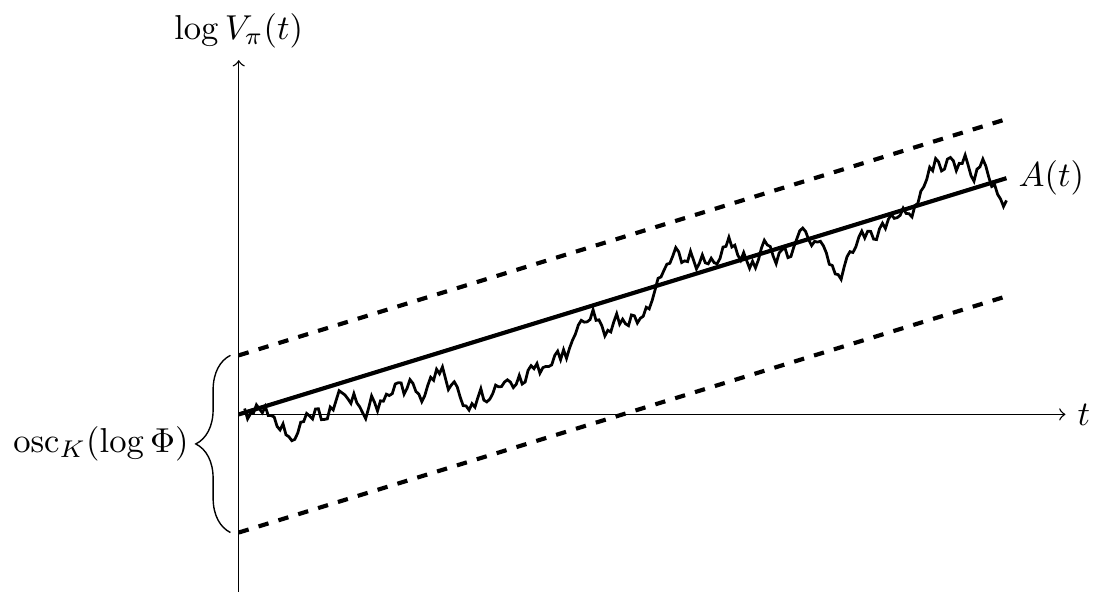}
\caption{Hypothetical performance of a functionally generated portfolio. If the market weight $\mu(t)$ stays within a subset $K \subset \Delta^{(n)}$, the relative value process will stay within the dashed curves which are vertical translations of the drift process $A(t)$. The width of the `sausage' is given by the oscillation of $\log \Phi$ on $K$ defined by $\osc_K(\log \Phi) = \sup_{p, q \in K} |\log \Phi(q) - \log \Phi(p)|$.} 
\label{fig:FernholzDecomp}
\end{figure}

The key idea of the decomposition is that over any period $[t_0, t_1]$ where $\log \Phi(\mu(t_1))$ and $\log \Phi(\mu(t_0))$ are approximately equal, the portfolio will outperform the market by an amount equal to $A(t_1) - A(t_0)$, see Figure \ref{fig:FernholzDecomp} for an illustration. For this reason, the drift process $A(t)$ can be thought of as the cumulative amount of market volatility captured by the portfolio. The condition of sufficient volatility requires that $A(t)$ grows unbounded as $t \rightarrow \infty$. Empirical studies (see for example \cite[Figure 11.2]{FKSurvey}) show that $A$ increases at a roughly linear rate depending on the portfolio and market volatility. Thus, as long as the fluctuation of $\log \Phi(\mu(t))$ remains bounded, the drift process will dominate in the long run and the portfolio will outperform the market. The assumption on diversity is imposed to bound $\log \Phi(\mu(t))$. For (say) the entropy-weighted portfolio, $\log \Phi(\mu(t))$ is bounded as long as $\max_{1 \leq i \leq n} \mu_i(t) \leq 1 - \delta$ for some $\delta > 0$, so we can take $K$ in Definition \ref{def:pseudoarbitrage2} and Theorem \ref{thm:PW14} to be the set $\{p \in \Delta^{(n)}: \max_{1 \leq i \leq n} p_i \leq 1 - \delta\}$ (this is the definition of diversity stated in \cite{F99} and \cite{FKSurvey}). For other portfolios such as the equal-weighted portfolio, this condition is not enough and we require that $\mu(t)$ stays within a compact subset of $\Delta^{(n)}$. Thus the set $K$ is portfolio-specific. Fernholz's decomposition is implemented in the \verb"R" package \verb"RelValAnalysis" (available on \verb"CRAN") written by the author.

\subsection{Domination on compacts}
In \cite{PW14} pseudo-arbitrages with respect to the market portfolio are characterized in terms of a property called {\it multiplicative cyclical monotonicity} (MCM). It is a variant of cyclical monotonicity in convex analysis (see \cite[Section 24]{R70}) and is equivalent to $c$-cyclical monotonicity in optimal transport for a special cost function. Intuitively, this property requires that the portfolio outperforms the market portfolio whenever the market weight goes through a cycle. It is natural to extend the definition as follow.

\begin{definition}[Relative multiplicative cyclical monotonicity - RMCM]
Let $\pi$ and $\tau$ be portfolios. We say that $\tau$ satisfies multiplicative cyclical monotonicity relative to $\pi$ if over any discrete cycle
\[
\mu(0), \mu(1), ..., \mu(m), \mu(m+1) = \mu(0)
\]
in $\Delta^{(n)}$, we have
\begin{equation} \label{eqn:rmcm}
V_{\tau}(m + 1) \geq V_{\pi}(m + 1).
\end{equation}
\end{definition}

In \cite{PW14} we proved that functionally generated portfolios are characterized by the MCM property relative to the market portfolio.

\begin{proposition} \cite[Proposition 4]{PW14} \label{prop:MCMfgp}
A portfolio satisfies MCM relative to the market portfolio if and only if it is generated by a positive concave function.
\end{proposition}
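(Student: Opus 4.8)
The proposition is an equivalence, and the first thing I would do is unwind what "MCM relative to the market" means when the benchmark in the RMCM definition is specialized to the market portfolio $\mu$. Since the market portfolio has weight ratio $(1,\dots,1)$, which is orthogonal to every tangent vector $\mu(t+1)-\mu(t)$, we have $V_\mu\equiv 1$. Hence, using \eqref{eqn:relativevalue} and writing $\theta(p)=\pi(p)/p$ for the weight ratio of the portfolio $\pi$, the condition becomes the statement that for every discrete cycle $p_0,p_1,\dots,p_m,p_{m+1}=p_0$ in $\Delta^{(n)}$,
\[
\prod_{k=0}^{m}\left(1 + \left\langle \theta(p_k), p_{k+1} - p_k\right\rangle\right) \;=\; V_\pi(m+1) \;\geq\; 1.
\]
This is the multiplicative analogue of cyclical monotonicity, and the plan is to mirror Rockafellar's recovery of a convex potential from a cyclically monotone operator (\cite[Section 24]{R70}), but carried out multiplicatively so as to produce a genuinely concave $\Phi$ rather than merely a log-concave one.

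For the easy direction, suppose $\pi$ is generated by a positive concave $\Phi$. Then \eqref{eqn:superdiff} gives $1 + \langle \theta(p), q - p\rangle \geq \Phi(q)/\Phi(p) > 0$ for all $p,q$, so each one-step factor above is bounded below by a positive ratio. Multiplying these positive inequalities around the cycle and telescoping yields $V_\pi(m+1) \geq \prod_{k=0}^m \Phi(p_{k+1})/\Phi(p_k) = \Phi(p_0)/\Phi(p_0) = 1$, which is exactly MCM relative to the market.

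For the converse, I would fix a base point $p_* \in \Delta^{(n)}$ and define, for each $q \in \Delta^{(n)}$,
\[
\Phi(q) \;=\; \inf \prod_{k=0}^{m}\left(1 + \left\langle \theta(p_k), p_{k+1}-p_k\right\rangle\right),
\]
the infimum ranging over all finite chains $p_*=p_0,p_1,\dots,p_m,p_{m+1}=q$. Several things must then be verified. First, each factor is positive: since $\pi(p)\in\overline{\Delta^{(n)}}$ has coordinates summing to one, one computes $1 + \langle \theta(p), q - p\rangle = \sum_i \frac{\pi_i(p)}{p_i} q_i$, which is strictly positive because $q$ has all coordinates positive. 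Second, $\Phi(q)$ is finite above (test with the one-step chain $p_*\to q$) and, crucially, strictly positive: closing any chain from $p_*$ to $q$ by the extra step $q\to p_*$ produces a cycle, so the MCM hypothesis forces the chain product to be at least $1/(1 + \langle \theta(q), p_* - q\rangle)>0$. Third, extending a near-optimal chain reaching $p$ by the single step $p\to q$ gives $\Phi(q) \leq \Phi(p)\,(1 + \langle \theta(p), q-p\rangle)$, which is precisely the generating inequality \eqref{eqn:superdiff}.

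The final and most delicate point is concavity of $\Phi$, which I expect to be the main obstacle and is exactly where the multiplicative formulation pays off. Any chain terminating with penultimate point $x$ and final step $x\to q$ has product equal to a constant (depending only on the prefix up to $x$) times the quantity $1 + \langle \theta(x), q - x\rangle$, which is \emph{affine} in $q$. Thus, as a function of $q$, every admissible chain product is affine, and $\Phi$ is an infimum of affine functions, hence concave. Being finite and real-valued on the open convex set $\Delta^{(n)}$, it is automatically continuous, and together with positivity and the generating inequality this exhibits $\pi$ as generated by a positive concave function. I would emphasize that building a potential additively at the level of $\log\Phi$ (the literal Rockafellar construction applied to the increments $\log(1+\langle\theta(p),q-p\rangle)$) would yield only log-concavity of $\Phi$, which is strictly weaker than the concavity demanded by Definition \ref{def:fgp}; working multiplicatively and exploiting that $1+\langle\theta(x),\cdot\rangle$ is affine is precisely what upgrades the conclusion to the required concavity.
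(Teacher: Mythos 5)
Your proof is correct and follows essentially the same route as the cited proof of \cite[Proposition 4]{PW14}: telescoping the inequality \eqref{eqn:superdiff} around a cycle for the easy direction, and for the converse the multiplicative Rockafellar-type construction $\Phi(q)=\inf\prod_k\bigl(1+\langle\pi(p_k)/p_k,\,p_{k+1}-p_k\rangle\bigr)$ over chains from a base point, with positivity forced by MCM and concavity obtained because each chain product is affine in the terminal point $q$. Nothing further is needed.
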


For an arbitrary functionally generated benchmark portfolio, we can generalize Proposition \ref{prop:MCMfgp} as follow. This result provides equivalent formulations of the partial order $\succeq$ that are easier to work with. The proof is analogous to those of Proposition 4 and Theorem 1 of \cite{PW14}. 

\begin{theorem} \label{prop:MCM}
Let $\pi$ be a portfolio generated by a concave function $\Phi: \Delta^{(n)} \rightarrow (0, \infty)$, and let $\tau$ be a portfolio. The following statements are equivalent.
\begin{enumerate}
\item[(i)] $\tau$ dominates $\pi$ on compacts, i.e., $\tau \succeq \pi$.
\item[(ii)] $\tau$ satisfies MCM relative to $\pi$.
\item[(iii)] $\tau$ is generated by a concave function $\Psi$, and the L-divergence $T_{\tau}\left(\cdot\mid \cdot\right)$ of $(\tau, \Psi)$ dominates $T_{\pi}\left(\cdot\mid \cdot\right)$ of $(\pi, \Phi)$ in the sense that
\begin{equation} \label{eq:divergenceineq}
T_{\tau}\left(q\mid p\right) \geq T_{\pi}\left(q\mid p\right)
\end{equation}
for all $p, q \in \Delta^{(n)}$.
\end{enumerate}
\end{theorem}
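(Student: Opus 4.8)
The plan is to prove the cycle of implications (iii) $\Rightarrow$ (i) $\Rightarrow$ (ii) $\Rightarrow$ (iii), following the strategy of Proposition 4 and Theorem 1 of \cite{PW14} but carrying the benchmark $\pi$ through the argument. For (iii) $\Rightarrow$ (i), I would apply Fernholz's decomposition (Lemma \ref{lem:FernholzDecomp}) to both $\tau$ (generated by $\Psi$) and $\pi$ (generated by $\Phi$) and subtract:
\[
\log \frac{V_{\tau}(t)}{V_{\pi}(t)} = \left(\log\frac{\Psi(\mu(t))}{\Psi(\mu(0))} - \log\frac{\Phi(\mu(t))}{\Phi(\mu(0))}\right) + \sum_{k=0}^{t-1}\left(T_{\tau}\left(\mu(k+1)\mid\mu(k)\right) - T_{\pi}\left(\mu(k+1)\mid\mu(k)\right)\right).
\]
The drift difference is nonnegative term by term by \eqref{eq:divergenceineq}, while on a compact $K$ the bracketed potential term is bounded in absolute value by $\osc_K(\log\Psi) + \osc_K(\log\Phi)$, since a positive concave function is continuous on $\Delta^{(n)}$. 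This gives \eqref{eqn:lowerbound} with $C = \osc_K(\log\Psi) + \osc_K(\log\Phi)$.

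For (i) $\Rightarrow$ (ii) I would use a periodicity trick. Given a cycle $\mu(0), \ldots, \mu(m), \mu(m+1) = \mu(0)$, extend it to a path of period $m+1$ taking values in the finite (hence compact) set $K = \{\mu(0), \ldots, \mu(m)\}$. Because the one-step factors in \eqref{eqn:relativevalue} depend only on the current and next weights, the log-ratio $\log(V_{\tau}/V_{\pi})$ is additive along the path and periodic, so $\log(V_{\tau}(N(m+1))/V_{\pi}(N(m+1))) = N\log(V_{\tau}(m+1)/V_{\pi}(m+1))$. Domination on compacts bounds the left side below by $-C$ for every $N$, which forces $\log(V_{\tau}(m+1)/V_{\pi}(m+1)) \geq 0$, i.e.\ \eqref{eqn:rmcm}.

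The substance is in (ii) $\Rightarrow$ (iii), which I carry out in two stages. First, since $T_{\pi} \geq 0$ and $\log\Phi$ telescopes around a cycle, Fernholz's decomposition gives $V_{\pi}(m+1) = \exp\big(\sum_k T_{\pi}(\mu(k+1)\mid\mu(k))\big) \geq 1$ over any cycle; hence MCM relative to $\pi$ implies MCM relative to the market portfolio, and Proposition \ref{prop:MCMfgp} produces a positive concave $\Psi$ generating $\tau$. Now that $\tau$ is functionally generated, applying Fernholz's decomposition to both portfolios along a cycle telescopes the $\log\Psi$ and $\log\Phi$ contributions away, so the single inequality \eqref{eqn:rmcm} becomes $\sum_k\big(T_{\tau}(\mu(k+1)\mid\mu(k)) - T_{\pi}(\mu(k+1)\mid\mu(k))\big) \geq 0$. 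It remains to upgrade this to the pointwise bound $T_{\tau}(q\mid p) \geq T_{\pi}(q\mid p)$.

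To isolate a single jump, I would test the last inequality on the cycle that goes $p \to q$ in one step and returns $q \to p$ along the straight segment subdivided into $N$ equal pieces $q = z_0, z_1, \ldots, z_N = p$. This leaves $T_{\tau}(q\mid p) - T_{\pi}(q\mid p) \geq \sum_{j}\big(T_{\pi}(z_{j+1}\mid z_j) - T_{\tau}(z_{j+1}\mid z_j)\big) \geq -\sum_{j} T_{\tau}(z_{j+1}\mid z_j)$, using $T_{\pi}\geq 0$. The crux, and what I expect to be the main obstacle, is the estimate that for any functionally generated $(\sigma,\Sigma)$ the L-divergence accumulated along such a refinement tends to zero. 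This follows from the bound $0 \leq T_{\sigma}(z_{j+1}\mid z_j) \leq \langle v_j, z_{j+1}-z_j\rangle - (\log\Sigma(z_{j+1}) - \log\Sigma(z_j))$ with $v_j \in \partial\log\Sigma(z_j)$ (combining $\log(1+x)\le x$ with $\langle\sigma(z_j)/z_j,\,\cdot\,\rangle = \langle v_j,\,\cdot\,\rangle$ on $T\Delta^{(n)}$ from Lemma \ref{lem:superdiff}), together with the convergence of the Riemann sum $\frac{1}{N}\sum_j \langle v_j, p-q\rangle \to \log\Sigma(p) - \log\Sigma(q)$. The latter is valid because the restriction of $\log\Sigma$ to the segment is concave, so its one-sided directional derivatives are monotone and Riemann integrable; crucially this requires only concavity of the generating function and no smoothness. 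Letting $N\to\infty$ then yields $T_{\tau}(q\mid p) - T_{\pi}(q\mid p) \geq 0$, completing (iii).
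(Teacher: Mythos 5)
Your proposal is correct and follows essentially the same route as the paper: the same cycle of implications (merely entered at a different point), the same Fernholz-decomposition argument for (iii)~$\Rightarrow$~(i), the same cycle-repetition argument for (i)~$\Rightarrow$~(ii), and the same reduction to MCM relative to the market followed by a limit along refining partitions of the segment $[q,p]$ for (ii)~$\Rightarrow$~(iii). Your treatment of the limit step --- bounding $T_{\tau}(z_{j+1}\mid z_j)$ above via $\log(1+x)\le x$ and the monotonicity of one-sided derivatives of the concave restriction --- is a slightly more explicit rendering of what the paper compresses into ``the fundamental theorem of calculus for concave functions and Taylor approximation,'' but it is not a different argument.
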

\begin{proof}
(i) $\Rightarrow$ (ii): Suppose $\tau$ dominates $\pi$ on compacts. If $\tau$ does not satisfy MCM relative to $\pi$, we can find a discrete cycle $\{\mu(t)\}_{t = 0}^{m + 1}$ such that $\eta := V_{\tau}(m + 1) / V_{\pi}(m + 1) < 1$. Consider the market weight sequence which goes over this cycle again and again, i.e., $\mu(t) = \mu(t + (m+1))$ for all $t$. Then
\[
\frac{V_{\tau}(k(m + 1))}{V_{\pi}(k(m + 1))} = \eta^k
\]
for all $k \geq 0$ and the ratio tends to $0$ as $k \rightarrow \infty$. This contradicts the hypothesis $\tau \succeq \pi$. Thus if $\tau$ dominates $\pi$ on compacts then $\tau$ satisfies MCM relative to $\pi$. 

\medskip
(ii) $\Rightarrow$ (iii): Suppose $\tau$ satisfies MCM relative to $\pi$. Since $V_{\mu}(\cdot) \equiv 1$ and $\pi$ satisfies MCM relative to the market portfolio (by Proposition \ref{prop:MCMfgp}), $\tau$ satisfies MCM relative to the market portfolio as well. By Proposition \ref{prop:MCMfgp} again $\tau$ has a generating function $\Psi$. To prove \eqref{eq:divergenceineq}, let $p, q \in \Delta^{(n)}$ with $p \neq q$. Let $\{q = \mu(1), ..., \mu(m), \mu(m+1) = p\}$ be a partition of the line segment $[q, p]$. Then if $\mu(0) = p$, $\{\mu(k)\}_{k = 0}^{m+1}$ is a cycle which starts at $p$, jumps to $q$ and then returns to $p$ along the partition. Then the RMCM inequality \eqref{eqn:rmcm} implies
\begin{equation} \label{eq:MCM}
\begin{split}
& \left( 1 + \left\langle \frac{\tau(p)}{p}, q - p \right\rangle \right) \prod_{k = 1}^m \left( 1 + \left\langle \frac{\tau(\mu(k))}{\mu(k)}, \mu(k+1) - \mu(k) \right\rangle \right) \\
& \geq \left( 1 + \left\langle \frac{\pi(p)}{p}, q - p \right\rangle \right) \prod_{k = 1}^m \left( 1 + \left\langle \frac{\pi(\mu(k))}{\mu(k)}, \mu(k+1) - \mu(k) \right\rangle \right).
\end{split}
\end{equation}
Taking log on both sides, we have
\begin{equation*}
\begin{split}
& \log\left( 1 + \left\langle \frac{\tau(p)}{p}, q - p \right\rangle \right) + \sum_{k = 1}^m \log \left( 1 + \left\langle \frac{\tau(\mu(k))}{\mu(k)}, \mu(k+1) - \mu(k) \right\rangle \right) \\
& \geq \log \left( 1 + \left\langle \frac{\pi(p)}{p}, q - p \right\rangle \right) + \sum_{k = 1}^m \log \left( 1 + \left\langle \frac{\pi(\mu(k))}{\mu(k)}, \mu(k+1) - \mu(k) \right\rangle \right).
\end{split}
\end{equation*}
By the fundamental theorem of calculus for concave function and Taylor approximation, we can choose a sequence of partitions with mesh size going to zero, along which
\begin{equation*}
\begin{split}
\sum_{k = 1}^m \log \left( 1 + \left\langle \frac{\pi(\mu(k))}{\mu(k)}, \mu(k+1) - \mu(k) \right\rangle \right) &\rightarrow \int_{\gamma} \frac{\pi}{\mu} \mathrm{d}\mu = \log \frac{\Phi(p)}{\Phi(q)},\\
\sum_{k = 1}^m \log \left( 1 + \left\langle \frac{\tau(\mu(k))}{\mu(k)}, \mu(k+1) - \mu(k) \right\rangle \right)&\rightarrow \int_{\gamma} \frac{\tau}{\mu} \mathrm{d}\mu = \log \frac{\Psi(p)}{\Psi(q)},
\end{split}
\end{equation*}
where $\gamma$ is the line segment from $q$ to $p$. Taking the corresponding limit in \eqref{eq:MCM}, we obtain the desired inequality \eqref{eq:divergenceineq}. 

\medskip
(iii) $\Rightarrow$ (i): Let $\{\mu(t)\}_{t \geq 0}$ be any market weight sequence. By Lemma \ref{lem:FernholzDecomp} we can write
\[
\log \frac{V_{\tau}(t)}{V_{\pi}(t)} = \log \frac{\Psi(\mu(t)) / \Psi(\mu(0))}{\Phi(\mu(t)) / \Phi(\mu(0))} + \left(A_{\tau}(t) - A_{\pi}(t)\right),
\]
where $A_{\tau}$ and $A_{\pi}$ are the drift processes of $\tau$ and $\pi$ respectively. By (iii), $A_{\tau}(t) - A_{\pi}(t)$ is non-decreasing in $t$. Since $\log \frac{\Psi(\mu(t)) / \Psi(\mu(0))}{\Phi(\mu(t)) / \Phi(\mu(0))}$ is bounded as long as $\mu(t)$ stays within a compact subset of $\Delta^{(n)}$, $\tau$ dominates $\pi$ on compacts.
\qed\end{proof}

Theorem \ref{prop:MCM} reduces the study of the partial order $\tau \succeq \pi$ to comparing the relative concavities of generating functions, where concavity is measured by the $L$-divergence. In this paper we focus on generating functions that are twice continuously differentiable. Then the infinitesimal version of \eqref{eq:divergenceineq} leads to second order differential inequalities. 

\begin{definition}[Drift quadratic form] \label{def:driftform}
Let $(\pi, \Phi) \in {\mathcal{FG}}^2$. Its drift quadratic form, denoted by both $H_{\pi}$ and $H_{\Phi}$, is defined by
\[
H_{\pi}(p)(v, v) := \frac{-1}{2\Phi(p)} \Hess \Phi(p)(v, v), \quad p \in \Delta^{(n)}, v \in T\Delta^{(n)}.
\]
Here $\Hess \Phi$ is the Hessian of $\Phi$ regarded as a quadratic form. By definition, it is given by
\begin{equation}  \label{eqn:Hessian}
\Hess \Phi(p)(v, v) = \left.\frac{\mathrm{d}^2}{\mathrm{d}t^2} \Phi(p + tv)  \right|_{t = 0}.
\end{equation}
\end{definition}

\begin{lemma} \label{lem:drift}
Let $(\pi, \Phi), (\tau, \Psi) \in {\mathcal{FG}}^2$, and let $T_{\pi}$ and $T_{\tau}$ be their corresponding L-divergences. If $\tau \succeq \pi$ and therefore $T_{\tau}\left(q \mid p\right) \geq T_{\pi}\left(q \mid p\right)$ for all $p, q \in \Delta^{(n)}$, then $H_{\tau} \geq H_{\pi}$ in the sense that
\begin{equation} \label{eqn:driftineq}
H_{\tau}(p)(v, v) \geq  H_{\pi}(p)(v, v)
\end{equation}
for all $p \in \Delta^{(n)}$ and $v \in T\Delta^{(n)}$.
\end{lemma}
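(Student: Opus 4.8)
The plan is to show that the drift quadratic form $H_\pi(p)(v,v)$ is precisely the leading second-order coefficient in the Taylor expansion of the L-divergence $T_\pi(p + tv \mid p)$ as $t \to 0$, and then to read off the inequality \eqref{eqn:driftineq} from the pointwise domination $T_\tau \geq T_\pi$ by dividing by $t^2$ and letting $t \to 0$.

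First I would fix $p \in \Delta^{(n)}$ and $v \in T\Delta^{(n)}$; since $v$ is tangent ($\sum_i v_i = 0$) and $p$ lies in the open simplex, the point $q = p + tv$ belongs to $\Delta^{(n)}$ for all sufficiently small $|t|$. Writing $F(t) = \Phi(p + tv)$, I would expand each of the two terms in \eqref{eqn:discreteenergy}. For the first term I need the identity $\langle \pi(p)/p, v\rangle = D_v \log\Phi(p)$: using the explicit $C^2$ weight-ratio formula $\pi_i(p)/p_i = 1 + D_{e(i)-p}\log\Phi(p)$ from \eqref{eqn:fgweight} together with $\sum_i v_i = 0$, the constant $1$'s contribute nothing and the remaining sum collapses to the directional derivative $D_v \log\Phi(p) = F'(0)/F(0)$. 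Setting $a := F'(0)/F(0)$, the first term expands as $\log(1 + ta) = ta - \tfrac{1}{2} t^2 a^2 + O(t^3)$, while the second term is $\log F(t) - \log F(0) = ta + \tfrac{1}{2} t^2 (\log F)''(0) + O(t^3)$.

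Subtracting, the first-order terms cancel and I am left with
\[
T_\pi(p + tv \mid p) = -\tfrac{1}{2}t^2\bigl(a^2 + (\log F)''(0)\bigr) + O(t^3).
\]
Using $(\log F)''(0) = F''(0)/F(0) - \bigl(F'(0)/F(0)\bigr)^2 = F''(0)/F(0) - a^2$, the bracket simplifies to $F''(0)/F(0)$; and since $F''(0) = \Hess\Phi(p)(v,v)$ by \eqref{eqn:Hessian} and $F(0) = \Phi(p)$, this yields
\[
T_\pi(p + tv \mid p) = t^2 \cdot \frac{-1}{2\Phi(p)}\Hess\Phi(p)(v,v) + O(t^3) = t^2 H_\pi(p)(v,v) + O(t^3),
\]
so that $H_\pi(p)(v,v) = \lim_{t \to 0} t^{-2} T_\pi(p + tv \mid p)$. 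The identical computation applied to $(\tau, \Psi)$ gives $H_\tau(p)(v,v) = \lim_{t\to 0} t^{-2} T_\tau(p + tv \mid p)$.

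Finally I would invoke the hypothesis: for every small $t$ we have $T_\tau(p + tv \mid p) \geq T_\pi(p + tv \mid p)$, so dividing by $t^2 > 0$ and passing to the limit $t \to 0$ gives \eqref{eqn:driftineq}. The only delicate point is the algebra of the two-term Taylor expansion, and in particular the first-order cancellation, which hinges on correctly identifying $\langle \pi(p)/p, v\rangle$ with $D_v\log\Phi(p)$; once that identity is in hand the rest is bookkeeping, and the $C^2$ hypothesis guarantees the $O(t^3)$ remainders are genuine so that the limits exist.
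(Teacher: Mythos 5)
Your proposal is correct and is essentially the paper's own argument: the paper simply cites the Taylor approximation $T_{\pi}(p+tv \mid p) = \frac{-1}{2\Phi(p)}\Hess\Phi(p)(tv,tv) + o(t^2)$ and you have supplied the computation behind it (including the key identity $\langle \pi(p)/p, v\rangle = D_v\log\Phi(p)$ and the first-order cancellation), then divided by $t^2$ and passed to the limit exactly as intended. One cosmetic point: under a $C^2$ hypothesis the remainder is only $o(t^2)$, not $O(t^3)$, but that is all the limiting argument needs.
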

\begin{proof}
The lemma follows immediately from the Taylor approximation
\begin{equation} \label{eqn:taylor}
T_{\pi} \left(p + tv \mid p \right) = \frac{-1}{2\Phi(p)}\Hess \Phi(p)(tv, tv) + o\left(t^2\right).
\end{equation}
where $p \in \Delta^{(n)}$, $v$ is a tangent vector, and $t \in {\Bbb R}$ is small.
\qed\end{proof}

As a consequence of Lemma \ref{lem:drift}, in order to show that a portfolio $\pi \in {\mathcal{FG}}^2$ is maximal in ${\mathcal{FG}}^2$, it is enough to show that its drift quadratic form $H_{\pi}$ is not dominated (in the sense of \eqref{eqn:driftineq}) by that of some other portfolio. This is the approach we use in Section \ref{sec:concavity} to prove Theorem \ref{thm:main}. Simple examples show, however, that $H_{\tau} \geq H_{\pi}$ does not imply $T_{\tau} \geq T_{\pi}$.

\begin{example}[Diversity-weighted portfolio]
For $0 < r < 1$, the diversity-weighted portfolio $\pi$ introduced at the beginning of this section is generated by the function
\[
\Phi(p) = \left(\sum_{j = 1}^n p_j^r\right)^{\frac{1}{r}}.
\]
It is easy to show that $\Phi$ is bounded below by $1$. Let $\tau$ be the portfolio generated by $\Psi := \Phi - 1$. Then it can be shown that $\tau \succeq \pi$. To see this, write the L-divergence \eqref{eqn:discreteenergy} in the form
\begin{equation} \label{eqn:divergence2}
T_{\pi}\left(q \mid p\right) = \log \frac{\Phi(p) + D_{q - p}\Phi(p)}{\Phi(q)}, \quad p, q \in \Delta^{(n)}.
\end{equation}
Then
\[
T_{\tau}\left(q \mid p\right) = \log \frac{\left(\Phi(p) - 1\right) + D_{q - p}\Phi(p)}{\Phi(q) - 1} \geq T_{\pi}\left(q \mid p\right).
\]
From \eqref{eqn:divergence2}, we can show that for a portfolio $(\pi, \Phi)$ to be maximal in ${\mathcal{FG}}^2$, it is necessary that the continuous extension of $\Phi$ to the closure $\overline{\Delta^{(n)}}$ (which exists by \cite[Theorem 10.3]{R70}) vanishes at all the vertices $e(1)$, ..., $e(n)$ (because otherwise we can subtract an affine function from $\Phi$ and make $T$ larger). However this condition is not sufficient for $\pi$ to be maximal in ${\mathcal{FG}}^2$.
\end{example}

\section{Relative concavity and maximal portfolios} \label{sec:concavity}
\subsection{Two asset case}
In this section we study the maximal portfolios in ${\mathcal{FG}}^2$ and prove Theorem \ref{thm:main}. To illustrate the ideas involved we first give a proof of the maximality of the equal-weighted portfolio for $n = 2$. This result is the starting point of this paper.

\begin{proposition} \label{prop:equalweight}
For $n = 2$, the equal-weighted portfolio $\pi \equiv \left(\frac{1}{2}, \frac{1}{2}\right)$ generated by the geometric mean $\Phi(p) = \sqrt{p_1p_2}$ is maximal in ${\mathcal{FG}}^2$.
\end{proposition}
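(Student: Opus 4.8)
My plan is to use Lemma \ref{lem:drift} to reduce maximality to a statement about drift quadratic forms, and then convert that into a one-dimensional differential inequality. Suppose $\tau \in {\mathcal{FG}}^2$ is generated by a $C^2$ concave function $\Psi$ and $\tau \succeq \pi$. By Theorem \ref{prop:MCM} and Lemma \ref{lem:drift} this forces $H_\tau \geq H_\pi$, i.e. $-\frac{1}{2\Psi}\Hess \Psi \geq -\frac{1}{2\Phi}\Hess \Phi$ as quadratic forms on $T\Delta^{(2)}$. For $n = 2$ the tangent space $T\Delta^{(2)}$ is one-dimensional, spanned by $v = e(1) - e(2) = (1, -1)$, so the comparison of quadratic forms collapses to a single scalar inequality at each point. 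Parametrizing the segment by $p = (x, 1-x)$ for $x \in (0,1)$, I would write $\phi(x) := \Phi(x, 1-x) = \sqrt{x(1-x)}$ and $\psi(x) := \Psi(x, 1-x)$, so that the inequality $H_\tau \geq H_\pi$ becomes, after computing the second derivative along $v$,
\[
\frac{-\psi''(x)}{\psi(x)} \geq \frac{-\phi''(x)}{\phi(x)}, \quad x \in (0,1),
\]
where $\psi, \phi > 0$ are concave (so $\psi'', \phi'' \leq 0$). The goal is to conclude $\psi = c\,\phi$ for some constant $c > 0$, which by uniqueness of the generating function (up to scaling) gives $\tau = \pi$.

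**Turning the inequality into a Sturm-type comparison.**
The inequality above is exactly of the form $-\psi''/\psi \geq -\phi''/\phi$, which is the kind of comparison controlled by Sturm's oscillation/comparison theory for the equations $u'' + Qu = 0$. Writing $Q_\phi = -\phi''/\phi$ and $Q_\psi = -\psi''/\psi$, we have two positive solutions $\phi, \psi$ of $u'' + Q_\phi u = 0$ and $u'' + Q_\psi u = 0$ respectively, with $Q_\psi \geq Q_\phi$ pointwise. The natural object to track is the Wronskian-type quantity $W(x) = \phi(x)\psi'(x) - \phi'(x)\psi(x)$, whose derivative is
\[
W'(x) = \phi\psi'' - \phi''\psi = \phi\psi\left(\frac{\psi''}{\psi} - \frac{\phi''}{\phi}\right) = -\phi\psi\,(Q_\psi - Q_\phi) \leq 0,
\]
since $\phi, \psi > 0$ and $Q_\psi \geq Q_\phi$. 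Thus $W$ is non-increasing. The plan is to extract a contradiction from monotonicity of $W$ together with boundary behavior of $\phi$ and $\psi$ at the endpoints $x = 0, 1$, unless $W \equiv 0$, in which case $(\psi/\phi)' = W/\phi^2 = 0$ gives $\psi = c\phi$ and hence $\tau = \pi$.

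**Exploiting the boundary behavior via the integral condition.**
The crux is the endpoint analysis, and this is where the integral condition \eqref{eqn:integralcondition} enters: for $\Phi = \sqrt{p_1p_2}$ the integrand $1/\phi^2 = 1/(x(1-x))$ indeed has a divergent integral near the endpoints, matching the hypothesis of Theorem \ref{thm:main}. Since $\psi$ is a positive concave function on $(0,1)$, it extends continuously and finitely to $[0,1]$ with finite values at the endpoints, whereas $\phi \to 0$ at $x = 0, 1$. If $W \not\equiv 0$, monotonicity forces $W$ to have a fixed sign on some subinterval near an endpoint; I would then integrate the identity $(\psi/\phi)' = W/\phi^2$ and show that the divergence of $\int 1/\phi^2$ drives $\psi/\phi$ to $\pm\infty$, contradicting that $\psi$ stays finite while $\phi \to 0$ (so $\psi/\phi$ is bounded below by $0$ but its behavior is pinned by the integral). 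Concretely, if $W(x_0) < 0$ at some interior $x_0$, then $W(x) \leq W(x_0) < 0$ for $x \geq x_0$, and integrating $(\psi/\phi)'(x) = W(x)/\phi(x)^2 \leq W(x_0)/\phi(x)^2$ up to $x \to 1$ sends $\psi/\phi \to -\infty$ by \eqref{eqn:integralcondition}, which is impossible since $\psi/\phi > 0$. A symmetric argument near $x = 0$ rules out $W(x_0) > 0$. Hence $W \equiv 0$ and $\tau = \pi$.

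**Expected main obstacle.**
The delicate point, and the one I would spend the most care on, is the boundary/regularity bookkeeping: justifying that $\psi$ extends finitely to the closed interval (via \cite[Theorem 10.3]{R70} as used in the diversity-weighted example), controlling the sign and integrability of $W/\phi^2$ all the way to the endpoints, and ensuring the limiting argument is not spoiled by $\psi$ itself tending to $0$ at an endpoint. One must check that $\psi \to 0$ at a vertex does not occur fast enough to cancel the divergence of $1/\phi^2$ — but since $\phi \sim \sqrt{x}$ near $0$ while a concave $\psi$ can vanish at most linearly, $\psi/\phi^2 \gtrsim 1/\sqrt{x}$ still forces the requisite blow-up, so the argument survives. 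Making this quantitative comparison of vanishing rates rigorous is the real content behind the clean Sturm-comparison heuristic.
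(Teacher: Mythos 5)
Your proof is correct, but it takes a genuinely different route from the paper's. The paper, after arriving at the same differential inequality $-v''/v \geq -u''/u = \tfrac{1}{4}(x(1-x))^{-2}$, changes variables via $y = \log\tfrac{x}{1-x}$ and recasts everything as a Riccati-type inequality $q(1-q) - q' \geq \tfrac14$ for the portfolio weight function $q$ on all of $\mathbb{R}$, which is then killed by the elementary ODE argument of Lemma \ref{lem:diffeqn}. You instead stay at the level of the generating functions and run a Sturm--Wronskian comparison: $W = \phi\psi' - \phi'\psi$ is non-increasing, $(\psi/\phi)' = W/\phi^2$, and any nonzero value of $W$ forces $\psi/\phi \to -\infty$ at an endpoint because $\int_0^1 \frac{dx}{x(1-x)}$ diverges, contradicting positivity; hence $W \equiv 0$ and $\Psi = c\Phi$. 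This is essentially the mechanism of the relative concavity lemma (Lemma \ref{lem:relativeconcavity}) that the paper deploys only later, in the proof of the general Theorem \ref{thm:main} --- indeed your reparametrizing integral $\int \phi^{-2}$ is exactly the paper's $y = \log\tfrac{x}{1-x}$ in disguise, and it is exactly the integral condition \eqref{eqn:integralcondition}. So your argument buys a cleaner path to the $n$-asset generalization, while the paper's Riccati formulation \eqref{eqn:transformeddrift} has a more transparent portfolio interpretation (one must buy the underperforming stock at an unsustainable rate). One remark: the ``main obstacle'' you flag at the end is not actually an obstacle. Your contradiction reads $\psi/\phi \to -\infty$ versus $\psi/\phi > 0$, and the divergent lower bound $|W(x_0)|\int \phi^{-2}$ involves only $\phi$; no control on the rate at which $\psi$ may vanish at the boundary is needed, so the quantitative comparison of vanishing rates you worry about never has to be made.
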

\begin{proof}
Let $(\tau, \Psi) \in {\mathcal{FG}}^2$ be a portfolio which dominates $(\pi, \Phi)$ on compacts. Define $u(x) = \Phi(x, 1 - x) = \sqrt{x(1 - x)}$ and let $v(x) = \Psi(x, 1 - x)$, $x \in (0, 1)$. Then $u$ and $v$ are positive $C^2$ concave functions on $(0, 1)$. By Theorem \ref{prop:MCM} and Lemma \ref{lem:drift}, the drift quadratic form of $\tau$ dominates that of $\pi$. Using \eqref{eqn:Hessian}, we have the differential inequality
\begin{equation} \label{eqn:n=2domination}
\frac{-v''(x)}{v(x)} \geq \frac{-u''(x)}{u(x)} = \frac{1}{4\left(x(1 - x)\right)^2}, \quad x \in (0, 1).
\end{equation}
We claim that $v$ also generates the equal-weighted portfolio, and so $\tau = \pi$.

We will use a transformation which amounts to a change of num\'{e}raire using $y = \log \frac{x}{1 - x}$. See the binary tree model in \cite[Section 4]{PW13} for the motivation of this transformation and related results. Define a function $\tau_1: (0, 1) \rightarrow [0, 1]$ by
\begin{equation} \label{eqn:n=2weight}
\tau_1(x) = x + x(1 - x) \frac{v'(x)}{v(x)} = x \left[1 + (1 - x) (\log v)'(x)\right].
\end{equation}
By \eqref{eqn:fgweight}, this is the portfolio weight of stock $1$ generated by $v$ and $\tau_1$ takes value in $[0, 1]$. Let $y = \log \frac{x}{1 - x}$, so $x = \frac{e^y}{1 + e^y}$. Define $q: {\Bbb R} \rightarrow [0, 1]$ by
\[
q(y) = \tau_1(x) = \frac{e^y}{1 + e^y} + \frac{e^y}{(1 + e^y)^2} \frac{v'(x)}{v(x)}, \quad x = \frac{e^y}{1 + e^y}, \quad y \in {\Bbb R}.
\]
For the equal-weighted portfolio the corresponding portfolio weight function is identically $\frac{1}{2}$. It follows from a straightforward computation that
\[
q(y)(1 - q(y)) - q'(y) = \frac{-e^{2y}}{(1 + e^y)^4} \frac{v''(x)}{v(x)}.
\]
Now \eqref{eqn:n=2domination} can be rewritten in the form
\begin{equation} \label{eqn:transformeddrift}
q(y)(1 - q(y)) - q'(y) \geq \frac{1}{4}, \quad y \in {\Bbb R}.
\end{equation}
The proof is then completed by the following elementary result.
\qed\end{proof}

\begin{lemma} \label{lem:diffeqn}
Suppose $q: {\Bbb R} \rightarrow [0, 1]$ is differentiable and $q(1 - q) - q' \geq 1/4$ on ${\Bbb R}$. Then $q \equiv 1/2$.
\end{lemma}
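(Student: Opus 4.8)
The plan is to read $q(1-q)-q'\ge 1/4$ as a Riccati-type differential inequality and then exploit the finite-time blow-up of the associated autonomous ODE. First I would observe that $q(1-q)\le 1/4$ for every $q\in[0,1]$, with equality only at $q=1/2$. The hypothesis then forces $q'\le q(1-q)-1/4\le 0$, so $q$ is non-increasing on $\mathbb{R}$. Writing $w:=q-1/2\in[-1/2,1/2]$, a direct computation gives $q(1-q)-1/4=-w^2$, so the assumption becomes the autonomous Riccati inequality $w'\le -w^2$, and $w$ inherits the monotonicity (non-increasing). The goal is thus reduced to showing $w\equiv 0$.

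Next I would argue by contradiction in two symmetric cases. Suppose $w(y_0)>0$ for some $y_0$. Since $w$ is non-increasing, $w(y)\ge w(y_0)>0$ for all $y\le y_0$, so $f:=1/w$ is well-defined, differentiable, and bounded below by $2$ on the half-line $(-\infty,y_0]$ (because $w\le 1/2$ there). Differentiating and using $w'\le -w^2$ yields $f'=-w'/w^2\ge 1$ throughout this half-line. Integrating from $y$ to $y_0$ gives $f(y)\le f(y_0)-(y_0-y)$, whose right-hand side tends to $-\infty$ as $y\to-\infty$; this contradicts $f\ge 2$. Hence $w\le 0$ everywhere.

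The case $w(y_0)<0$ is treated by the mirror-image argument run forward in time: monotonicity keeps $w$ strictly negative on $[y_0,\infty)$, the identical computation gives $f'\ge 1$ there, and integrating forward forces $f=1/w$ to rise above $0$ in finite time, contradicting $f<0$. Combining the two cases yields $w\equiv 0$, that is, $q\equiv 1/2$, as claimed.

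The step that needs the most care is the reciprocal substitution $f=1/w$: one must first invoke the monotonicity of $w$ to ensure that $w$ keeps a fixed sign on the relevant half-line, so that $f$ is defined and the one-sided inequality $w'\le -w^2$ translates into the clean linear estimate $f'\ge 1$. Once that sign control is in place, the remainder is just a comparison with the blow-up solution of $W'=-W^2$ and is entirely routine; no regularity beyond the assumed differentiability of $q$ is required.
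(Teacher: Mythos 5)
Your proof is correct and follows essentially the same route as the paper's: both first deduce from $q(1-q)\le \frac14$ that $q$ is non-increasing, then use this monotonicity to trap $q$ on one side of $\frac12$ along an infinite half-line, where integrating a differential inequality contradicts the boundedness of $q$. The only difference is cosmetic --- the paper freezes the decay rate at the constant $q_0(1-q_0)-\frac14<0$ and lets $q$ itself escape $[0,1]$ linearly, whereas you pass to $w=q-\frac12$, use the Riccati inequality $w'\le -w^2$, and let the reciprocal $f=1/w$ grow linearly via $f'\ge 1$; both comparisons are valid.
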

\begin{proof}
Since $0 \leq q(y) \leq 1$, we have
\[
q' \leq q(1 - q) - \frac{1}{4} \leq \frac{1}{4} - \frac{1}{4} = 0,
\]
so $q$ is non-increasing. If $q(y_0) = q_0 < \frac{1}{2}$ for some $y_0$, then on $y \in [y_0, \infty]$, $q$ must satisfy the differential inequality
\[
q'(y) \leq q_0(1 - q_0) - \frac{1}{4} < 0,
\]
which contradicts the fact that $q(y) \geq 0$. Similarly, if $q(y_0) = q_0 > \frac{1}{2}$ for some $y_0$, the same inequality is satisfied on $(-\infty, y_0]$, again a contradiction. Thus we get $q(y) \equiv \frac{1}{2}$ for all $y \in {\Bbb R}$.
\qed\end{proof}

The main idea of the proof of Proposition \ref{prop:equalweight} is that for a portfolio to dominate the equal-weighted portfolio $\pi$ on compacts, it must be more aggressive than $\pi$ {\it everywhere} on the simplex. This means buying more and more the underperforming stock at a sufficiently fast rate satisfying \eqref{eqn:transformeddrift}, but this is impossible to continue up to the boundary of the simplex. While there is a multi-dimensional analogue of the differential inequality \eqref{eqn:transformeddrift} (see \cite[Theorem 9]{PW14}), we are unable to extend this proof to the multi-asset case since the market and portfolio weights can move in many directions. Instead, we will work with portfolio generating functions and use the simple but powerful tools of convex analysis.

\subsection{Main result} \label{subsec:relative}
Before we give the proof of Theorem \ref{thm:main} we note that the integral condition \eqref{eqn:integralcondition} is sufficient to capture many important examples. The proof is an exercise in elementary calculus and is left to the reader.

\begin{lemma}
The following portfolios satisfy \eqref{eqn:integralcondition}.
\begin{enumerate}
\item[(i)] The equal-weighted portfolio $\pi \equiv \left(\frac{1}{n}, ..., \frac{1}{n}\right)$ generated by the geometric mean $\Phi(p) = \left(p_1 \cdots p_n\right)^{\frac{1}{n}}$.
\item[(ii)] The entropy-weighted portfolio $\pi_i = -(p_i \log p_i) / \Phi(p)$ generated by the Shannon entropy $\Phi(p) = -\sum_{j = 1}^n p_j \log p_j$.
\end{enumerate}
\end{lemma}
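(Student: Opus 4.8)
The plan is to parametrize the segment from the barycenter $\overline{e}$ to the vertex $e(1)$ and to show that in each case $\Phi$ vanishes fast enough near $e(1)$ that $1/\Phi^2$ fails to be integrable at the endpoint $t = 1$. Along this segment the first coordinate is $p_1(t) = \frac{1}{n} + t\,\frac{n-1}{n}$ and the remaining coordinates are $p_i(t) = \frac{1-t}{n}$ for $i \geq 2$. Writing $s = 1 - t$, the analysis near $t = 1$ becomes an analysis as $s \to 0^{+}$, where $p_1(t) \to 1$ stays bounded away from zero while each $p_i(t)$ with $i \geq 2$ is of order $s$. All that remains is elementary asymptotics.

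For part (i), I would substitute into $\Phi(p)^2 = (p_1 p_2 \cdots p_n)^{2/n}$ to obtain $\Phi(p(t))^2 = p_1(t)^{2/n}\big(\tfrac{s}{n}\big)^{2(n-1)/n}$. Since $p_1(t)^{2/n}$ is bounded above and below by positive constants near $s = 0$, the integrand $1/\Phi^2$ is comparable to $s^{-2(n-1)/n}$. The integral $\int_0 s^{-2(n-1)/n}\,\mathrm{d}s$ diverges precisely when $2(n-1)/n \geq 1$, i.e.\ when $n \geq 2$, which holds throughout; hence \eqref{eqn:integralcondition} is satisfied.

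For part (ii), I would substitute into $\Phi(p) = -\sum_{j} p_j \log p_j$ to get $\Phi(p(t)) = -p_1 \log p_1 - (n-1)\,\tfrac{s}{n}\log\tfrac{s}{n}$. As $s \to 0$ the term $-p_1\log p_1$ is of order $s$, while each of the $n-1$ boundary terms contributes $\tfrac{s}{n}\log\tfrac{1}{s} + O(s)$, so that $\Phi(p(t)) \sim \tfrac{n-1}{n}\,s\,\log\tfrac{1}{s}$. Consequently $1/\Phi^2$ is comparable near $s = 0$ to $1/\big(s^2(\log\tfrac{1}{s})^2\big)$. The one point that requires care here is the logarithmic factor: although $\log\tfrac{1}{s} \to \infty$, it is dominated by any positive power of $s$, so $s^2(\log\tfrac{1}{s})^2 \leq s$ for all small $s$ and therefore $1/\Phi^2 \geq 1/s$ there. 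Since $\int_0 \mathrm{d}s/s = \infty$, the integral again diverges.

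There is no serious obstacle, as both computations reduce to reading off the leading-order behavior of $\Phi$ at a vertex. The only place demanding attention is the entropy case, where the divergence sits only a logarithmic factor away from the borderline and so must be certified by the explicit comparison with $1/s$ rather than by extracting a power of $s$.
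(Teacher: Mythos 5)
Your proof is correct, and it is exactly the elementary-calculus computation the paper has in mind (the paper omits the proof, stating only that it is ``an exercise in elementary calculus''). Both asymptotics at the vertex are right --- $\Phi^2 \asymp s^{2(n-1)/n}$ with exponent $\geq 1$ in case (i), and $\Phi \sim \frac{n-1}{n}\, s\log\frac{1}{s}$ with the explicit comparison $1/\Phi^2 \geq 1/s$ in case (ii) --- so the divergence of \eqref{eqn:integralcondition} follows in each case.
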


The main ingredient of the proof of Theorem \ref{thm:main} is the following ingenious observation taken from \cite{CDO07} and \cite[Lemma 2]{CDOS09} (it is called the relative convexity lemma in these references). It can be proved by direct differentiation.

\begin{lemma}[Relative concavity lemma] \cite{CDO07} \label{lem:relativeconcavity}
Let $-\infty < a < b \leq \infty$ and $c, C: [a, b) \rightarrow {\Bbb R}$ be continuous. Suppose $u, v: [a, b) \rightarrow (0, \infty)$ are $C^2$ and satisfy the differential equations
\begin{equation*}
\begin{split}
u''(x) + c(x)u(x) &= 0, \quad x \in [a, b), \\
v''(x) + C(x)v(x) &= 0, \quad x \in [a, b).
\end{split}
\end{equation*}
Define $F: [a, b) \rightarrow [0, \infty)$ by
\[
F(x) = \int_a^x \frac{1}{u(t)^2} \mathrm{d}t, \quad x \in [a, b).
\]
Let $G$ be the inverse of $F$ defined on $[0, \ell)$, where $\ell = \lim_{x \uparrow b} F(x)$. Then the function
\[
w(y) := \frac{v(G(y))}{u(G(y))}
\]
defined on $[0, \ell)$ satisfies the differential equation
\[
w''(y) = -(C(x) - c(x))u(x)^4w(y), \quad 0 \leq y < \ell, \quad x = G(y).
\]
In particular, if $C(x) \geq c(x)$ on $[a, b)$, then $w$ is concave on $[0, \ell)$.
\end{lemma}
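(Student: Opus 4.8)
The plan is to verify the claimed differential equation for $w$ by direct differentiation, exactly as the statement suggests. The only input that is not pure bookkeeping is the behavior of the inverse function $G$, so I would record that first. Since $F'(x) = 1/u(x)^2 > 0$ on $[a,b)$, the function $F$ is strictly increasing and $C^2$, hence $G = F^{-1}$ is well-defined and $C^2$ on $[0,\ell)$, and the inverse function theorem gives $G'(y) = 1/F'(G(y)) = u(x)^2$ with $x = G(y)$. This single identity is what converts $y$-derivatives into $x$-derivatives and is the engine of the whole computation. I would also note at the outset that $w = v/u > 0$ since $u, v > 0$.

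First I would set $h(x) = v(x)/u(x)$, so that $w(y) = h(G(y))$, and compute $w'$ by the chain rule. Using $G'(y) = u(x)^2$ together with the quotient rule $h' = (v'u - vu')/u^2$, the factor $u^2$ from $G'$ cancels the denominator, leaving $w'(y) = v'(x)u(x) - v(x)u'(x)$, the Wronskian of $u$ and $v$. Recognizing this simplification is the key step; everything after it is mechanical.

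Next I would differentiate $w'$ once more in $y$, again inserting the factor $G'(y) = u(x)^2$. Differentiating the Wronskian in $x$ cancels the cross terms $v'u'$ and leaves $v''u - vu''$. Substituting the two ODEs $u'' = -cu$ and $v'' = -Cv$ collapses this to $-(C-c)uv$, so that $w''(y) = -(C(x)-c(x))\,u(x)^2\,u(x)\,v(x)$. Finally, rewriting $uv = u^2 w$ via $v = wu$ yields $w''(y) = -(C(x)-c(x))\,u(x)^4\,w(y)$, which is the asserted identity.

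For the last assertion I would simply read off the sign: since $w > 0$ and $u^4 > 0$, the hypothesis $C \geq c$ forces the right-hand side to be $\leq 0$, so $w'' \leq 0$ and $w$ is concave on $[0,\ell)$. I do not anticipate a genuine obstacle, as the result is one line of calculation per step; the one place to be careful is the consistent use of $G'(y) = u(x)^2$ at \emph{both} differentiation steps, since dropping or misplacing that factor is the natural way to end up with the wrong power of $u$.
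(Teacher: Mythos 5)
Your proposal is correct and follows exactly the route the paper indicates: the paper gives no written-out proof of this lemma, remarking only that ``it can be proved by direct differentiation,'' and your computation---using $G'(y) = u(x)^2$ to reduce $w'$ to the Wronskian $v'u - vu'$, differentiating once more, and substituting the two ODEs---is precisely that direct differentiation, carried out correctly.
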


We also need some convex analytic properties of functionally generated portfolios.

\begin{lemma} \label{lem:FGconvex}
Let $\pi^{(1)}, \pi^{(2)} \in {\mathcal{FG}}$ be generated by $\Phi^{(1)}$ and $\Phi^{(2)}$ respectively, and $\lambda \in [0, 1]$. Then the portfolio given by the weighted average
\[
\pi := \lambda \pi^{(1)} + (1 - \lambda) \pi^{(2)}
\]
belongs to ${\mathcal{FG}}$. Indeed, $\pi$ is generated by the geometric mean 
\[
\Phi := \left( \Phi^{(1)} \right)^{\lambda}  \left( \Phi^{(2)} \right)^{1 - \lambda}
\]
of the two generating functions.
\end{lemma}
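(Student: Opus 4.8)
The plan is to verify directly the three requirements of Definition \ref{def:fgp}: that $\pi$ is a genuine portfolio (i.e.\ maps into $\overline{\Delta^{(n)}}$), that $\Phi$ is positive and concave, and that the generating inequality \eqref{eqn:superdiff} holds for the pair $(\pi, \Phi)$. The first requirement is immediate: for each $p$, the vector $\pi(p) = \lambda \pi^{(1)}(p) + (1 - \lambda)\pi^{(2)}(p)$ is a convex combination of two points of $\overline{\Delta^{(n)}}$, and $\overline{\Delta^{(n)}}$ is convex, so $\pi(p) \in \overline{\Delta^{(n)}}$. Positivity of $\Phi$ is clear since $\Phi^{(1)}, \Phi^{(2)} > 0$.

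The main step is the generating inequality. Since the weight ratio and the Euclidean inner product are both linear, for any $p, q \in \Delta^{(n)}$ we have
\[
1 + \left\langle \frac{\pi(p)}{p}, q - p \right\rangle = \lambda A + (1 - \lambda) B, \quad A := 1 + \left\langle \frac{\pi^{(1)}(p)}{p}, q - p \right\rangle, \quad B := 1 + \left\langle \frac{\pi^{(2)}(p)}{p}, q - p \right\rangle.
\]
Because $\pi^{(1)}$ and $\pi^{(2)}$ are generated by $\Phi^{(1)}$ and $\Phi^{(2)}$, inequality \eqref{eqn:superdiff} gives $A \geq \Phi^{(1)}(q)/\Phi^{(1)}(p) > 0$ and $B \geq \Phi^{(2)}(q)/\Phi^{(2)}(p) > 0$; in particular $A, B > 0$, which is precisely what permits taking real powers in the next step. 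Applying the weighted arithmetic--geometric mean inequality, followed by the monotonicity of $x \mapsto x^{\lambda}$ and $x \mapsto x^{1 - \lambda}$ on $(0, \infty)$, yields
\[
\lambda A + (1 - \lambda) B \geq A^{\lambda} B^{1 - \lambda} \geq \left( \frac{\Phi^{(1)}(q)}{\Phi^{(1)}(p)} \right)^{\lambda} \left( \frac{\Phi^{(2)}(q)}{\Phi^{(2)}(p)} \right)^{1 - \lambda} = \frac{\Phi(q)}{\Phi(p)},
\]
which is exactly \eqref{eqn:superdiff} for the pair $(\pi, \Phi)$.

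It remains to record that $\Phi$ is concave. One clean route is to observe that the generating inequality just proved rearranges to $\Phi(q) \leq \Phi(p) + \langle \Phi(p)\pi(p)/p, q - p \rangle$ for all $q$, with equality at $q = p$; thus $\Phi$ is the pointwise infimum of the affine functions supporting it from above at each point, and is therefore concave. Alternatively, one can argue directly that the geometric mean $(\Phi^{(1)})^{\lambda}(\Phi^{(2)})^{1 - \lambda}$ of two positive concave functions is concave, using the concavity of each factor together with the fact that $(s, t) \mapsto s^{\lambda} t^{1 - \lambda}$ is concave and coordinatewise nondecreasing on the positive quadrant. I expect no serious obstacle here; the only points demanding care are confirming the strict positivity of $A$ and $B$ so that the fractional powers and the AM--GM step are legitimate, and deciding which of the two equivalent justifications of concavity to present.
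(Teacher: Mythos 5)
Your proposal is correct and follows the same route as the paper, which simply notes that the defining inequality \eqref{eqn:superdiff} for the pair $(\pi,\Phi)$ is an easy consequence of the AM--GM inequality and omits the details. You have filled in exactly those details (including the positivity of $A$ and $B$ and the concavity of the geometric mean, which the paper takes for granted), so there is nothing to correct.
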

\begin{proof}
For $C^2$ generating functions this result is stated in \cite[Page 50]{F02}. The same is true in the general case where the generating functions are not necessarily smooth. To prove this, we need to check that $\pi = \lambda \pi^{(1)} + (1 - \lambda) \pi^{(2)}$ satisfies the defining inequality \eqref{eqn:superdiff}. This is an easy consequence of the AM-GM inequality and the proof is omitted. \qed
\end{proof}

\begin{lemma} \label{lem:Driftconcave}
The L-divergence and the drift quadratic form are concave in the portfolio weights in the following sense. Let $(\pi^{(1)}, \Phi^{(1)}), (\pi^{(2)}, \Phi^{(2)}) \in {\mathcal{FG}}$. For $\lambda \in [0, 1]$, let $\pi = \lambda \pi^{(1)} + (1 - \lambda) \pi^{(2)}$ and let $\Phi = \left(\Phi^{(1)}\right)^{\lambda} \left( \Phi^{(2)} \right)^{1 - \lambda}$ be the generating function of $\pi$. Let $T$, $T^{(1)}$ and $T^{(2)}$ be the L-divergences of $(\pi, \Phi)$, $(\pi^{(1)}, \Phi^{(1)})$ and $(\pi^{(2)}, \Phi^{(2)})$ respectively. Then
\begin{equation} \label{eqn:concavediscrete}
T\left(q \mid p \right) \geq \lambda T^{(1)}\left(q \mid p \right) + (1 - \lambda) T^{(2)}\left(q \mid p \right), \quad p, q \in \Delta^{(n)}.
\end{equation}
If $\Phi^{(1)}$ and $\Phi^{(2)}$ are $C^2$, then $H_{\pi} \geq \lambda H_{\pi^{(1)}} + (1 - \lambda) H_{\pi^{(2)}}$ in the sense that
\begin{equation} \label{eqn:concavecont}
H_{\pi}(p)(v, v) \geq \lambda H_{\pi^{(1)}}(p)(v, v) + (1 - \lambda) H_{\pi^{(2)}}(p)(v, v)
\end{equation}
for all $p \in \Delta^{(n)}$ and $v \in T\Delta^{(n)}$.
\end{lemma}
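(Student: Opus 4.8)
The plan is to prove the discrete inequality \eqref{eqn:concavediscrete} first and then deduce the infinitesimal version \eqref{eqn:concavecont} by a second-order limit. The crucial structural observation is that the weight ratio depends \emph{affinely} on the portfolio: since $\pi = \lambda \pi^{(1)} + (1-\lambda)\pi^{(2)}$ and division by $p$ is componentwise, $\frac{\pi(p)}{p} = \lambda\frac{\pi^{(1)}(p)}{p} + (1-\lambda)\frac{\pi^{(2)}(p)}{p}$. Hence, using $\lambda + (1-\lambda) = 1$,
\[
1 + \left\langle \frac{\pi(p)}{p}, q-p\right\rangle = \lambda\left(1 + \left\langle \frac{\pi^{(1)}(p)}{p}, q-p\right\rangle\right) + (1-\lambda)\left(1 + \left\langle \frac{\pi^{(2)}(p)}{p}, q-p\right\rangle\right),
\]
an exact convex combination of two strictly positive quantities, each bounded below by $\Phi^{(j)}(q)/\Phi^{(j)}(p) > 0$ via \eqref{eqn:superdiff}.

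First I would apply Jensen's inequality for the concave function $\log$ to this convex combination, giving
\[
\log\left(1 + \left\langle \frac{\pi(p)}{p}, q-p\right\rangle\right) \geq \lambda\log\left(1 + \left\langle \frac{\pi^{(1)}(p)}{p}, q-p\right\rangle\right) + (1-\lambda)\log\left(1 + \left\langle \frac{\pi^{(2)}(p)}{p}, q-p\right\rangle\right).
\]
The remaining terms of the $L$-divergence \eqref{eqn:discreteenergy} recombine \emph{without loss}: taking logarithms in $\Phi = (\Phi^{(1)})^{\lambda}(\Phi^{(2)})^{1-\lambda}$ yields $\log\Phi = \lambda\log\Phi^{(1)} + (1-\lambda)\log\Phi^{(2)}$, so the term $-\log(\Phi(q)/\Phi(p))$ splits as the corresponding convex combination exactly. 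Adding the two contributions and regrouping by $\lambda$ and $(1-\lambda)$ produces precisely \eqref{eqn:concavediscrete}.

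For the quadratic-form statement \eqref{eqn:concavecont} I would set $q = p + tv$ in the already-established \eqref{eqn:concavediscrete}, divide by $t^2$, and let $t \to 0$. By the Taylor expansion \eqref{eqn:taylor}, each term satisfies $T(p+tv \mid p) = t^2 H_{\pi}(p)(v,v) + o(t^2)$ (and likewise for $T^{(1)}, T^{(2)}$), so the limit of \eqref{eqn:concavediscrete} divided by $t^2$ is exactly \eqref{eqn:concavecont}. Alternatively, a direct computation exposes the mechanism: from $\Hess\Phi/\Phi = \Hess\log\Phi + (\nabla\log\Phi)^{\otimes 2}$ one gets $H_{\pi}(p)(v,v) = -\tfrac12\left[\Hess\log\Phi(p)(v,v) + (D_v\log\Phi(p))^2\right]$; the Hessian-of-log term is linear in $\lambda$ and contributes the exact combination, while the squared directional-derivative term, carrying a minus sign, obeys $(\lambda a + (1-\lambda)b)^2 \leq \lambda a^2 + (1-\lambda)b^2$ by convexity of $x \mapsto x^2$, yielding the inequality in the correct direction.

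There is no serious obstacle: the whole argument rests on two elementary convexity facts, concavity of $\log$ and convexity of the square, applied to the affine dependence of $\pi/p$ and the additive dependence of $\log\Phi$ on $(\pi^{(j)}, \Phi^{(j)})$. The only point requiring care is verifying positivity of the combined quantities so that $\log$ may be applied, which is immediate from the defining inequality \eqref{eqn:superdiff}.
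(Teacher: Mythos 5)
Your proof is correct and follows essentially the same route as the paper: both arguments rest on the concavity of the logarithm applied to the affine dependence of $1 + \langle \pi(p)/p, q-p\rangle$ on $\pi$, together with the exact splitting of the $\log\Phi$ term (which the paper phrases via linearity of the line integral $I_{\pi}(\gamma)$ in $\pi$, and you obtain directly from $\log\Phi = \lambda\log\Phi^{(1)} + (1-\lambda)\log\Phi^{(2)}$), with the quadratic-form statement then deduced from the Taylor approximation \eqref{eqn:taylor} exactly as in the paper.
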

\begin{proof}
To prove \eqref{eqn:concavediscrete} we write the L-divergence $T\left(q \mid p \right)$ of a functionally generated portfolio $(\pi, \Phi)$ in the form
\[
T\left(q \mid p \right) = \log \left(1 + \left\langle \frac{\pi(p)}{p}, q - p \right\rangle\right) - I_{\pi}(\gamma),
\]
where $I_{\pi}(\gamma) = \int_{\gamma} \frac{\pi}{p} dp$ is the line integral of the weight ratio along the line segment from $p$ to $q$ (see \eqref{eqn:lineintegral}). Since the line integral is linear in $\pi$ and the logarithm is concave, we see that $T\left(q \mid p \right)$ is concave in $\pi$. The statement for the drift quadratic form follows from the Taylor approximation \eqref{eqn:taylor}.
\qed\end{proof}

We are now ready to prove Theorem \ref{thm:main}. 

\begin{proof}[Proof of Theorem \ref{thm:main}]
Let $\tau: \Delta^{(n)} \rightarrow \overline{\Delta^{(n)}}$ be a $C^1$ portfolio which dominates $\pi$ on compacts. We want to prove that $\tau = \pi$. By Theorem \ref{prop:MCM}, $\tau$ is generated by a concave function $\Psi: \Delta^{(n)} \rightarrow (0, \infty)$. Since $\tau$ is $C^1$, by \cite[Proposition 5(iii)]{PW14} $\Psi$ is $C^2$, so $\tau \in {\mathcal{FG}}^2$. Thus we may rephrase Theorem \ref{thm:main} by saying that $\pi$ is maximal in ${\mathcal{FG}}^2$.

Let $\Psi$ be a generating function of $\tau$. By scaling, we may assume that $\Psi(\overline{e}) = \Phi(\overline{e})$. We will prove that $\Psi$ equals $\Phi$ identically, so $\Psi$ generates $\pi$ and $\tau = \pi$. We divide the proof into the following steps.

\bigskip

\noindent
{\it Step 1 (Symmetrization).} Let $S_n$ be the set of permutations of $\{1, ..., n\}$. For $\sigma \in S_n$, define $\Psi_{\sigma}$ by relabelling the coordinates, i.e.,
\[
\Psi_{\sigma}(p) = \Psi(p_{\sigma(1)}, ..., p_{\sigma(n)}).
\]
Since $\tau \succeq \pi$, by Lemma \ref{lem:drift} (and relabeling the coordinates) we have $H_{\Psi_{\sigma}} \geq H_{\Phi_{\sigma}}$ for all $\sigma \in S_n$. But $\Phi$ is a measure of diversity, so $\Phi_{\sigma} = \Phi$ by symmetry and we have $H_{\Psi_{\sigma}} \geq H_{\Phi}$ for all $\sigma \in S_n$. Let
\[
\widetilde{\Psi} = \prod_{\sigma \in S_n} \left(\Psi_{\sigma}\right)^{\frac{1}{n!}}
\]
be the symmetrization of $\Psi$. By Lemma \ref{lem:FGconvex}, $\widetilde{\Psi}$ generates the symmetrized portfolio
\[
\widetilde{\tau}(p) = \frac{1}{n!} \sum_{\sigma \in S_n} \tau(p_{\sigma(1)}, ..., p_{\sigma(n)}), \quad p \in \Delta^{(n)}.
\]
By Lemma \ref{lem:Driftconcave}, we have
\begin{equation} \label{eqn:symmetrizedineq}
H_{\widetilde{\Psi}} \geq \frac{1}{n!} \sum_{\sigma \in S_n} H_{\Psi_{\sigma}} \geq H_{\Phi}.
\end{equation}
Thus $H_{\widetilde{\Psi}} \succeq H_{\Phi}$. Clearly $\widetilde{\Psi}$ is a measure of diversity and by symmetry it achieves its maximum at $\overline{e}$.

\bigskip

\noindent
{\it Step 2 ($\widetilde{\Psi} \leq \Phi$).} We claim that $\widetilde{\Psi} \leq \Phi$ on $\Delta^{(n)}$. Let $p \in \Delta^{(n)}$ and consider the one-dimensional concave functions
\begin{equation} \label{eqn:uandv}
\begin{split}
u(t) &= \Phi((1 - t)\overline{e} + tp) \\
v(t) &= \widetilde{\Psi}((1 - t)\overline{e} + tp)
\end{split}
\end{equation}
defined on $[0, 1]$. We have $u(0) = v(0)$ and $u'(0) = v'(0) = 0$ since both $\Phi$ and $\widetilde{\Psi}$ achieve their maximums at $\overline{e}$. Since $H_{\widetilde{\Psi}} \geq H_{\Phi}$, we have
\[
\frac{-v''(t)}{v(t)} \geq \frac{-u''(t)}{u(t)}, \quad t \in [0, 1].
\]
By the relative concavity lemma (Lemma \ref{lem:relativeconcavity}),
\begin{equation} \label{eqn:w}
w(y) = \frac{v(G(y))}{u(G(y))}
\end{equation}
is a positive concave function on $[0, \ell]$, where $\ell = \int_0^1 \frac{1}{u(t)^2} \mathrm{d}t$, with $w(0) = 1$ and $w'(0) = 0$ (by the quotient rule). Note that $\ell < \infty$ as $\Phi$ is continuous and positive on the line segment $[\overline{e}, p] \subset \Delta^{(n)}$. Also, it is straightforward to see that in this case the relative concavity lemma can be applied to $[0, \ell]$ instead of $[0, \ell)$. This implies that $w$ is non-increasing and so $w(\ell) = \widetilde{\Psi}(p) / \Phi(p) \leq 1$.

\bigskip

\noindent
{\it Step 3  ($\widetilde{\Psi} \equiv \Phi$).} Let $Z = \{p \in \Delta^{(n)}: \widetilde{\Psi}(p) = \Phi(p)\}$ and we claim that $Z = \Delta^{(n)}$. Here we follow an idea in the proof of \cite[Theorem 3]{CDOS09}. Define $u$ and $v$ on $[0, 1)$ by \eqref{eqn:uandv} with $p$ replaced by $e(1)$. Then the function $w$ defined as in \eqref{eqn:w} is positive and concave on $[0, \infty)$ since the integral in \eqref{eqn:integralcondition} (which defines $\ell = \int_0^1 \frac{1}{u(t)^2} \mathrm{d}t$) diverges. Again $w$ satisfies $w(0) = 1$ and $w'(0) = 0$. But since $w$ is defined on an infinite interval, if $w'(y) < 0$ for some $y$, then $w$ must hit zero as $w'$ is non-increasing by concavity. This contradicts the positivity of $w$, and so $w$ is identically one on $[0, \infty)$. It follows that $\widetilde{\Psi} = \Phi$ on the line segment $[\overline{e}, e(1))$. By symmetry, $Z$ contains the segments $[\overline{e}, e(i))$ for all $i$.

\medskip

Next we show that the set $Z$ is convex. Let $p, q \in Z$. Again we consider the pair of functions
\begin{equation} \label{eqn:uandvpq}
\begin{split}
u(t) &= \Phi((1 - t)p + tq) \\
v(t) &= \widetilde{\Psi}((1 - t)p + tq)
\end{split}
\end{equation}
on $[0, 1]$. Let $\widetilde{w}(t) = \frac{v(t)}{u(t)}$, $t \in [0, 1]$. By the relative concavity lemma again, we know that $\widetilde{w}$ is concave after a reparameterization. But $\widetilde{w}(t) \leq 1$ by Step 2 and $\widetilde{w}$ equals one at the endpoints $0$ and $1$. By concavity, $\widetilde{w}$ is identically one on $[0, 1]$. Hence if $Z$ contains $p$ and $q$, it also contains the line segment $[p, q]$. Now $Z$ is a convex set containing $[\overline{e}, e(i))$ for all $i$. It is easy to see that $Z$ is then the simplex $\Delta^{(n)}$. Hence $\widetilde{\Psi}$ equals $\Phi$ identically.

\bigskip

\noindent
{\it Step 4 (Desymmetrization).} We have shown that  $\widetilde{\Psi} \equiv \Phi$, and so $H_{\widetilde{\Psi}} = H_{\Phi}$. By \eqref{eqn:symmetrizedineq}, we have
\[
H_{\Phi} = H_{\widetilde{\Psi}} \geq \frac{1}{n!} \sum_{\sigma \in S_n} H_{\Psi_{\sigma}} \geq H_{\Phi}.
\]
Since $H_{\Psi_{\sigma}} \geq H_{\Phi}$ for each $\sigma \in S_n$, we have $H_{\Psi_{\sigma}} = H_{\Phi}$ for all $\sigma$. In particular, taking $\sigma$ to be the identity, we have $H_{\Psi} = H_{\Phi}$. It remains to show that $\Psi$ equals $\Phi$ identically (recall that we assume $\Psi(\overline{e}) = \Phi(\overline{e})$).

Fix $i \in \{1, ..., n\}$ and consider
\begin{equation*}
\begin{split}
u(t) &= \Phi((1 - t)\overline{e} + te(i)) \\
v(t) &= \Psi((1 - t)\overline{e} + te(i))
\end{split}
\end{equation*}
for $t \in [0, 1)$. By the argument in Step 3, if $\left(\frac{v}{u}\right)'(0) \leq 0$, the integral condition \eqref{eqn:integralcondition} implies that $v / u$ is identically one. So $\left(\frac{v}{u}\right)'(0) \leq 0$ implies $\left(\frac{v}{u}\right)'(0) = 0$. For $\sigma \in S_n$ let
\[
v_{\sigma}(t) = \Psi((1 - t)\overline{e} + te(\sigma(i))).
\]
Since $\widetilde{\Psi} = \Phi$, we have
\[
\prod_{\sigma \in S_n} \left(\frac{v_{\sigma}(t)}{u(t)}\right)^{\frac{1}{n!}} = 1.
\]
Taking logarithm on both sides and differentiating, we see that the average of the derivatives $\left(\frac{v}{u}\right)'(0)$ over $i$ is $0$ (recall that $\Phi$ is symmetric). Since all derivatives are non-negative by the above argument, in fact they are all $0$, and so $\Psi = \Phi$ on $[\overline{e}, e(i))$ for all $i$.

Since the vectors $e(i) - \overline{e}$ span the plane parallel to $\Delta^{(n)}$, the graphs of $\Psi$ and $\Phi$ have the same tangent plane at $\overline{e}$. Since $\Phi$ achieves its maximum at $\overline{e}$, we see that $\Psi$ achieves its maximum at $\overline{e}$ as well. Now we may apply the argument in Steps 2 and 3 to conclude that $\Psi$ equals $\Phi$ identically on $\Delta^{(n)}$. Thus $\tau = \pi$ and we have proved that $\pi$ is maximal in ${\mathcal{FG}}^2$.
\qed\end{proof}

\begin{proof}[Proof of Corollary \ref{cor:main}]
Let $\tau$ be a $C^1$ portfolio not equal to $\pi$. By the maximality of $\pi$, it is not the case that $\tau \succeq \pi$. By Theorem \ref{prop:MCM}, $\tau$ does not satisfy MCM relative to $\pi$. Thus, there is a cycle $\{\mu(t)\}_{t = 0}^{m+1}$ (with $\mu(0) = \mu(m + 1)$) over which
\begin{equation} \label{eqn:badcycle}
\frac{V_{\tau}(m+1)}{V_{\pi}(m+1)} < 1.
\end{equation}
Consider, as in the proof of Theorem \ref{prop:MCM}, the market weight sequence which goes through this cycle again and again. Clearly $\{\mu(t)\}_{t \geq 0}$ takes values in a finite set $K$ which is compact. From \eqref{eqn:badcycle}, it is clear that $V_{\tau}(t) / V_{\pi}(t) \rightarrow 0$ as $t \rightarrow \infty$.
\qed\end{proof}

\subsection{Extension to continuous time} \label{sec:continuoustime}
We discuss briefly how Theorem \ref{thm:main} can be generalized to continuous time. In continuous time, we let the market weight process $\{\mu(t)\}_{t \geq 0}$ be a continuous semimartingale with state space $\Delta^{(n)}$. The market weight process of a portfolio $\pi$ satisfies the stochastic differential equation
\[
\frac{\mathrm{d}V_{\pi}(t)}{V_{\pi}(t)} = \sum_{i = 1}^n \pi_i(\mu(t)) \frac{\mathrm{d}\mu_i(t)}{\mu_i(t)}.
\]
Let $(\pi, \Phi), (\tau, \Psi) \in {\mathcal{FG}}^2$. Then we have the decomposition
\[
\log \frac{V_{\tau}(t)}{V_{\pi}(t)} = \log \frac{\Psi(\mu(t)) / \Psi(\mu(0))}{\Phi(\mu(t)) / \Phi(\mu(0))} + A(t),
\]
where the drift process takes the form $A(t) = A_{\tau}(t) - A_{\pi}(t)$,
\begin{equation} \label{eqn:qvariation}
A_{\tau}(t) = \int_0^t H_{\tau}(\mu(s))(\mathrm{d}\mu(s), \mathrm{d}\mu(s)),
\end{equation}
and the analogous definition holds for $A_{\pi}$. See \cite[Theorem 3.1.5]{F02}. In \eqref{eqn:qvariation} we use the intrinsic notation of \cite{EM89} for the quadratic variation of $\{\mu(t)\}$ with respect to the non-negative definite form $H_{\tau}$. It can be shown that $A(t)$ is non-decreasing almost surely for all continuous semimartingales $\{\mu(t)\}$ if and only if $A_{\tau} \geq A_{\pi}$. We may define the relation $\tau \succeq \pi$ (domination on compacts) in the same way as in Definition \ref{def:pseudoarbitrage}, except that we require for any continuous semimartingale $\{\mu(t)\}$ with values in $K$, \eqref{eqn:lowerbound} holds for all $t \geq 0$ almost surely. Using the results established, one can show in continuous time that $\pi$ is maximal in ${\mathcal{FG}}^2$ if it is generated by a measure of diversity satisfying \eqref{eqn:integralcondition}.

Moreover, in continuous time, \cite[Theorem 3]{CDOS09} shows that the integral condition \eqref{eqn:integralcondition} is also necessary for $(\pi, \Phi)$ to be maximal in ${\mathcal{FG}}^2$ when $n = 2$. Let $u(x) = \Phi(x, 1 - x)$. The idea is that if the integral converges, we can solve the initial value problem
\[
v''(x) + \left(\frac{-u''(x)}{u(x)} + s(x)\right)v(x) = 0, \quad x \in (0, 1),
\]
\[
v\left(\frac{1}{2}\right) = u\left(\frac{1}{2}\right), \quad v'\left(\frac{1}{2}\right) = u'\left(\frac{1}{2}\right) = 0,
\]
for some appropriately chosen function $s(x)$ such that $s(x) \geq 0$, $s(x) \not\equiv 0$ and $s$ is symmetric about $\frac{1}{2}$. Sturm's comparison theorem implies that the solution $v(x)$ is positive (and concave) on $(0, 1)$. Let $\Psi(p) = v(p_1)$ and let $\tau$ be the portfolio generated by $\Psi$. Then the corresponding portfolio $\tau$ is not equal to $\pi$ and dominates $\pi$ on compacts, so $\pi$ is not maximal in ${\mathcal{FG}}^2$.

\begin{problem}
Characterize the maximal portfolios of ${\mathcal{FG}}$.
\end{problem}

\section{Optimization of functionally generated portfolios} \label{sec:optimization}
\subsection{A shape-constrained optimization problem} \label{sec:optim2}
Consider the relative value process of a functionally generated portfolio. If we have a model for the market weight process $\{\mu(t)\}_{t \geq 0}$, a natural optimization problem is to maximize the expected growth rate of the drift  process over some horizon. To this end, suppose we are given an {\it intensity measure} ${\Bbb P}$ of the increments $(\mu(t), \mu(t + 1))$ modeled as a Borel probability measure on $\Delta^{(n)} \times \Delta^{(n)}$. We assume that ${\Bbb P}$ is either discrete (with countably many masses) or absolutely continuous with respect to the measure $\nu := m \otimes m$ on $\Delta^{(n)} \times \Delta^{(n)}$, where $m$ is the surface measure of $\Delta^{(n)}$ in ${\Bbb R}^n$ (which should be thought of as the Lebesgue measure on $\Delta^{(n)}$). We will abbreviate this by simply saying ${\Bbb P}$ is absolutely continuous. For technical reasons, we assume that ${\Bbb P}$ is supported on $K \times K$ for some compact subset $K$ of $\Delta^{(n)} \times \Delta^{(n)}$.

Given the intensity measure ${\Bbb P}$, we consider the optimization problem
\begin{equation} \label{eqn:newoptim1}
\max_{(\pi, \Phi) \in {\mathcal{FG}}} \int T\left(q \mid p\right) \mathrm{d} {\Bbb P}.
\end{equation}

First we give some examples of the intensity measure.

\begin{example} \label{exm:markov}
Suppose $\{(\mu(t - 1), \mu(t))\}$ is an ergodic Markov chain on $K \times K$. We can take ${\Bbb P}$ to be the stationary distribution of $(\mu(t - 1), \mu(t))$. It is easy to see that an optimal portfolio in \eqref{eqn:newoptim1} maximizes the asymptotic growth rate $\lim_{t \rightarrow \infty} \frac{1}{t} \log V_{\pi}(t)$ of the relative value (the term $\frac{1}{t} \log \frac{\Phi(\mu(t))}{\Phi(\mu(0))}$ vanishes as $t \rightarrow \infty$). This portfolio can be regarded as a {\it growth optimal portfolio} (relative to the market portfolio) among the functionally generated portfolios.
\end{example}

\begin{example} \label{exm:green}
We model $\{\mu(t)\}_{t \geq 0}$ as a stochastic process. Let $K$ be a compact subset of $\Delta^{(n)}$ containing $\mu(0)$. Let $\tau$ be the first exit time of $K$, i.e.,
\[
\tau = \inf\{t \geq 0: \mu(t) \notin K\}.
\]
Consider the measure ${\Bbb G}$ on $K \times K$ defined by
\[
{\Bbb G}(A) := {\Bbb E} \left[\sum_{t = 1}^{\tau - 1} 1_{\{(\mu(t-1), \mu(t)) \in A\}} \right], \quad A \subset K \times K \text{ measurable}.
\]
If the process $\{(\mu(t-1), \mu(t))\}$ is Markovian, ${\Bbb G}$ is the {\it Green kernel} of the process killed at time $\tau$. Suppose ${\Bbb G}(K \times K) = {\Bbb E} (\tau - 1) < \infty$, i.e., the exit time has finite expectation. Then
\begin{equation*}
{\Bbb P}(\cdot) := \frac{1}{{\Bbb G}(K \times K)} {\Bbb G}(\cdot)
\end{equation*}
is a probability measure on $K \times K$. This intensity measure will be used in the empirical example in Section \ref{sec:empirical}.
\end{example}

Note that Example \ref{exm:markov} deals with infinite horizon while Example \ref{exm:green} is concerned with a finite (but random) horizon. The optimization problem \eqref{eqn:newoptim1} is shape-constrained because the generating function is concave by definition. We will first study some theoretical properties of this abstract (unconstrained) optimization problem, and then focus on a discrete special case where numerical solutions are possible and further constraints are imposed. In contrast to classical portfolio selection theory where the portfolio weights are optimized period by period, in \eqref{eqn:newoptim1} we optimize the portfolio weights over a region simultaneously.

Throughout the development it is helpful to keep in mind the analogy between \eqref{eqn:newoptim1} and the maximum likelihood estimation of a log-concave density. In that context, we are given a random sample $X_1, ..., X_N$ from a log-concave density $f_0$ on ${\Bbb R}^d$ (i.e., $\log f_0$ is concave). The log-concave maximum likelihood estimate (MLE) $\widehat{f}$ is the solution to
\begin{equation} \label{eqn:MLE}
\max_f \sum_{j = 1}^N \log f(X_j),
\end{equation}
where $f$ ranges over all log-concave densities on ${\Bbb R}^d$. It can be shown that the MLE exists almost surely (when $N \geq d + 1$ and the support of $f_0$ has full dimension) and is unique; see \cite{CSS10} for precise statements of these results. We remark that \eqref{eqn:newoptim1} is more complicated than \eqref{eqn:MLE} because the portfolio weights correspond to selections of the superdifferential $\partial \log \Phi$, wheras \eqref{eqn:MLE} involves only the values of the density.

\subsection{Theoretical properties}
It is easy to check that \eqref{eqn:newoptim1} is a convex optimization problem since the L-divergence is concave in the portfolio weights (Lemma \ref{lem:Driftconcave}). First we show that \eqref{eqn:newoptim1} has an optimal solution and study in what sense the solution is unique.

\medskip

Given an intensity measure ${\Bbb P}$, it can be decomposed in the form
\begin{equation} \label{eqn:conditional}
{\Bbb P}(\mathrm{d}p\mathrm{d}q) = {\Bbb P}_1(\mathrm{d}p) {\Bbb P}_2(\mathrm{d}q | p),
\end{equation}
where ${\Bbb P}_1$ is the first marginal of ${\Bbb P}$ and ${\Bbb P}_2$ is the conditional distribution of the second variable given $p$. We will need a technical condition for ${\Bbb P}$ which allows jumps in all directions. 

\begin{definition}[Support condition]
Let ${\Bbb P}$ be an absolutely continuous probability measure on $\Delta^{(n)} \times \Delta^{(n)}$ with the decomposition \eqref{eqn:conditional}. Write
\[
{\Bbb P}_1(\mathrm{d}p) = f(p) m(\mathrm{d}p),
\]
where $f(\cdot)$ is the density of ${\Bbb P}_1$ with respect to $m$. We say that ${\Bbb P}$ satisfies the support condition if for $m$-almost all $p$ for which $f(p) > 0$, for all $v \in T\Delta^{(n)}$, there exists $\lambda > 0$ such that $p + \lambda v$ belongs to the support of ${\Bbb P}_2(\cdot | p)$.
\end{definition}

We have the following result which is analogous to \cite[Theorem 1]{CSS10}.

\begin{theorem} \label{thm:optim}
Consider the optimization problem \eqref{eqn:newoptim1} where ${\Bbb P}$ is a discrete or absolutely continuous Borel probability measure on $\Delta^{(n)} \times \Delta^{(n)}$ supported on $K \times K$ with $K \subset \Delta^{(n)}$ compact.
\begin{enumerate}
\item[(i)] The problem has an optimal solution. 
\item[(ii)] If $\pi^{(1)}$ and $\pi^{(2)}$ are optimal solutions, then
\begin{equation} \label{eqn:uniqueness}
\left\langle \frac{\pi^{(1)}(p)}{p}, q - p \right\rangle = \left\langle\frac{\pi^{(2)}(p)}{p}, q - p \right\rangle
\end{equation}
for ${\Bbb P}$-almost all $(p, q)$. In particular, if ${\Bbb P}(\mathrm{d}p\mathrm{d}q) = {\Bbb P}_1(\mathrm{d}p) {\Bbb P}_2(\mathrm{d}q | p)$ is absolutely continuous with ${\Bbb P}_1(\mathrm{d}p) = f(p) m(\mathrm{d}p)$ and satisfies the support condition, then $\pi^{(1)} = \pi^{(2)}$ $m$-almost everywhere on $\{p: f(p) > 0\}$.
\end{enumerate}
\end{theorem}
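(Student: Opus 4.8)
The plan is to treat \eqref{eqn:newoptim1} as a concave maximization over the convex class ${\mathcal{FG}}$ (convexity from Lemma \ref{lem:FGconvex}, concavity of the objective from Lemma \ref{lem:Driftconcave}), following the existence/uniqueness scheme for the log-concave MLE in \cite{CSS10}, with the extra bookkeeping forced by the fact that portfolio weights are superdifferential \emph{selections} rather than function values. For existence (i), I would first use scale invariance: since $T(q\mid p)$ depends on $\Phi$ only through ratios and through the generated weights, I may normalize each candidate by $\Phi(p_0)=1$ for a fixed $p_0\in K$ without changing the objective. The key compactness input is that normalized positive concave functions are uniformly controlled on compacta: for an interior point $p$, writing $p=(1-s)p_0+sb$ with $b\in\partial\Delta^{(n)}$ and using concavity together with the continuous extension of $\Phi$ to $\overline{\Delta^{(n)}}$ (which is $\ge 0$, by \cite[Theorem 10.3]{R70}) gives $\Phi(p)\ge(1-s)\Phi(p_0)=1-s$; running the same segment in reverse bounds $\Phi$ above. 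Since $s$ is bounded away from $1$ uniformly on any compact $K'\subset\Delta^{(n)}$, I obtain $0<c_1(K')\le\Phi\le c_2(K')<\infty$ for every normalized candidate. Concavity then yields uniform local Lipschitz bounds, so by Arzel\`a--Ascoli and diagonal extraction any maximizing sequence $\Phi_n$ has a subsequence converging locally uniformly to a concave $\Phi_\ast$; the lower bound on $K$ forces $\Phi_\ast>0$ (a nonnegative concave function that is positive somewhere is positive throughout the open simplex), so $\Phi_\ast\in{\mathcal{FG}}$. The same bounds show the integrand is uniformly bounded on $K\times K$, whence $\sup<\infty$.

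It then remains to show the objective is upper semicontinuous along this sequence, i.e. that $\Phi_\ast$ (with a suitable generated $\pi_\ast$) attains the supremum. The term $-\log(\Phi_n(q)/\Phi_n(p))$ converges by dominated convergence. The delicate term is $\int\log(1+\langle\pi_n(p)/p,q-p\rangle)\,d{\Bbb P}$, since $\langle\pi_n(p)/p,q-p\rangle=\langle v_n(p),q-p\rangle$ with $v_n(p)\in\partial\log\Phi_n(p)$, and superdifferential selections need not converge pointwise; this is precisely where the problem is harder than \eqref{eqn:MLE}. I would resolve it using the closed-graph (Kuratowski upper semicontinuous) convergence of superdifferentials of concave functions: if $p_n\to p$ and $v_n\in\partial\log\Phi_n(p_n)$ with $v_n\to v$, then $v\in\partial\log\Phi_\ast(p)$. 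Combined with the pointwise maximization over $v\in\partial\log\Phi_\ast(p)$ of $\int\log(1+\langle v,q-p\rangle)\,{\Bbb P}_2(dq\mid p)$ (a concave program over a compact set, hence attained) and a measurable-selection and Fatou argument, this produces $\pi_\ast$ generated by $\Phi_\ast$ with $\int T_{\pi_\ast}\,d{\Bbb P}\ge\limsup_n\int T_{\pi_n}\,d{\Bbb P}$. I expect this semicontinuity of the superdifferential selection to be the main obstacle, and the step warranting appendix-level care in the spirit of \cite{CSS10}.

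For the uniqueness statement (ii), the argument is purely convex-analytic. Given two optimizers $\pi^{(1)},\pi^{(2)}$ with generating functions $\Phi^{(1)},\Phi^{(2)}$, form the midpoint $\pi=\tfrac12\pi^{(1)}+\tfrac12\pi^{(2)}$, generated by $(\Phi^{(1)}\Phi^{(2)})^{1/2}$ (Lemma \ref{lem:FGconvex}). By Lemma \ref{lem:Driftconcave}, $T(q\mid p)\ge\tfrac12 T^{(1)}(q\mid p)+\tfrac12 T^{(2)}(q\mid p)$ pointwise, so integration gives $\int T\,d{\Bbb P}\ge\sup$; since $\pi\in{\mathcal{FG}}$ the reverse holds too, hence the nonnegative gap integrates to zero and therefore vanishes ${\Bbb P}$-a.e. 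Tracing the inequality of Lemma \ref{lem:Driftconcave}, whose only nonlinearity is concavity of $\log$ applied to the line-integral representation \eqref{eqn:lineintegral}, the gap equals $\log(1+\tfrac12 a_1+\tfrac12 a_2)-\tfrac12\log(1+a_1)-\tfrac12\log(1+a_2)$ with $a_j=\langle\pi^{(j)}(p)/p,q-p\rangle$; by strict concavity of $\log$ this is zero iff $a_1=a_2$, which is exactly \eqref{eqn:uniqueness} for ${\Bbb P}$-a.e. $(p,q)$.

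Finally, under the support condition I would upgrade \eqref{eqn:uniqueness} to equality of the portfolios. Let $w(p)=v^{(1)}(p)-v^{(2)}(p)\in T\Delta^{(n)}$ be the difference of the tangential weight ratios attached to $\pi^{(1)},\pi^{(2)}$ via Lemma \ref{lem:superdiff}(i); since $q-p$ is tangent, \eqref{eqn:uniqueness} reads $\langle w(p),q-p\rangle=0$ for ${\Bbb P}_2(\cdot\mid p)$-a.e. $q$, for $m$-a.e. $p$ with $f(p)>0$. As $q\mapsto\langle w(p),q-p\rangle$ is continuous, it vanishes on the entire support of ${\Bbb P}_2(\cdot\mid p)$; the support condition then supplies, for every direction $v\in T\Delta^{(n)}$, a point $p+\lambda v$ in that support, forcing $\langle w(p),v\rangle=0$ for all $v$. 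Since $w(p)\in T\Delta^{(n)}$, this gives $w(p)=0$, so $v^{(1)}(p)=v^{(2)}(p)$ and, by the bijection in Lemma \ref{lem:superdiff}, $\pi^{(1)}(p)=\pi^{(2)}(p)$ for $m$-a.e. $p$ on $\{f>0\}$.
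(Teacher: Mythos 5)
Your proof is correct, and its global architecture (normalization $\Phi(p_0)=1$, compactness of normalized positive concave functions, dominated convergence for existence; the midpoint portfolio plus strict concavity of $\log$ for \eqref{eqn:uniqueness}; the support condition argument for the final upgrade) coincides with the paper's. Steps (ii) and the compactness bounds are essentially verbatim what appears in Lemma \ref{lem:compactness} and the appendix proof. The one place where you genuinely diverge is the ``delicate term'': the paper handles the convergence of the superdifferential selections by invoking Lemma \ref{lem:subdiffconvergence}, i.e.\ at every point where the limit $\Phi$ is differentiable (which is $m$-almost every point by \cite[Theorem 25.5]{R70}) the superdifferential is a singleton, so $\pi^{(k)}\to\pi$ pointwise $m$-a.e.\ and dominated convergence applies directly; for discrete ${\Bbb P}$ it runs a separate diagonal argument over the countably many atoms and redefines $\pi$ there. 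You instead use Kuratowski upper semicontinuity of $\partial\log\Phi^{(k)}$ together with a pointwise argmax over $\partial\log\Phi_\ast(p)$ of the conditional integral, a measurable selection, and reverse Fatou. Your route has the advantage of treating the discrete and absolutely continuous cases uniformly through the disintegration \eqref{eqn:conditional} and of not caring where $\Phi_\ast$ fails to be differentiable, but it pays for this by needing a measurable maximum theorem to know that the argmax selection is Borel (so that $\pi_\ast$ is a legitimate portfolio map); the paper's route avoids any selection theorem beyond the one already quoted in Lemma \ref{lem:superdiff}(ii). Both arguments are sound; you correctly identified the semicontinuity of the selection as the step requiring the most care, and your sketch of it closes.
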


The proofs of Theorem \ref{thm:optim} and Theorem \ref{thm:optim2} below are given in Appendix \ref{sec:appendix}.

\medskip

Let ${\Bbb P}$ an intensity measure. Suppose $\{{\Bbb P}_N\}_{N \geq 1}$ is a sequence of probability measures converging weakly to ${\Bbb P}$. By definition, this means that
\begin{equation*} 
\lim_{N \rightarrow \infty} \int f \mathrm{d}{\Bbb P}_N = \int f \mathrm{d}{\Bbb P}
\end{equation*}
for all bounded continuous functions on $\Delta^{(n)} \times \Delta^{(n)}$. For example, one may sample i.i.d.~observations $\{(p(j), q(j))\}_{j = 1}^N$ from ${\Bbb P}$ and take ${\Bbb P}_N$ to be the empirical measure $\frac{1}{N} \sum_{j = 1}^N \delta_{(p(j), q(j))}$, where $\delta_{(p(j), q(j))}$ is the point mass at $(p(j), q(j))$. From the perspective of statistical inference, the optimal portfolio $(\widehat{\pi}^{(N)}, \widehat{\Phi}^{(N)})$ for ${\Bbb P}_N$ can be regarded as a point estimate of the optimal portfolio $(\pi, \Phi)$ for ${\Bbb P}$. The following result states that the estimator is consistent. See \cite[Theorem 4]{CS10} for an analogous statement in the context of log-concave density estimation.

\begin{theorem} \label{thm:optim2}
Let $(\pi, \Phi)$ be the optimal portfolio in problem \eqref{eqn:newoptim1} for ${\Bbb P}$, where ${\Bbb P}(\mathrm{d}p\mathrm{d}q) = {\Bbb P}_1(\mathrm{d}p) {\Bbb P}_2(\mathrm{d}q | p)$ is absolutely continuous with ${\Bbb P}_1(\mathrm{d}p) = f(p) m(\mathrm{d}p)$, supported on $K \times K$ with $K \subset \Delta^{(n)}$ compact, and satisfies the support condition. Let $\{{\Bbb P}_N\}$ be a sequence of discrete or absolutely continuous probability measures on $K \times K$ such that ${\Bbb P}_N \rightarrow {\Bbb P}$ weakly, and suppose $(\widehat{\pi}^{(N)}, \widehat{\Phi}^{(N)})$ is optimal for the measure ${\Bbb P}_N$, $N \geq 1$. Then $\widehat{\pi}^{(N)} \rightarrow \pi$ $m$-almost everywhere on $\{p: f(p) > 0\}$.
\end{theorem}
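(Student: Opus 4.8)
The plan is to mimic the standard consistency argument for shape-constrained $M$-estimators (as in \cite{CS10}): combine a compactness argument for the generating functions with the uniqueness statement in Theorem \ref{thm:optim}(ii), via a subsequence principle. Throughout I would normalize each optimizer so that $\widehat{\Phi}^{(N)}(\overline{e}) = \Phi(\overline{e})$; this is harmless because a generating function is determined only up to a positive multiplicative constant. First I would establish compactness. For any fixed compact $K' \subset \Delta^{(n)}$ the coordinates $p_i$ are bounded below, and since $\widehat{\pi}^{(N)}_i \in [0,1]$ the weight ratios $\widehat{\pi}^{(N)}(p)/p$ are bounded uniformly in $N$ on $K'$; by Lemma \ref{lem:superdiff} this bounds the superdifferentials of $\log \widehat{\Phi}^{(N)}$, so the functions $\log \widehat{\Phi}^{(N)}$ are uniformly Lipschitz on $K'$. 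Together with the normalization this yields uniform bounds, so by Arzel\`a--Ascoli every subsequence has a further subsequence along which $\widehat{\Phi}^{(N)} \to \Phi^*$ locally uniformly on $\Delta^{(n)}$, with $\Phi^*$ concave and strictly positive; let $\pi^*$ be the portfolio it generates.

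The heart of the proof is to show that $\Phi^*$ is optimal for ${\Bbb P}$. For the lower bound I would test the true optimizer $\pi$ against ${\Bbb P}_N$: optimality of $\widehat{\pi}^{(N)}$ gives $\int T_{\widehat{\pi}^{(N)}}\, \mathrm{d}{\Bbb P}_N \ge \int T_{\pi}\, \mathrm{d}{\Bbb P}_N$. The integrand $T_{\pi}$ is bounded on $K \times K$ and continuous off the set where $\Phi$ fails to be differentiable in $p$; this set is $m$-null, hence ${\Bbb P}_1$-null because ${\Bbb P}$ is absolutely continuous, so $T_{\pi}$ is ${\Bbb P}$-a.e.\ continuous and the portmanteau theorem yields $\int T_{\pi}\,\mathrm{d}{\Bbb P}_N \to \int T_{\pi}\,\mathrm{d}{\Bbb P}$, the optimal value. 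For the upper bound I would split the L-divergence \eqref{eqn:discreteenergy}. The term $-\log\bigl(\widehat{\Phi}^{(N)}(q)/\widehat{\Phi}^{(N)}(p)\bigr)$ converges uniformly on $K \times K$ and causes no trouble. For the remaining term, the key estimate is that, because $\log\widehat{\Phi}^{(N)}$ is concave, its supergradient at $p$ is dominated by a backward difference quotient: for a small fixed $\varepsilon>0$,
\[
1 + \left\langle \frac{\widehat{\pi}^{(N)}(p)}{p}, q-p\right\rangle \le 1 + \frac{\log \widehat{\Phi}^{(N)}(p) - \log \widehat{\Phi}^{(N)}(p - \varepsilon(q-p))}{\varepsilon}.
\]
Taking logarithms (monotone) and using the \emph{uniform} convergence $\widehat{\Phi}^{(N)}\to\Phi^*$ together with the weak convergence ${\Bbb P}_N \to {\Bbb P}$ bounds the $\limsup$ by the same expression written with $\Phi^*$; letting $\varepsilon\downarrow 0$, the difference quotient decreases to $\langle \pi^*(p)/p, q-p\rangle$ at every point where $\Phi^*$ is differentiable, and these are ${\Bbb P}_1$-a.e.\ by absolute continuity. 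This gives $\limsup_N \int T_{\widehat{\pi}^{(N)}}\,\mathrm{d}{\Bbb P}_N \le \int T_{\pi^*}\,\mathrm{d}{\Bbb P}$. Since $\pi^*$ is itself a competitor, $\int T_{\pi^*}\,\mathrm{d}{\Bbb P}$ cannot exceed the optimal value, so all inequalities are equalities and $\pi^*$ is optimal.

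Finally, by Theorem \ref{thm:optim}(ii) and the support condition the optimizer is unique $m$-a.e.\ on $\{f>0\}$, so $\pi^* = \pi$ there. Since the same limit is identified along every convergent subsequence, the subsequence principle gives $\widehat{\pi}^{(N)} \to \pi$ in $m$-measure on $\{f>0\}$, which upgrades to the a.e.\ statement using that at differentiability points of each subsequential limit the superdifferentials converge. I expect the main obstacle to be the upper bound in the second step: there the integrand and the integrating measure vary simultaneously, and $\widehat{\pi}^{(N)}$ is only a measurable selection of a superdifferential, so one cannot pass to the limit by continuity. The difference-quotient majorization, combined with the absolute continuity of ${\Bbb P}_1$ used to discard the null set of non-differentiability, is what makes this upper semicontinuity work and is the step I would treat most carefully.
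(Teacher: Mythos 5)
Your proposal is correct and follows the same overall skeleton as the paper (normalize the generating functions at a point, extract a locally uniformly convergent subsequence via compactness of normalized positive concave functions, show the subsequential limit is optimal for ${\Bbb P}$, and invoke Theorem \ref{thm:optim}(ii) with the support condition), but the key technical step is handled by a genuinely different argument. The paper establishes the two-sided limit \eqref{eqn:bigclaim} by building partitions $\{A_k\}$, $\{B_\ell\}$ of $K$ out of ${\Bbb P}$-continuity sets, approximating both integrals by Riemann sums, and feeding in the local uniform convergence of the portfolio maps near differentiability points (Lemma \ref{lem:subdiffconvergence}) together with uniform Lipschitz bounds on the L-divergences. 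You instead sandwich: the lower bound comes from testing the true optimizer $\pi$ against ${\Bbb P}_N$ (where your observation that $T_\pi$ is only ${\Bbb P}$-a.e.\ continuous, which suffices for the mapping theorem, is actually more careful than the paper's assertion that $T$ is ``clearly continuous''), and the upper bound comes from majorizing the supergradient term by a backward difference quotient of the concave function $\log\widehat{\Phi}^{(N)}$, passing to the limit in $N$ at fixed $\varepsilon$ by uniform convergence plus weak convergence, and then letting $\varepsilon\downarrow 0$ by monotone convergence, using absolute continuity of ${\Bbb P}_1$ to discard the non-differentiability set of $\Phi^*$. This buys you a shorter argument that only needs an upper semicontinuity estimate rather than the exact limit, and it avoids both the construction of ${\Bbb P}$-continuity partitions and any appeal to convergence of superdifferentials; the paper's partition argument, on the other hand, yields the stronger conclusion \eqref{eqn:bigclaim} directly and reuses Lemma \ref{lem:subdiffconvergence}, which is needed elsewhere. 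One point to treat with the same care as the paper does (implicitly): the concluding subsequence principle, as you note, directly yields only convergence in $m$-measure on $\{f>0\}$; to state a.e.\ convergence of the full sequence you should argue, as you sketch, via Lemma \ref{lem:subdiffconvergence} that along each convergent subsequence $\widehat{\pi}^{(N)}$ converges at every differentiability point of the subsequential limit, and that all subsequential limits generate the same portfolio $m$-a.e.\ on $\{f>0\}$.
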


\subsection{Finite dimensional reduction}
Without further constraints, the optimal portfolio weights of \eqref{eqn:newoptim1} may be highly irregular. Now we restrict to the special case where
\begin{equation} \label{eqn:discreteP}
{\Bbb P} = \frac{1}{N} \sum_{j = 1}^N \delta_{(p(j), q(j))}
\end{equation}
is a discrete measure and $(p(j), q(j)) \in \Delta^{(n)} \times \Delta^{(n)}$ for $j = 1, ..., N$. This presents no great loss of generality because in practice the market weights have finite precision and we can choose the pairs $(p(j), q(j))$ to take values in a grid approximating $\Delta^{(n)} \times \Delta^{(n)}$. Moreover, from Theorem \ref{thm:optim2} we expect that when $N$ is large the optimal solution approximates that of the continuous counterpart. Consider the modified optimization problem
\begin{equation} \label{eqn:newoptim2}
\begin{aligned}
& \underset{(\pi, \Phi) \in {\mathcal{FG}}}{\text{maximize}}
& &  \int T\left(q \mid p\right)\mathrm{d} {\Bbb P} \\
& \text{subject to}
& & (\pi(p(1)), ..., \pi(p(N))) \in C, \\
\end{aligned}
\end{equation}
where $C$ is a given closed convex subset of $\overline{\Delta^{(n)}}^N$. Some examples of $C$ are given in Table \ref{tab:constraints}, where each constraint is a cylinder set of the form $\{\pi(p(j)) \in C_j\}$ with $C_j$ a closed convex set of $\overline{\Delta^{(n)}}$. `Global' constraints on the weights can be imposed, see Section \ref{sec:empirical} for an example. It can be verified easily that the proof of Theorem \ref{thm:optim} goes through without changes with these constraints, so \eqref{eqn:newoptim2} has an optimal solution. Moreover, if $\pi^{(1)}$ and $\pi^{(2)}$ are optimal solutions, then
\[
\left\langle \frac{\pi^{(1)}(p(j))}{p(j)}, q(j) - p(j) \right\rangle = \left\langle\frac{\pi^{(2)}(p(j))}{p(j)}, q(j) - p(j) \right\rangle, \quad j = 1, ..., N.
\]

\begin{table}
\caption{Examples of additional constraints imposed for $p \in \{p(1), ..., p(N)\}$. The parameters may be given functions of $p$.}
 \label{tab:constraints}
    \begin{tabular}{ll}
\hline\noalign{\smallskip}
   Constraint  & Interpretation \\ 
\noalign{\smallskip}\hline\noalign{\smallskip}
    $a_i \leq \pi_i(p) \leq b_i$ & Box constraints on portfolio weights  \\ 
    $m_i \leq \frac{\pi_i(p)}{p_i} \leq M_i$ & Box constraints on weight ratios \\
     $(\pi(p) - p)' \Sigma (\pi(p) - p) < \varepsilon$ & Constraint on tracking error given a covariance matrix   \\ \noalign{\smallskip}\hline
\end{tabular}
\end{table}

For maximum likehood estimation of log-concave density, it is shown in \cite{CSS10} that the logarithm of the MLE $\widehat{f}$ is {\it polyhedral}, i.e., $\log \widehat{f}$ is the pointwise minimum of several affine functions (see \cite[Section 19]{R70}). In particular, there exists a triangulation of the data points over which $\log \widehat{f}$ is piecewise affine. We show that an analogous statement holds for \eqref{eqn:newoptim2}. Let $D = \{p(j), q(j): j = 1, ..., N\}$ be the set of data points.

\begin{theorem} \label{thm:optim3}
Let $(\pi, \Phi)$ be an optimal portfolio for the problem \eqref{eqn:newoptim2} where ${\Bbb P} = \frac{1}{N} \sum_{j = 1}^N \delta_{(p(j), q(j))}$. Let $\overline{\Phi}: \Delta^{(n)} \rightarrow (0, \infty)$ be the smallest positive concave function on $\Delta^{(n)}$ such that $\overline{\Phi}(p) \geq {\Phi}(p)$ for all $x \in D$. Then $\overline{\Phi}$ is a polyhedral positive concave function on $\Delta^{(n)}$ satisfying $\overline{\Phi} \leq \Phi$ and $\overline{\Phi}(p) = \Phi(p)$ for all $p \in D$. Moreover, $\overline{\Phi}$ generates a portfolio $\overline{\pi}$ such that $\overline{\pi}(p(j)) = \pi(p(j))$ for all $j$. In particular, $(\overline{\pi}, \overline{\Phi})$ is also optimal for the problem \eqref{eqn:newoptim2}.
\end{theorem}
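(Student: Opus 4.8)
The plan is to describe $\overline{\Phi}$ explicitly as an infimum of affine functions, read off its envelope and polyhedral properties, and then transport the weights of $\pi$ at the data points to $\overline{\Phi}$ using the superdifferential calculus of Lemma \ref{lem:superdiff}. Concretely, I would let $\mathcal{L}$ be the family of affine functions $\ell(x) = \langle a, x \rangle + b$ on the affine hull of $\Delta^{(n)}$ with $\ell(e(k)) \geq 0$ for $k = 1, \dots, n$ and $\ell(p) \geq \Phi(p)$ for every $p \in D$, and set $\overline{\Phi} := \inf_{\ell \in \mathcal{L}} \ell$. A pointwise infimum of affine functions is concave, so $\overline{\Phi}$ is concave. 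Since every positive concave function on $\Delta^{(n)}$ is the infimum of its affine majorants, and each such majorant of a function dominating $\Phi$ on $D$ lies in $\mathcal{L}$, this $\overline{\Phi}$ coincides with the smallest positive concave function dominating $\Phi$ on $D$.

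Two elementary observations pin down the envelope. Because an affine function that is nonnegative at the $n$ vertices is nonnegative on $\overline{\Delta^{(n)}}$, for any interior point $x_0$ I may pick $p \in D$, extend the ray from $p$ through $x_0$ to a boundary point $z$, and write $x_0 = (1 - \lambda) p + \lambda z$ with $\lambda \in (0,1)$; then $\ell(x_0) = (1 - \lambda)\ell(p) + \lambda \ell(z) \geq (1 - \lambda)\Phi(p) > 0$ for every $\ell \in \mathcal{L}$, so $\overline{\Phi}(x_0) \geq (1 - \lambda)\Phi(p) > 0$. This shows simultaneously that the infimum is finite and that $\overline{\Phi}$ is positive on $\Delta^{(n)}$. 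Minimality applied to the competitor $\Phi$ (which is positive, concave and trivially dominates itself on $D$) gives $\overline{\Phi} \leq \Phi$ on all of $\Delta^{(n)}$, and combined with $\overline{\Phi}(p) \geq \Phi(p)$ for $p \in D$ this forces $\overline{\Phi}(p) = \Phi(p)$ for all $p \in D$. For polyhedrality I would observe that $\mathcal{L}$ is cut out in $(a,b)$-space by the finitely many linear inequalities above, hence is a polyhedron $Q$; for each fixed $x$ the value $\overline{\Phi}(x)$ is the optimum of the linear program $\min_{(a,b) \in Q} (\langle a, x \rangle + b)$, which is finite by the bound just proved and is therefore attained at a vertex of $Q$. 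As $Q$ has finitely many vertices $(a_i, b_i)$, we get $\overline{\Phi}(x) = \min_i (\langle a_i, x \rangle + b_i)$, a minimum of finitely many affine functions, which is the polyhedrality claim.

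The substantive step is to produce $\overline{\pi}$. The key observation is that whenever $\overline{\Phi} \leq \Phi$ with equality at a point $p \in D$, every supergradient of $\Phi$ at $p$ is a supergradient of $\overline{\Phi}$ at $p$: for $\xi \in \partial \Phi(p)$ and any $q$ we have $\overline{\Phi}(q) \leq \Phi(q) \leq \Phi(p) + \langle \xi, q - p \rangle = \overline{\Phi}(p) + \langle \xi, q - p \rangle$. By \eqref{eqn:supdiffequal} and $\Phi(p) = \overline{\Phi}(p)$ this upgrades to $\partial \log \Phi(p) \subseteq \partial \log \overline{\Phi}(p)$ for every $p \in D$. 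By Lemma \ref{lem:superdiff}(i) the tangent vector $v^{(j)}$ associated with $\pi(p(j))$ lies in $\partial \log \Phi(p(j)) \subseteq \partial \log \overline{\Phi}(p(j))$. I would then take any measurable selection of $\partial \log \overline{\Phi}$, redefine it on the finite set $D$ to equal the vectors $v^{(j)}$ (this keeps it measurable and a valid selection), and let $\overline{\pi}$ be the portfolio it generates via Lemma \ref{lem:superdiff}(ii). Since \eqref{eqn:definev}--\eqref{eqn:definepi} are mutually inverse and independent of the generating function, reconstructing from $v^{(j)}$ returns exactly $\overline{\pi}(p(j)) = \pi(p(j))$.

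Optimality is then immediate. As $D = \{p(j), q(j)\}$ and $\overline{\Phi} = \Phi$ on $D$ while $\overline{\pi}(p(j)) = \pi(p(j))$, formula \eqref{eqn:discreteenergy} gives $\overline{T}(q(j) \mid p(j)) = T(q(j) \mid p(j))$ for every $j$, so $\int \overline{T}\, \mathrm{d}{\Bbb P} = \int T\, \mathrm{d}{\Bbb P}$; and $\overline{\pi}(p(j)) = \pi(p(j))$ shows the constraint $(\overline{\pi}(p(1)), \dots, \overline{\pi}(p(N))) \in C$ is preserved, so $(\overline{\pi}, \overline{\Phi})$ is feasible with the same (maximal) objective value and hence optimal. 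I expect the polyhedrality argument to be the main obstacle: one must justify that the linear program has finite value attained at a vertex---equivalently that $Q$ is pointed and the objective direction lies in the appropriate polar cone---and keep track of positivity throughout, whereas the transport of weights reduces to the clean superdifferential inclusion above.
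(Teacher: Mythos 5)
Your proof is correct and follows essentially the same route as the paper: establish $\overline{\Phi} \leq \Phi$ with equality on $D$, deduce the superdifferential inclusion $\partial \log \Phi(p(j)) \subseteq \partial \log \overline{\Phi}(p(j))$, transfer the weights via Lemma \ref{lem:superdiff}(ii), and conclude optimality from $\overline{T}\left(q(j) \mid p(j)\right) = T\left(q(j) \mid p(j)\right)$; the only difference is that the paper cites Rockafellar (finitely generated $\Rightarrow$ polyhedral) for polyhedrality where you run the linear-programming argument by hand. The obstacle you flag does go through: your polyhedron $Q$ is pointed because an affine function vanishing at all $n$ vertices vanishes identically on the affine hull of $\Delta^{(n)}$, and your ray argument already bounds each linear program below, so the minimum is attained at one of the finitely many vertices of $Q$.
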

\begin{proof}
It is a standard result in convex analysis that $\overline{\Phi}$ such defined is {\it finitely generated} (see \cite[Section 19]{R70}). By \cite[Corollary 19.1.2]{R70}, $\overline{\Phi}$ is a polyhedral concave function. By definition of $\overline{\Phi}$ and concavity of $\Phi$, we have $\overline{\Phi}(p) = {\Phi}(p)$ for all $x \in D$ for all $j$ and $\overline{\Phi} \leq \Phi$. This implies that $\partial \log \Phi(p(j)) \subset \partial \log \overline{\Phi}(p(j))$ for all $j$. By Lemma \ref{lem:superdiff}(ii), $\overline{\Phi}$ generates a portfolio $\overline{\pi}$ which agrees with $\pi$ on $\{p(1) ,..., p(N)\}$. It follows that (using obvious notations)
\[
\overline{T}\left(q(j) \mid p(j)\right) = T\left(q(j) \mid p(j) \right)
\]
for all $j$, and hence $(\overline{\pi}, \overline{\Phi})$ is optimal for \eqref{eqn:newoptim2}.
\qed\end{proof}

Theorem \ref{thm:optim3} reduces \eqref{eqn:newoptim2} to a finite-dimensional problem. In the next section we present an elementary implementation for the case $n = 2$ (analogous to univariate density estimation) and illustrate its application in portfolio management with a case study.

\section{Empirical examples} \label{sec:empirical}
\subsection{A case study}
\begin{figure}
\includegraphics[scale=0.5]{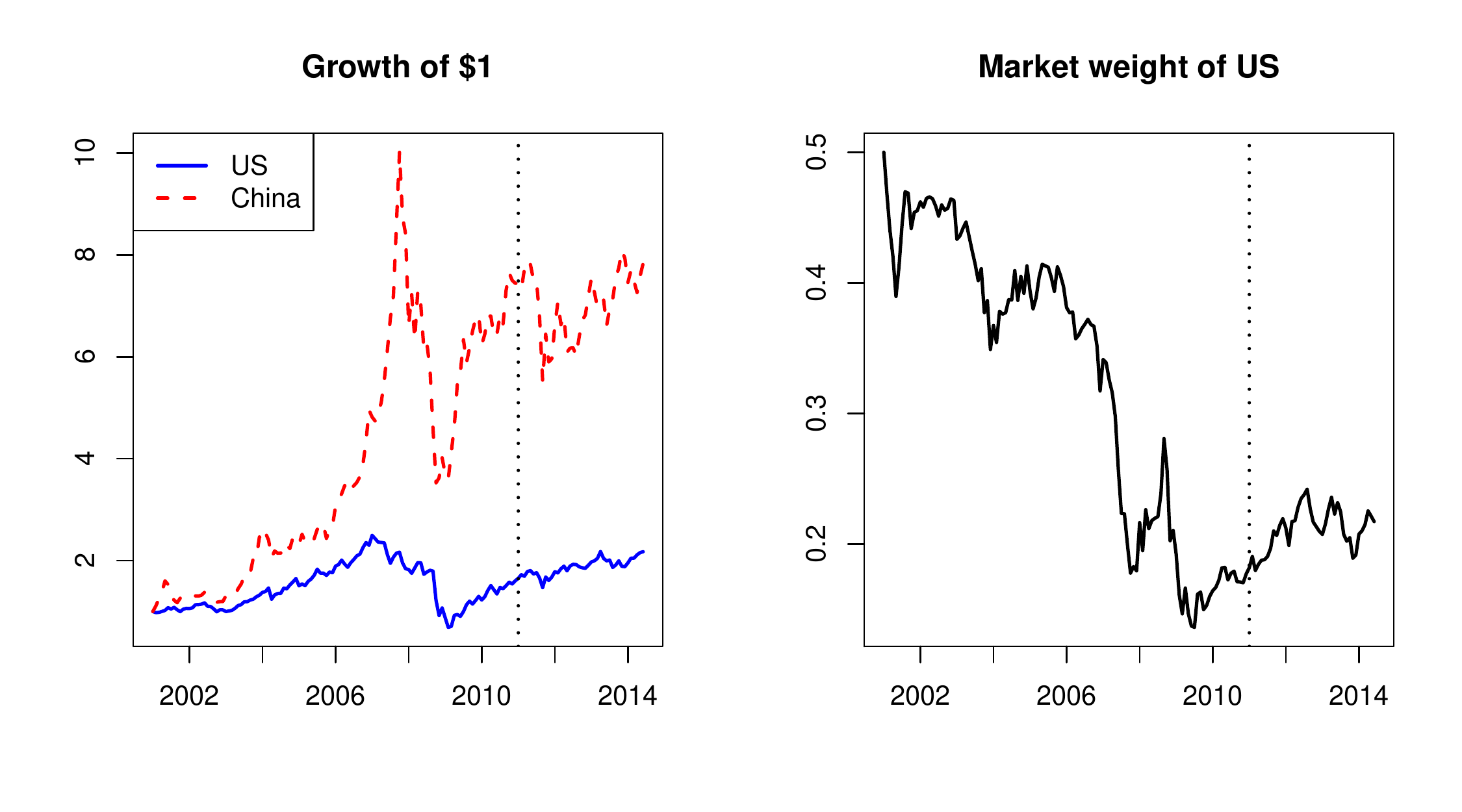}
\vspace{-20pt}
\caption{The figure on the left shows the growth of $\$1$ for each asset, and the one on the right shows the time series of the market weight $\mu_1(t)$ of US. The vertical dotted line divides the data set into the training and testing periods respectively.}
\label{fig:data}
\end{figure}

In global portfolio management, an important topic is the determination of the aggregate portfolio weights for countries. In this example we consider two countries: US and China. We represent them by the S\&P US BMI index (asset 1) and the S\&P China BMI index (asset 2) respectively. The `market' consists of these two assets. We collect monthly data from January 2001 to June 2014 using Bloomberg. The benchmark portfolio is taken to be the buy-and-hold portfolio starting with weights $(0.5, 0.5)$ at January 2001. Here the initial market weights $(0.5, 0.5)$ are chosen arbitrarily. The data from January 2001 to December 2010 will be used as the training data to optimize the portfolio which will be backtested in the subsequent period. The market weights at January 2011 are $(0.1819, 0.8191)$. The data is plotted in Figure \ref{fig:data}.

Let $K \subset \Delta^{(2)}$ be the compact set defined by
\begin{equation} \label{eqn:set}
K = \{p = (p_1, p_2) \in \Delta^{(2)}: 0.1 \leq p_1 \leq 0.3\}.
\end{equation}
Our objective here is to optimize a functionally generated portfolio to be held as long as the market weights stay within $K$. If the market weight of US approach these boundary points (regarded as a regime change), a new portfolio will be chosen, so $0.1$ and $0.3$ can be thought of as the {\it trigger points}.

\subsection{The intensity measure and constraints} \label{subsec:example}
\begin{figure}
\includegraphics[scale=0.4]{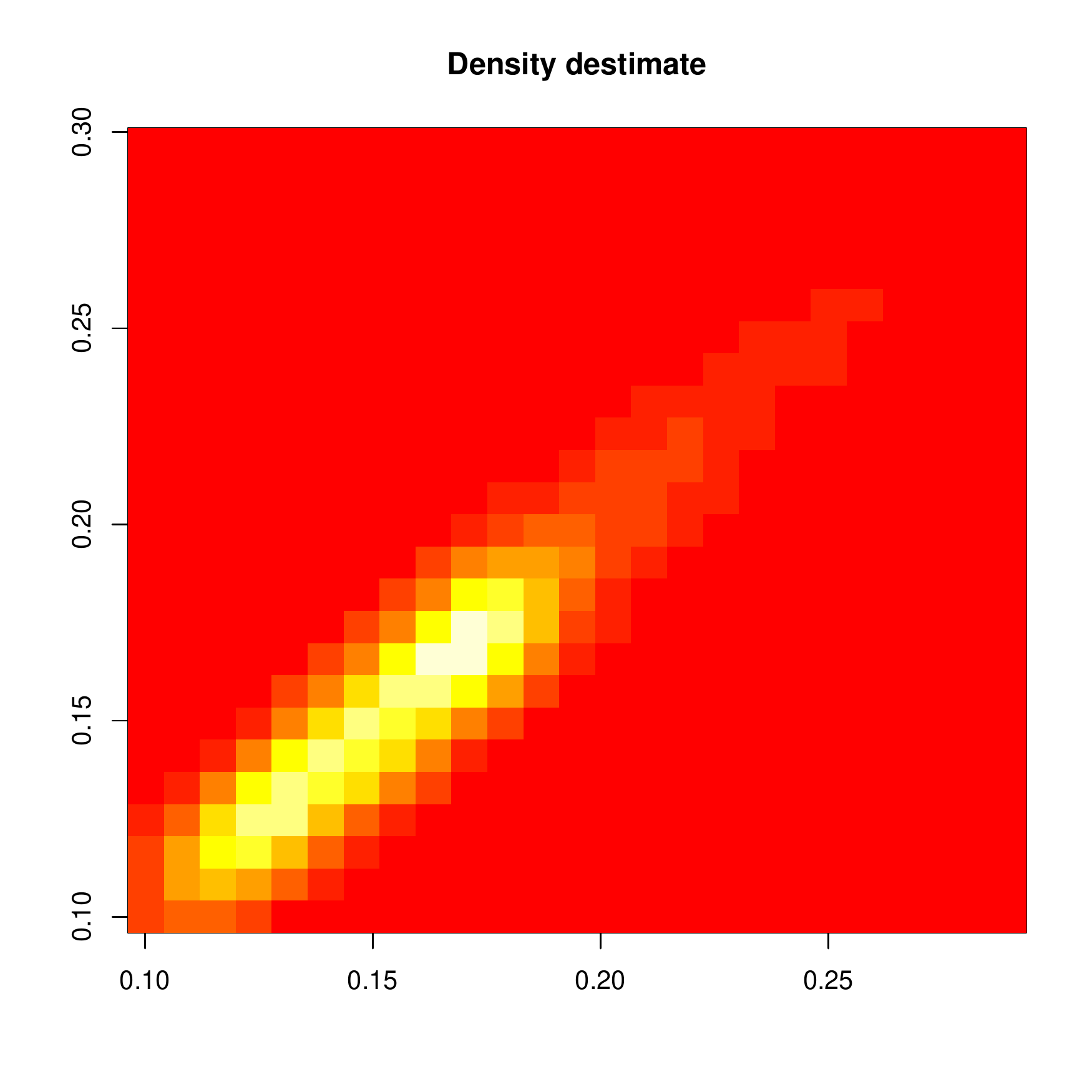}
\vspace{-15pt}
\caption{Density estimate of ${\Bbb P}_N$ on $K \times K$ in terms of the market weight of US.}
\label{fig:density}
\end{figure}

Suppose $t = 0$ corresponds to January 2011. We model $\{\mu(t)\}_{t \geq 0}$ as a discrete-time stochastic process (time is monthly) where $\mu(0)$ is constant. Let ${\Bbb P}$ be the measure in Example \ref{exm:green} where $\tau$ is the first exit time of $K$ given in \eqref{eqn:set}.

If a stochastic model is given, we may approximate ${\Bbb P}$ by simulating paths of $\{\mu(t)\}$ killed upon exiting $K$. The resulting empirical measure
\[
{\Bbb P}_N = \frac{1}{N} \sum_{j = 1}^N \delta_{(p(j), q(j))}
\]
is then taken as the intensity measure of the optimization problem \eqref{eqn:newoptim2}.

Since our main concern is the implementation of the optimization problem \eqref{eqn:newoptim2}, sophisticated modeling of $\{\mu(t)\}$ will not be attempted and we will use a simple method to simulate paths of $\{\mu(t)\}$. Namely, starting at $\mu(0) =  (0.1819, 0.8191)$, we simulate paths of $\{\mu(t)\}_{t = 0}^{\tau - 1}$ by {\it bootstrapping} the past returns of the two assets and computing the corresponding market weight series. In view of the possible recovery of US, before the simulation we recentered the past returns so that they both have mean zero over the training period. (Essentially, only the difference in returns matter for the evolution of the market weights.) We simulated 50 such paths and obtained $N = 3115$ pairs $(p(j), q(j))$ in $K \times K$. A density estimate of ${\Bbb P}_N$ (in terms of the market weight of US) is plotted in Figure \ref{fig:density}. To reduce the number of variables, the market weights are rounded to 3 decimal places, so the market weights of US take values in the set $D = \{0.100, 0.101, ..., 0.299, 0.300\}$.

Next we specify the constraints for $\{\pi(p_1) := \pi(p_1, 1 - p_1): p_1 \in D\}$. (This notation should cause no confusion since the market weight of China is determined by that of US.) First, we require that $\pi_1(p_1)$ is non-decreasing in $p_1$, i.e.,
\[
\pi_1(0.100) \leq \pi_1(0.101) \leq \cdots \leq \pi_1(0.300).
\]
This imposes a shape constraint on the portfolio weights which guarantees that the portfolio weights always move in the direction of market movement. To control the concentration of the portfolio we require also that the weight ratio of US satisfies $0.5 \leq \frac{\pi_1(p_1)}{p_1} \leq 2$ for $p_1 \in D$ (since there are only two assets, this implies a weight ratio bound for China). These constraints determine the convex set $C$ in the optimization problem \eqref{eqn:newoptim2} we are about to solve.

\subsection{Optimization procedure}
\begin{figure}
\includegraphics[scale=0.50]{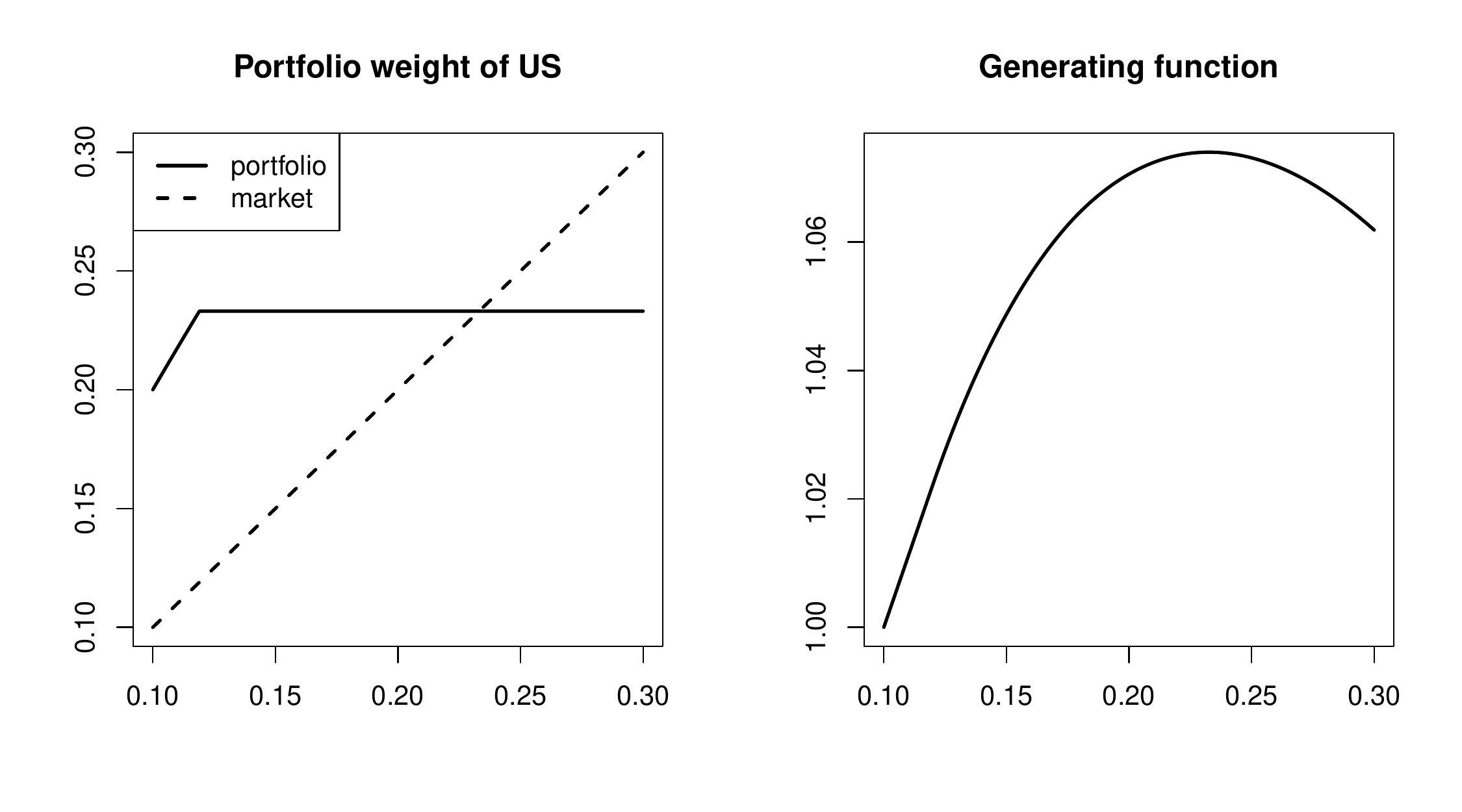}
\vspace{-20pt}
\caption{The portfolio weight and the generating function of the optimized portfolio.}
\label{fig:result}
\end{figure}

By Theorem \ref{thm:optim3}, it suffices to optimize over generating functions that are piecewise linear over the data points. First we introduce some simplifying notations. Write the set of grid points as $D = \{x_1 < x_2 < \cdots < x_m\}$ and let $x_0 = 0$, $x_{m+1} = 1$ be the endpoints of the interval. Let the decision variables be
\[
z_j := \pi(x_j, 1 - x_j), \quad j = 1, ..., m, 
\]
\[
\varphi_j := \Phi(x_j, 1 - x_j), \quad j = 0, ..., m + 1.
\]
By scaling, we may assume $\varphi_1 = 1$. The constraints on $\{\varphi_j\}$ are
\begin{equation} \label{eqn:nonnegative}
\varphi_j \geq 0, \quad j = 0, ..., m+1, \quad \varphi_1 = 1, \quad \text{(non-negativity)}
\end{equation}
\begin{equation} \label{eqn:concavity}
s_0 \geq s_1 \geq \cdots \geq s_m, \quad s_j := \frac{\varphi_{j+1} - \varphi_j}{x_{j+1} - x_j}. \quad \text{(concavity)}
\end{equation}
We require that $\pi$ is generated by $\Phi$. By \eqref{eqn:n=2weight} and Lemma \ref{lem:superdiff}, it can be seen that $z_j$ satisfies the inequality
\begin{equation} \label{eqn:generated}
x_j + x_j(1 - x_j) \frac{s_j}{\varphi_j} \leq z_j \leq x_j + x_j(1 - x_j) \frac{s_{j-1}}{\varphi_j}, \quad j = 1, ..., m. \quad \text{($(\pi, \Phi) \in {\mathcal{FG}}$)}
\end{equation}
We require that $z_j$ is non-decreasing in $j$:
\begin{equation}  \label{eqn:monotone}
z_1 \leq z_2 \leq \cdots \leq z_m. \quad \text{(monotonicity)}
\end{equation}
Finally, we require that the weight ratios are bounded between $0.5$ and $2$:
\begin{equation}  \label{eqn:weightratio}
0.5 \leq \frac{z_j}{x_j} \leq 2, \quad j = 1, ..., m.  \quad \text{(weight ratios)}
\end{equation}
With the constraints \eqref{eqn:nonnegative}-\eqref{eqn:weightratio} we maximize
\[
\int T\left(q \mid p\right) \mathrm{d}{\Bbb P}_N = \frac{1}{N} \sum_{j = 1}^N T\left( q(j) \mid p(j) \right)
\]
over $\{z_j\}$ and $\{\varphi_j\}$. This is a standard non-linear, but smooth, constrained optimization problem (convexity is lost because $\Phi$ is now piecewise linear). We implement this optimization problem using the \verb"fmincon" function in \verb"MATLAB". The optimal portfolio weights together with the generating function are plotted in Figure \ref{fig:result}. It turns out that the optimal portfolio is close to constant-weighted (with weights $(0.2331, 0.7669)$). Note that the constraint on the weight ratio limits the deviation of $\pi_1(p_1)$ from the market weight $p_1$. If the weight ratio constraint was not imposed (while the monotonicity constraint was kept), the optimal portfolio would be the equal-weighted portfolio $\pi \equiv (0.5, 0.5)$, and the reason can be seen from the proof of Lemma \ref{lem:diffeqn}.

\subsection{Backtesting the portfolio}
\begin{figure}
\includegraphics[scale=0.50]{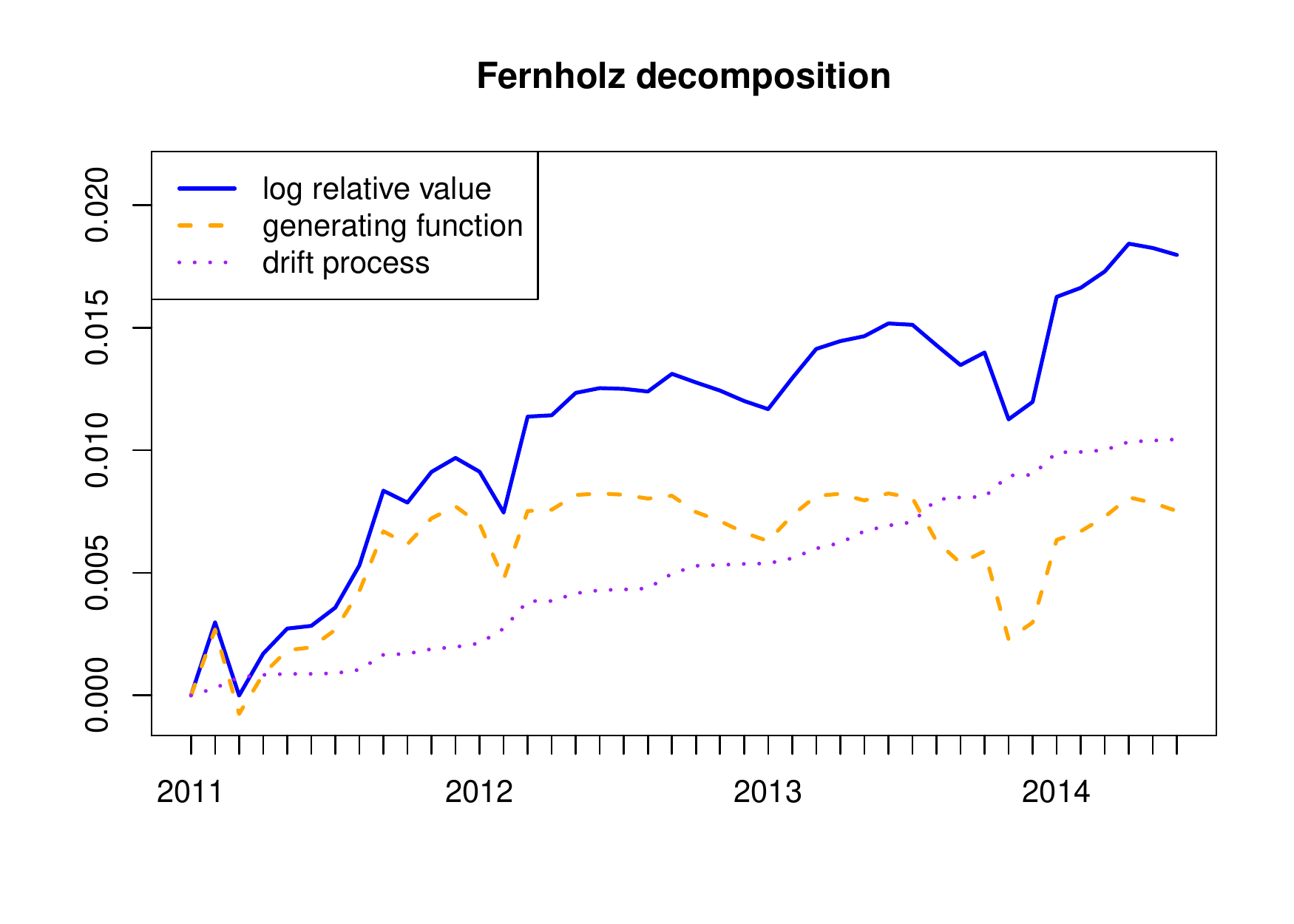}
\vspace{-20pt}
\caption{Fernholz's decomposition of the optimized portfolio over the testing period. The log relative value is $\log V_{\pi}(t)$. The generating function term is $\log \Phi(\mu(t)) - \log \Phi(\mu(0))$, and the drift process is $A(t)$.}
\label{fig:backtest}
\end{figure}

Finally, we compute the performance of the optimized portfolio over the testing period January 2011 to June 2014. The result (plotted using the function \verb"FernholzDecomp" of the \verb"RelValAnalysis" package) is shown in Figure \ref{fig:backtest}. Over the testing period, the portfolio beats the market by nearly 2\% in log scale and its performance has been steady. From the decomposition, about half of the outperformance is attributed to the increase of the generating function (note that the market weight of US becomes closer to $0.2331$ where the generating function attains its maximum), and the rest comes from the drift process. That the optimal portfolio is close to constant-weighted may not be very interesting, but this is a consequence of the data and our choice of constraints and is by no means obvious. Our optimization framework allows many other possibilities especially when there are multiple assets. Other useful constraints and efficient algorithms are natural subjects of further research.

\appendix
\section{Proofs of Theorem \ref{thm:optim} and Theorem \ref{thm:optim2}} \label{sec:appendix}
First we will state and prove some lemmas from convex analysis.

\begin{lemma} \label{lem:compactness}
Let $p_0 \in \Delta^{(n)}$ be fixed and let ${\mathcal C}_0$ be the collection of positive concave functions $\Phi$ on $\Delta^{(n)}$ satisfying $\Phi(p_0) = 1$. Then any sequence in ${\mathcal C}_0$ has a subsequence which converges locally uniformly on $\Delta^{(n)}$ to a function in ${\mathcal C}_0$.
\end{lemma}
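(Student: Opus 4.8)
The plan is to show that $\mathcal{C}_0$ is locally uniformly bounded and locally equi-Lipschitz, and then to extract a locally uniformly convergent subsequence by Arzel\`a--Ascoli whose limit inherits membership in $\mathcal{C}_0$. The only substantive estimate — and the step I expect to be the main obstacle — is a \emph{uniform} upper bound on the family over compact sets; everything else is then standard convex analysis.

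First I would prove that $\mathcal{C}_0$ is bounded above on each compact $K \subset \Delta^{(n)}$, uniformly over the family. Fix $\Phi \in \mathcal{C}_0$ and $p_1 \in K$ with $p_1 \neq p_0$, and extend the segment from $p_1$ through $p_0$ until it nearly reaches the boundary, choosing an interior point $p_2 \in \Delta^{(n)}$ on the far side so that $p_0 = \lambda p_1 + (1-\lambda)p_2$ for some $\lambda \in (0,1)$. Concavity together with positivity gives
\[
1 = \Phi(p_0) \geq \lambda \Phi(p_1) + (1-\lambda)\Phi(p_2) \geq \lambda \Phi(p_1),
\]
so that $\Phi(p_1) \leq 1/\lambda$. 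The geometric content is that $\lambda = \|p_0 - p_2\| / \|p_1 - p_2\|$ is bounded below by a constant $\lambda_0 = \lambda_0(K) > 0$ independent of $\Phi$ and of $p_1 \in K$: the numerator is at least the distance from the fixed interior point $p_0$ to $\partial \Delta^{(n)}$ (up to an arbitrarily small slack from pushing $p_2$ inside), while the denominator is at most the diameter of $\Delta^{(n)}$. Hence $\Phi \leq 1/\lambda_0 =: M_K$ on $K$. Since every $\Phi \in \mathcal{C}_0$ is positive we also have $\Phi \geq 0$, so $0 \leq \Phi \leq M_K$ and the family is locally uniformly bounded. (A symmetric argument extending $p_0$ through $p_1$ to the near boundary yields a uniform positive lower bound as well, but this is not needed for what follows.)

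Next I would invoke the standard estimate that a concave function satisfying $|\Phi| \leq M$ on an open set is Lipschitz on any compact subset at distance $\delta$ from the boundary of that set, with Lipschitz constant controlled by $M/\delta$ (see \cite[Theorem 10.4]{R70}). Applying this on a slightly enlarged compact set $K' \supset K$ using the bound $M_{K'}$ shows that $\mathcal{C}_0$ is equi-Lipschitz, hence equicontinuous, on $K$. Taking a compact exhaustion $\Delta^{(n)} = \bigcup_j K_j$ with $K_j \subset \mathrm{int}(K_{j+1})$, Arzel\`a--Ascoli applies on each $K_j$, and a diagonal argument yields a single subsequence converging locally uniformly on $\Delta^{(n)}$ to some function $\Phi_\infty$.

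Finally I would verify $\Phi_\infty \in \mathcal{C}_0$. Passing to the limit in $\Phi_n(\theta x + (1-\theta)y) \geq \theta \Phi_n(x) + (1-\theta)\Phi_n(y)$ shows $\Phi_\infty$ is concave, and $\Phi_\infty(p_0) = \lim_n \Phi_n(p_0) = 1$. As a locally uniform limit of positive functions, $\Phi_\infty \geq 0$; if $\Phi_\infty(p_1) = 0$ for some interior $p_1 \neq p_0$, extend the segment from $p_0$ through $p_1$ to an interior point $q$ beyond $p_1$, so that $p_1 = (1-s)p_0 + sq$ with $s \in (0,1)$, and concavity forces
\[
0 = \Phi_\infty(p_1) \geq (1-s)\Phi_\infty(p_0) + s\,\Phi_\infty(q) \geq (1-s) > 0,
\]
a contradiction. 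Thus $\Phi_\infty$ is a positive concave function with $\Phi_\infty(p_0) = 1$, i.e.\ $\Phi_\infty \in \mathcal{C}_0$, completing the proof.
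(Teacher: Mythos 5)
Your proof is correct and follows essentially the same route as the paper: the substantive step in both is the uniform upper bound $\Phi \leq \mathrm{diam}(\Delta^{(n)})/\mathrm{dist}(p_0,\partial\Delta^{(n)})$ obtained from concavity, positivity, and the normalization at $p_0$, after which the paper simply cites \cite[Theorem 10.9]{R70} for locally uniform subsequential convergence while you unpack that theorem by hand via uniform local Lipschitz bounds, Arzel\`a--Ascoli, and a diagonal argument. Your explicit check that the limit is strictly positive (not merely non-negative) is a detail the paper leaves implicit, and you handle it correctly.
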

\begin{proof}
By \cite[Theorem 10.9]{R70}, it suffices to prove that ${\mathcal C}_0$ has a uniform upper bound (the lower bound is immediate since functions in ${\mathcal C}_0$ are non-negative). We first derive an upper bound in the one-dimensional case. Let $f$ be a non-negative concave function on the real interval $[a, b]$. Let $x_0 \in (a, b)$ and suppose $f(x_0) = 1$. Let $x \in [a, x_0]$ and write $x_0 = \lambda x + (1 - \lambda)b$ for some $\lambda \in [0, 1]$. By concavity,
\[
1 = f(x_0) \geq \lambda f(x) + (1 - \lambda) f(b) \geq \lambda f(x).
\]
Thus
\[
f(x) \leq \frac{1}{\lambda} = \frac{b - x}{b - x_0} \leq \frac{b - a}{b - x_0}, \quad x \in [a, x_0].
\]
The case $x \in [x_0, b]$ can be handled similarly, and we get
\begin{equation} \label{eqn:concaveupperbound}
f(x) \leq \frac{b - a}{\min\{|x_0 - a|, |x_0 - b|\}}, \quad x \in [a, b].
\end{equation}
Now let $\Phi \in {\mathcal C}_0$. Applying \eqref{eqn:concaveupperbound} to the restrictions of $\Phi$ to line segments in $\Delta^{(n)}$ containing $p_0$, we get
\[
\Phi(p) \leq \frac{\text{diam}\left(\Delta^{(n)}\right)}{\text{dist}\left(p_0, \partial \Delta^{(n)}\right)}, \quad p \in \Delta^{(n)},
\]
where $\text{diam}(\Delta^{(n)})$ is the diameter of $\Delta^{(n)}$ and $\text{dist}(p_0, \partial \Delta^{(n)})$ is the distance from $p_0$ to the boundary of $\Delta^{(n)}$. This completes the proof of the lemma.
\qed\end{proof}

\begin{lemma} \label{lem:subdiffconvergence}
Let $(\pi, \Phi), (\pi^{(k)}, \Phi^{(k)}) \in {\mathcal{FG}}$, $k \geq 1$. Suppose $\Phi^{(k)}$ converges locally uniformly on $\Delta^{(n)}$ to $\Phi$. Let $p \in \Delta^{(n)}$ be a point at which $\Phi$ is differentiable. Then given $\varepsilon > 0$, there exists $\delta > 0$ and a positive integer $k_0$ such that $\|\pi^{(k)}(q) - \pi(p)\| < \varepsilon$ whenever $k \geq k_0$ and $q \in B(p, \delta)$. In particular, $\pi^{(k)}$ converges $m$-almost everywhere to $\pi$ as $k \rightarrow \infty$.
\end{lemma}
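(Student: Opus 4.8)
The plan is to reduce the convergence of the portfolio weights to a convergence statement for supergradients of the generating functions, and then to invoke the closedness of the superdifferential relation under locally uniform convergence of concave functions. First I would recall, via Lemma \ref{lem:superdiff} and formula \eqref{eqn:definepi}, that if $v^{(k)}(q) \in \partial \log \Phi^{(k)}(q)$ is any supergradient used to generate $\pi^{(k)}$, then
\[
\pi_i^{(k)}(q) = q_i\Bigl(1 + v_i^{(k)}(q) - \sum_{j=1}^n q_j v_j^{(k)}(q)\Bigr),
\]
and likewise $\pi(p)$ is the image of the pair $(p, \nabla \log \Phi(p))$ under the same continuous map, the supergradient at $p$ being unique because $\Phi$, hence $\log \Phi$ (by \eqref{eqn:supdiffequal}), is differentiable at $p$. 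Thus it suffices to prove: for every $\varepsilon' > 0$ there exist $\delta > 0$ and $k_0$ such that every $v^{(k)}(q) \in \partial \log \Phi^{(k)}(q)$ with $k \geq k_0$ and $q \in B(p, \delta)$ satisfies $\|v^{(k)}(q) - \nabla \log \Phi(p)\| < \varepsilon'$. Substituting this into the displayed formula and letting $q \to p$ then yields $\|\pi^{(k)}(q) - \pi(p)\| < \varepsilon$.

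Second, I would establish uniform local boundedness of these supergradients. Fix $r > 0$ with $\overline{B(p, 2r)} \subset \Delta^{(n)}$. Since $\Phi$ is positive and continuous, it is bounded between two positive constants on $\overline{B(p, 2r)}$, so locally uniform convergence forces $\log \Phi^{(k)}$ to be uniformly bounded there for all large $k$. A uniformly bounded family of concave functions is uniformly Lipschitz on the smaller ball $B(p, r)$ by the standard Lipschitz estimate for concave functions (\cite[Theorem 10.4]{R70}), and consequently their supergradients on $B(p, r)$ are uniformly bounded.

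The crux is a contradiction argument using closedness of the superdifferential. If the desired estimate failed, there would be $k_j \to \infty$, $q_j \to p$, and $\xi_j \in \partial \log \Phi^{(k_j)}(q_j)$ with $\|\xi_j - \nabla \log \Phi(p)\| \geq \varepsilon'$ for all $j$. By the boundedness just proved, a subsequence gives $\xi_j \to \xi$. Writing the supergradient inequality $\log \Phi^{(k_j)}(y) \leq \log \Phi^{(k_j)}(q_j) + \langle \xi_j, y - q_j \rangle$ for all $y$ and passing to the limit — using locally uniform convergence to send $\log \Phi^{(k_j)}(y) \to \log \Phi(y)$ and $\log \Phi^{(k_j)}(q_j) \to \log \Phi(p)$ (the latter combining uniform convergence on a compact neighborhood with continuity of $\log \Phi$ at $p$) — yields $\log \Phi(y) \leq \log \Phi(p) + \langle \xi, y - p \rangle$ for all $y$, i.e. $\xi \in \partial \log \Phi(p)$. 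But differentiability of $\log \Phi$ at $p$ makes $\partial \log \Phi(p) = \{\nabla \log \Phi(p)\}$ a singleton, so $\xi = \nabla \log \Phi(p)$, contradicting $\|\xi - \nabla \log \Phi(p)\| \geq \varepsilon'$.

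Finally, for the last sentence, I would use that a finite concave function on $\Delta^{(n)}$ is differentiable $m$-almost everywhere (\cite[Theorem 25.5]{R70}); at each such point $p$ the main estimate with $q = p$ gives $\pi^{(k)}(p) \to \pi(p)$, whence $\pi^{(k)} \to \pi$ $m$-a.e. I expect the only delicate points to be the uniform boundedness of the supergradients and the careful passage to the limit with the moving base point $q_j \to p$; once those are in hand the remainder is a direct translation through \eqref{eqn:definepi}.
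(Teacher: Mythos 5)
Your proof is correct and follows essentially the same route as the paper: both reduce the claim to convergence of supergradients of $\log\Phi^{(k)}$ near a point of differentiability of $\log\Phi$, push this through the continuous map \eqref{eqn:definepi}, and obtain the almost-everywhere statement from almost-everywhere differentiability of concave functions. The only difference is that where the paper cites a superdifferential convergence theorem from Hiriart-Urruty and Lemar\'echal, you give a self-contained proof of it via uniform local Lipschitz bounds and the closedness of the supergradient inequality under limits, which is a perfectly sound substitute.
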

\begin{proof}
It is clear that $\log \Phi^{(k)}$ also converges locally uniformly to $\log \Phi$. We will use a well-known convergence result for the superdifferentials of concave functions, see \cite[Theorem 6.2.7]{HJL96}. Indeed, the proof of \cite[Theorem 6.2.7]{HJL96} implies a slightly stronger statement than the theorem. Namely, for any $p \in \Delta^{(n)}$ and any $\varepsilon > 0$, there exists a positive integer $k_0$ and $\delta > 0$ such that
\begin{equation} \label{eqn:claim1}
\begin{split}
\partial \log \Phi^{(k)}(q) &\subset \partial\log \Phi(p) + B(0, \varepsilon), \quad k \geq k_0, \quad q \in B(p, \delta), \\
\partial \log \Phi(q) &\subset \partial \log\Phi(p) + B(0, \varepsilon), \quad q \in B(p, \delta).
\end{split}
\end{equation}

Suppose $\Phi$ is differentiable at $p$. Then $\partial \log\Phi(p)$ is a singleton. By Lemma \ref{lem:superdiff}, there are measurable selections $\xi^{(k)}$ and $\xi$ of $\partial \log \Phi^{(k)}$ and $\partial \log \Phi$ respectively such that
\begin{equation*}
\begin{split}
\pi^{(k)}_i(q) &= q_i \left(\xi^{(k)}_i(q) + 1 - \sum_{j = 1}^n q_j\xi^{(k)}_j(q)\right), \\
\pi_i(q) &= q_i \left(\xi_i(q) + 1 - \sum_{j = 1}^n q_j\xi_j(q)\right), \\
\end{split}
\end{equation*}
for all $q \in \Delta^{(n)}$, $i = 1, ..., n$, and $k \geq 1$. 

For each $i = 1, ..., n$, consider the map $G_i$ defined by
\[
(q, \xi) \in \Delta^{(n)} \times T\Delta^{(n)} \mapsto q_i\left(\xi_i + 1 - \sum_{j = 1}^n q_j \xi_j\right).
\]
The map $G = (G_1, ..., G_n)$ is clearly jointly continuous. We have $\pi(q) = G(q, \xi(q))$ and $\pi^{(k)}(q) = G(q, \xi^{(k)}(q))$.

By \eqref{eqn:claim1}, for any $\varepsilon > 0$, there exists $k_0$ and $\delta > 0$ such that 
\begin{equation} \label{eqn:delta}
\|\xi^{(k)}(q) - \xi(p)\| < \varepsilon, \quad \|\xi(q) - \xi(p)\| < \varepsilon
\end{equation}
for all $k \geq k_0$ and $q \in B(p, \delta)$. The claim \eqref{eqn:claim1} follows from \eqref{eqn:delta} and the joint continuity of $G$ at $(q, \xi(q))$. The last statement follows since a finite concave function on $\Delta^{(n)}$ is differentiable $m$-almost everywhere \cite[Theorem 25.5]{R70}.
\qed\end{proof}

\begin{proof}[Proof of Theorem \ref{thm:optim}]
(i) The existence of an optimal solution will be proved by a compactness argument. Suppose $(\pi^{(k)}, \Phi^{(k)})$ is a maximizing sequence for \eqref{eqn:newoptim1}. By scaling, we may assume $\Phi^{(k)}(p_0) = 1$ where $p_0 \in \Delta^{(n)}$ is fixed. By Lemma \ref{lem:compactness}, we may replace it by a subsequence such that $\Phi^{(k)}$ converges locally uniformly on $\Delta^{(n)}$ to a positive concave function $\Phi$ on $\Delta^{(n)}$. By Lemma \ref{lem:superdiff}(ii), $\Phi$ generates a portfolio $\pi$.

\medskip
\noindent
{\it Case 1.} ${\Bbb P}$ is absolutely continuous.
By Lemma \ref{lem:subdiffconvergence}, $\pi^{(k)}$ converges $m$-almost everywhere to $\pi$. Let $T^{(k)}$ and $T$ be the L-divergences of $(\pi^{(k)}, \Phi^{(k)})$ and $(\pi, \Phi)$ respectively. Recall that ${\Bbb P}$ is supported on $K \times K$ where $K \subset \Delta^{(n)}$ is compact. For $x \in \overline{\Delta^{(n)}}$ and $p, q \in K$, we have
\begin{equation} \label{eqn:Tupperbound}
1 + \left\langle \frac{x}{p}, q - p \right\rangle = \sum_{i = 1}^n x_i \frac{q_i}{p_i} \leq \sum_{i   = 1}^n \frac{x_i}{p_i} \leq \frac{1}{\min_{p \in K, 1 \leq i \leq n} p_i}.
\end{equation}
Also $\Phi^{(k)} \rightarrow \Phi$ uniformly on $K$. Hence the family of L-divergences $\{T, T^{(1)}, T^{(2)}, ...\}$ is uniformly bounded on $K \times K$. By Lebesgue's dominated convergence theorem, we have
\[
\lim_{k \rightarrow \infty} \int T^{(k)}\left(q \mid p \right) \mathrm{d} {\Bbb P} = \int T\left(q \mid p \right) \mathrm{d} {\Bbb P}.
\]
Thus $(\pi, \Phi)$ is optimal.

\medskip
\noindent
{\it Case 2.} ${\Bbb P}$ is discrete and has masses at $(p(j), q(j))$. Since $\overline{\Delta^{(n)}}$ is compact, by a diagonal argument we can extract a further subsequence (still denoted by $\{(\pi^{(k)}, \Phi^{(k)})\}$) such that $\lim_{k \rightarrow \infty} \pi^{(k)}(p(j))$ exists for each $j$. Now we can redefine $\pi$ on $\{p(1), p(2), ...\}$ such that $\pi(p(j)) = \lim_{k \rightarrow \infty} \pi^{(k)}(p(j))$ for each $j$. Since we only modify $\pi$ at countably many points, $\pi$ is still Borel measurable. Now we may apply Lebesgue's dominated convergence theorem and conclude that $(\pi, \Phi)$ is optimal.

\medskip
\noindent
(ii) Suppose $(\pi^{(1)}, \Phi^{(1)})$ and $(\pi^{(2)}, \Phi^{(2)})$ are optimal solutions. Define $\pi = \frac{1}{2} \pi^{(1)} + \frac{1}{2} \pi^{(2)}$ which is generated by the geometric mean $\Phi  = \sqrt{\Phi^{(1)}\Phi^{(2)}}$ (Lemma \ref{lem:FGconvex}) . Also let $T$, $T^{(1)}$ and $T^{(2)}$ be the L-divergences of $(\pi, \Phi)$, $(\pi^{(1)}, \Phi^{(2)})$ and $(\pi^{(2)}, \Phi^{(2)})$ respectively. By concavity of the L-divergence (Lemma \ref{lem:Driftconcave}), we have
\begin{equation} \label{eqn:uniqueness2}
\int T\left(q \mid p \right) \mathrm{d} {\Bbb P} \geq \frac{1}{2} \left(\int T^{(1)}\left(q \mid p \right) \mathrm{d} {\Bbb P}  + \int T^{(2)}\left(q \mid p \right) \mathrm{d} {\Bbb P}\right).
\end{equation}
Hence $(\pi, \Phi)$ is also optimal. It follows from \eqref{eqn:uniqueness2} and the strict concavity of the logarithm that
\[
\left\langle \frac{\pi^{(1)}(p)}{p}, q - p \right\rangle = \left\langle\frac{\pi^{(2)}(p)}{p}, q - p \right\rangle
\]
for ${\Bbb P}$-almost all $(p, q)$.

If ${\Bbb P}$ is absolutely continuous and satisfies the support condition, then for $m$-almost all $p$ for which $f(p) > 0$, we have
\[
\left\langle \frac{\pi^{(1)}(p)}{p}, v \right\rangle = \left\langle\frac{\pi^{(2)}(p)}{p}, v \right\rangle
\]
for all tangent vectors $v$. This and the fact that $\pi^{(1)}(p), \pi^{(2)}(p) \in \overline{\Delta^{(n)}}$ imply that $\pi^{(1)}(p) = \pi^{(2)}(p)$ $m$-almost everywhere on $\{p: f(p) > 0\}$.
\qed\end{proof}

\begin{proof} [Proof of Theorem \ref{thm:optim2}]
By scaling, we may assume that $\widehat{\Phi}^{(N)}(p_0) = \Phi(p_0) = 1$ for all $N \geq 1$. By Lemma \ref{lem:compactness}, any subsequence of $\{\widehat{\Phi}^{(N)}\}$ has a further subsequence which converges locally uniformly to a positive concave function $\widehat{\Phi}$ on $\Delta^{(n)}$. Replacing $\{\widehat{\Phi}^{(N)}\}$ by such a convergent subsequence, we may assume that $\widehat{\Phi}^{(N)} \rightarrow \widehat{\Phi}$ locally uniformly on $\Delta^{(n)}$. Let $\widehat{\pi}$ be any portfolio generated by $\widehat{\Phi}$ (which exists by Lemma \ref{lem:superdiff}(ii)). We claim that $(\widehat{\pi}, \widehat{\Phi})$ is optimal and hence $\widehat{\pi} = \pi$ $m$-almost everywhere on $\{p: f(p) > 0\}$.

Let $\widehat{T}^{(N)}$, $\widehat{T}$ and $T$ be the L-divergences of $(\widehat{\pi}^{(N)}, \widehat{\Phi}^{(N)})$, $(\widehat{\pi}, \widehat{\Phi})$ and $(\pi, \Phi)$ respectively. By the optimality of $(\pi^{(N)}, \Phi^{(N)})$ for the measure ${\Bbb P}_N$, we have
\begin{equation} \label{eqn:finiteNoptimality}
\int \widehat{T}^{(N)}\left(q \mid p \right) \mathrm{d} {\Bbb P}_N \geq \int T\left(q \mid p \right) \mathrm{d} {\Bbb P}_N, \quad N \geq 1.
\end{equation}

We would like to let $N \rightarrow \infty$ in \eqref{eqn:finiteNoptimality}. The L-divergence $T\left( q \mid p \right)$ is clearly continuous on $K \times K$ (note that $K$ is compact). By the definition of weak convergence, we have
\[
\lim_{N \rightarrow \infty} \int T\left(q \mid p \right) \mathrm{d} {\Bbb P}_N = \int T\left(q \mid p\right) \mathrm{d} {\Bbb P}.
\]
Suppose we can prove that
\begin{equation} \label{eqn:bigclaim}
\lim_{N \rightarrow \infty} \int \widehat{T}^{(N)}\left(q \mid p \right) \mathrm{d} {\Bbb P}_N = \int \widehat{T}\left(q \mid p\right) \mathrm{d} {\Bbb P}.
\end{equation}
Then letting $N \rightarrow \infty$ in \eqref{eqn:finiteNoptimality}, we have
\[
\int \widehat{T}\left(q \mid p\right) \mathrm{d} {\Bbb P} \geq  \int T\left(q \mid p\right) \mathrm{d} {\Bbb P},
\]
so $(\widehat{\pi}, \widehat{\Phi})$ is optimal for the measure ${\Bbb P}$. Since ${\Bbb P}$ satisfies the support condition by assumption, by Theorem \ref{thm:optim}(ii) $\widehat{\pi}$ and $\pi$ are equal $m$-almost everywhere on $\{p: f(p) > 0\}$.

\medskip 

Thus we only need to prove \eqref{eqn:bigclaim}. Here the technicality lies in the fact that both the integrands and the measures change with $N$, so standard integral convergence theorems do not apply.

The main idea is to use the local uniform convergence property in Lemma \ref{lem:subdiffconvergence} and approximate the integrals in \eqref{eqn:bigclaim} by Riemann sums. Let $\varepsilon > 0$ be given. We will construct two partitions $\{A_k\}_{k = 0}^{k_0}$, $\{B_{\ell}\}_{\ell = 1}^{\ell_0}$ of $K$, points $p_k \in A_k$, $q_{\ell} \in B_{\ell}$ and a positive integer $N_0$ with the following properties:
\begin{enumerate}
\item[(i)] $A_k \times B_{\ell}$ is a ${\Bbb P}$-continuity set, i.e., ${\Bbb P}(\partial(A_k \times B_{\ell})) = 0$. Thus, by the Portmanteau theorem (see \cite{B09}), we have
\[
\lim_{N \rightarrow \infty} {\Bbb P}_N(A_k \times  B_{\ell}) = {\Bbb P}(A_k \times B_{\ell}).
\]
So for $N \geq N_0$ where $N_0$ is sufficiently large, we have
\[
\left|{\Bbb P}_N(A_k \times  B_{\ell}) - {\Bbb P}(A_k \times B_{\ell})\right| < \frac{\varepsilon}{k_0\ell_0}
\]
for all $k$, $\ell$.
\item[(ii)] ${\Bbb P}(A_0 \times K) < \varepsilon$ and ${\Bbb P}_N(A_0 \times K) < \varepsilon$ for $N \geq N_0$.
\item[(iii)] For $N \geq N_0$, $p \in A_k$, $q \in B_{\ell}$, $1 \leq k \leq k_0$ and $1 \leq \ell \leq \ell_0$, we have
\begin{equation*}
\left|\widehat{T}^{(N)}\left(q \mid p \right) - \widehat{T}\left(q_{\ell} \mid p_k\right)\right| < \varepsilon, \quad 
\left|\widehat{T}\left(q \mid p \right) - \widehat{T}\left(q_{\ell} \mid p_k\right)\right| < \varepsilon.
\end{equation*}
\item[(iv)] $\left|\log \widehat{\Phi}^{(N)}(p) - \log \widehat{\Phi}(p) \right| < \varepsilon$ for $p \in K$ and $N \geq N_0$. (This is immediate since $\widehat{\Phi}^{(N)}$ converges uniformly to $\widehat{\Phi}$ on $K$ and $\widehat{\Phi}$ is positive on $K$.)
\end{enumerate}

Suppose these objects have been constructed. Then for $N \geq N_0$ we can approximate the integrals as follows. By (ii) and (iii), we have
\begin{equation} \label{eqn:estimate1}
\begin{split}
& \left| \int \widehat{T}\left(q \mid p\right)\mathrm{d}{\Bbb P} - \sum_{\ell = 1}^{\ell_0} \sum_{k = 1}^{k_0} \widehat{T}\left(q_{\ell} \mid p_k\right) {\Bbb P}(A_k \times B_{\ell})\right| \\
&\leq \left| \int_{A_0 \times K} \widehat{T}\left(q \mid p\right) \mathrm{d}{\Bbb P} \right| + \sum_{\ell = 1}^{\ell_0} \sum_{k = 1}^{k_0} \int_{A_k \times B_{\ell}} \left|\widehat{T}\left(q \mid p \right) - \widehat{T}\left(q_{\ell} \mid p_k\right)\right| \mathrm{d} {\Bbb P} \\
&\leq \varepsilon \max_{p, q \in K} \widehat{T}\left(q \mid p\right) + \varepsilon.
\end{split}
\end{equation}
Similarly, we have
\begin{equation} \label{eqn:estimate2}
\begin{split}
& \left|  \int \widehat{T}^{(N)}\left(q \mid p\right)\mathrm{d}{\Bbb P}_N - \sum_{\ell = 1}^{\ell_0} \sum_{k = 1}^{k_0} \widehat{T}\left(q_{\ell} \mid p_k\right) {\Bbb P}_N(A_k \times B_{\ell})  \right|  \\
&\leq   \varepsilon \max_{p, q \in K} \widehat{T}^{(N)}\left(q \mid p\right) + \varepsilon.
\end{split}
\end{equation}

By \eqref{eqn:Tupperbound} and uniform convergence of $\{\widehat{\Phi}^{(N)}\}$ on $K$, we can bound $\max_{p, q \in K} \widehat{T}\left(q \mid p\right)$ and $ \max_{p, q \in K} \widehat{T}^{(N)}\left(q \mid p\right)$ by a constant $C$. Using (i) and (iii), we get
\begin{equation} \label{eqn:estimate3}
\begin{split}
& \left| \sum_{k, \ell} \widehat{T} \left(q_{\ell} \mid p_k \right) {\Bbb P}_N(A_k \times B_{\ell}) - 
\sum_{k, \ell} \widehat{T} \left(q_{\ell} \mid p_k \right) {\Bbb P}(A_k \times B_{\ell}) \right| \\
&\leq \sum_{k, \ell} \widehat{T}\left(q_{\ell} \mid p_k\right) \left| {\Bbb P}_N(A_k \times B_{\ell}) - {\Bbb P}(A_k \times B_{\ell})\right| \\
&\leq k_0\ell_0C\frac{\varepsilon}{k_0\ell_0} = C\varepsilon.
\end{split}
\end{equation}
Combining \eqref{eqn:estimate1}, \eqref{eqn:estimate2} and \eqref{eqn:estimate3}, we have the estimate
\[
\left|\int \widehat{T}^{(N)}\left(q \mid p\right) \mathrm{d}{\Bbb P}_N - \int \widehat{T}\left(q \mid p\right) \mathrm{d}{\Bbb P}\right| \leq (3C + 2)\varepsilon, \quad N \geq N_0,
\]
and so \eqref{eqn:bigclaim} holds.

\medskip

It remains to construct the sets $\{A_k\}$, $\{B_{\ell}\}$, the points $p_k$, $q_{\ell}$ and $N_0$ satisfying (i)-(iv). Before we begin, we note the fact that the boundary of any convex subset of $\Delta^{(n)}$ has $m$-measure zero \cite[Theorem 1]{L86}. Let $\varepsilon > 0$ be given. By \cite[Theorem 10.6]{R70}, the family $\{\widehat{\Phi}, \widehat{\Phi}^{(1)}, \widehat{\Phi}^{(2)}, ...\}$ is uniformly Lipschitz on $K$. Also, it is not difficult to verify that there exists a constant $L > 0$ so that
\[
\left|\log \left(1 + \left \langle \frac{x}{p}, q - p \right\rangle\right) - \log \left(1 + \left \langle \frac{x}{p'}, q' - p' \right\rangle\right)\right| \leq L\left(\|p - p'\| + \|q - q'\|\right)
\]
for all $x \in \overline{\Delta^{(n)}}$ and $p, p', q, q' \in K$. It follows that the family of L-divergences $\{\widehat{T}, \widehat{T}^{(1)}, \widehat{T}^{(2)} ...\}$ is uniformly Lipschitz on $K \times K$. Thus there exists $\delta_0 > 0$ such that if $p, p', q, q' \in \Delta^{(n)}$, then
\begin{equation} \label{eqn:energyconverge}
\left|\widehat{T}^{(N)}\left(q' \mid p' \right) - \widehat{T}^{(N)}\left(q \mid p \right)\right| < \frac{\varepsilon}{2} \quad \text{and} \quad
\left|\widehat{T}\left(q' \mid p' \right) - \widehat{T}\left(q \mid p \right)\right| < \varepsilon
\end{equation}
whenever $\|q - q'\| < \delta_0$, $\|p - p'\| < \delta_0$.

Let $D$ be the set of points in $K$ at which $\widehat{\Phi}$ is differentiable. Then $K \setminus D$ has $m$-measure zero by \cite[Theorem 25.5]{R70}. Let $\varepsilon' > 0$ be arbitrary. By Lemma \ref{lem:subdiffconvergence}, for each $p \in D$ there exists $0 < \delta(p) \leq \delta_0$ and a positive integer $N_0(p)$ such that $\left\|\widehat{\pi}^N(q) - \widehat{\pi}(p)\right\| < \varepsilon'$ for all $N \geq N_0(p)$ and $q \in B(p, \delta(p))$. 

Since $K$ is compact, it is separable, and so is $D$ as a subset of $K$. The collection $\{B(p, \delta(p))\}_{p \in D}$ forms an open cover of $D$ and hence there exists a countable subcover. By the continuity of measure, for any $\eta > 0$ there exists $p_1, ..., p_{j_0} \in D$ such that
\[
m(A_0) < \eta, \quad A_0 := K \setminus \bigcup_{j = 1}^{j_0} B(p_j, \delta(p_j)),
\]
Since $\partial A_0 \subset \partial K \cup \bigcup_j \partial B(p_j, \delta(p_j))$, $\partial (A_0 \times K)$ has $m$-measure zero and hence $A_0 \times K$ is a ${\Bbb P}$-continuity set. Since ${\Bbb P}$ is absolutely continuous, choosing $\eta > 0$ sufficiently small we have
\[
{\Bbb P}(A_0 \times K) < \varepsilon,
\]
and by weak convergence we have ${\Bbb P}_N(A_0 \times K) < \varepsilon$ for $N$ sufficiently large, so (ii) holds. Let $A_1 = B(p_1, \delta(p_1)) \cap K$ and define $A_k = \{p_k\} \cup (B(p_k, \delta(p_k)) \cap K) \setminus (A_1 \cup \cdots \cup A_{k-1})$, $j = 2, ..., k_0$. If $N \geq \max_{1 \leq k \leq k_0} N_0(p_k)$, we have
\begin{equation} \label{eqn:estimate4}
\left\|\widehat{\pi}^N(p) - \widehat{\pi}(p_k)\right\| < \varepsilon', \quad p \in A_k, \quad k = 1, ..., k_0.
\end{equation}
Next choose $q_1, ..., q_{\ell_0} \in K$ such that $K \subset \bigcup_{\ell = 1}^{\ell_0} B(q_{\ell}, \delta_0)$. Define $B_1 = B(q_1, \delta_0) \cap K$ and $B_{\ell} = \{q_{\ell}\} \cup (B(q_{\ell}, \delta_0) \cap K) \setminus (B_1 \cup \cdots \cup B_{\ell-1})$, $j = 2, ..., \ell_0$. Again it is clear that $\partial(A_k \times B_{\ell})$ has $m$-measure zero and is a ${\Bbb P}$-continuity set. So (i) holds for $N$ sufficiently large. Finally, if we choose $\varepsilon' > 0$ small enough in \eqref{eqn:estimate4}, we have
\[
\left|\widehat{T}^{(N)}\left( q \mid p \right) - \widehat{T}\left( q \mid p_k \right)\right| < \frac{\varepsilon}{2}, \quad p \in B(p_k, \delta_0), \quad q \in \Delta^{(n)}
\]
for $N$ sufficiently large. This and \eqref{eqn:energyconverge} imply (iii) and the proof of Theorem \ref{thm:optim2} is complete.
\qed\end{proof}

\bibliographystyle{amsalpha}
\bibliography{infogeo}

\end{document}